\def\dOi{12(2:10)2016}
\keywords{recursive subtyping, higher-order types, duality, contract
  compliance, testing preorders}
\tikzstyle{state} = [rectangle,rounded corners,draw=black,thick,
\newtheorem{theo}{Theorem}[section]
\newtheorem{lemm}[theo]{Lemma}
\newcommand{\EndProofBox}{\null\hfill$\qed$}
\global\let\EndProof\EndProofBox
\newcommand{\boxHere}{\global\let\EndProof\empty\EndProofBox}
\newcommand{\EndDefBox}{\null\hfill$\qed$}
\global\let\EndDef\EndDefBox
\newcommand{\diaHere}{\global\let\EndDef\empty\EndDefBox}
\theoremstyle{plain}
\newcommand{\fllyabs}{fully-abstract\xspace}
\newcommand{\fabs}{full-abstraction\xspace}
\newcommand{\Fabs}{Full-abstraction\xspace}
\newcommand{\fp}{fixed point\xspace}
\newcommand{\fps}{fixed points\xspace}
\newcommand{\pfps}{pre\fps\xspace}
\title{Using higher-order contracts to model session types}
\author[G.~Bernardi]{Giovanni Bernardi\rsuper a}
\address{{\lsuper a}IMDEA Software Institute, Madrid, Spain}
\email{bernargi@tcd.ie}
\thanks{{\lsuper a}The first author was upported by SFI grant 06IN.11898, by
the Portuguese FCT project PTDC/EIA-CCO/122547/2010, and by the COST Action IC1201 (BETTY).}
\author[M.~Hennessy]{Matthew Hennessy\rsuper b}
\address{{\lsuper b}School of Statistics and Computer Science, O'Reilly Institute, Trinity College, Dublin 2, 
Ireland}
\email{matthew.hennessy@scss.tcd.ie}
\thanks{{\lsuper b}The second author was supported by SFI grant 06IN.11898 and by LERO - the Irish Software Research Centre, (13/RC/2094).}
\newcommand{\prfont}[1]{\ensuremath{ \mathbf {#1}}}
\definecolor{greenish}{rgb}{.24,.5,.26} 
\newcommand{\appto}[1]{[ \, #1 \, ]}
\newcommand{\defin}[1]{\mathtt{def} \,\, #1 \,\, \mathtt{in} \,\,}
\begin{document}
\maketitle

\begin{abstract}
Session types are used to describe and structure interactions between
independent processes in distributed systems.  Higher-order types are
needed in order to properly structure delegation of responsibility
between processes. 
In this paper we show that higher-order web-service contracts
can be used to provide a fully-abstract model of recursive
higher-order session types. The model is  set-theoretic,
in the sense that the meaning of  a contract is given in terms 
of the set of contracts with which it complies. 
A crucial step in the proof of full-abstraction is showing that
every contract has at least one other contract with which it complies.
\end{abstract}
\vfill

\newcommand{\pr}[1]{\texttt{#1}}

\section{Introduction}
The purpose of this paper is to show that session types
\cite{DBLP:conf/parle/TakeuchiHK94,DBLP:conf/esop/HondaVK98,DBLP:conf/wsfm/Dezani-Ciancaglinid09}
can be given a \fllyabs behavioural interpretation using
web-service contracts
\cite{DBLP:journals/tcs/Padovani10,DBLP:journals/toplas/CastagnaGP09}. Higher-order
session types are necessary to handle \emph{session delegation}, and
in turn this calls for the development of a novel form of \emph{peer
  compliance} between higher-order contracts.
Moreover, to prove the completeness of the model we introduce a novel
form of syntactic duality for higher-order contracts, which we call 
{\em peer-duality}.
We believe that this \emph{peer-duality}, when applied to 
session types,  captures the intuition of
  \emph{complementary behaviour} more faithfully than the standard
  notion of type duality from \cite{DBLP:conf/esop/HondaVK98}.

The current paper is the full-version of the extended abstract 
\cite{DBLP:conf/concur/BernardiH14}. It contains proofs for the
various results, more explanations, and more examples. 
It also corrects a mistake in 
\cite{DBLP:conf/concur/BernardiH14}, 
which occurred when defining 
our novel form of syntactic duality for higher-order contracts.
This will be explained in detail in \rsec{complement}.

\paragraph{{\bf Session types:}}
The interactions between processes in a complex distributed system
often follow a pre-ordained pattern. Session types 
\cite{DBLP:conf/parle/TakeuchiHK94,DBLP:conf/esop/HondaVK98} have been proposed as a mechanism for concisely describing and structuring these
interactions; they have been extensively studied
\cite{DBLP:conf/wsfm/Dezani-Ciancaglinid09} and are being put into
practice \cite{DBLP:conf/pvm/HondaMMNVY12}.
As a simple example consider a system consisting of two
entities
\begin{displaymath}
\nn{s} (\pin{\pch{urls}}{ s^+}{S} {\mathbf {store} } \Par 
    \pout{ \pch{urls}}{s^+} \prfont{cstmr}  )
\end{displaymath}
which first exchange a new private communication channel or
\emph{session}, $s$, over the public address of the store $\pch{urls}$;
using the conventions of \cite{DBLP:journals/acta/GayH05} the customer
sends to the store one endpoint of this private session, namely $s^+$, and keeps the other
endpoint $s^-$ for itself. The session type $S$ determines the nature
of the subsequent interaction allowed between the two entities, each
using its own exclusive endpoint of the private session $s$.

One form that the session type $S$ could take is
 \begin{equation}\label{eq:type1}
\begin{array}{l}
  \sinp{\gt{Id}} \branch{\gt{l}_1\as \sinp{\gt{Addr}} \sout{\int} T, \\
    \phantom{\sinp{\gt{Id}} \&\langle \,\,}\gt{l}_2 \as  \sinp{\gt{Addr}} \sout{\int} T, \\  
     \phantom{\sinp{\gt{Id}} \&\langle \,\,}   \gt{l}_3 \as  \sinp{\gt{Addr}} \sout{\int} T  }
    \end{array}
\end{equation}
where 
$\gt{\int},\gt{Addr},\gt{Id} $ are some base types of  integers,
addresses and credentials respectively,
and 
\begin{math}
        \branch{ \gt{l}_1\as S_1, \ldots, \gt{l}_n \as S_n }  
\end{math}
is a \emph{branch} type which accepts a choice between interaction on
any of the predefined labels $\gt{l_i}$, followed by the interaction
described by the residual type $S_i$. Thus (\ref{eq:type1}) above
dictates that $\prfont{store}$ offers a sequence of four interactions on
its end $s^+$ of the session, namely (i) reception of credential, 
(ii) acceptance of a choice among three commodities
labelled by $\gt{l}_1, \gt{l}_2, \gt{l}_3$, (iii) followed by the
receipt of an address, and (iv) the transmission of a
price, of type $\gt{Int}$; subsequent behaviour is determined by the
type $T$.

The behaviour of $\prfont{cstmr} $ on the other end of the session,
$s^-$, is required to match the behaviour described by $S$, thus
satisfying a session type which is intuitively dual to $S$. For
example,  the dual to (\ref{eq:type1}) above is
 \begin{equation}\label{eq:type2}
\begin{array}{l}
  \sout{\gt{Id}} \choice{\gt{l}_1 \as \sout{\gt{Addr}} \sinp{\int} T', \\
    \phantom{\sinp{\gt{Id}} \&\langle \,\,}\gt{l}_2 \as  \sout{\gt{Addr}} \sinp{\int} T', \\  
     \phantom{\sinp{\gt{Id}} \&\langle \,\,}\gt{l}_3 \as   \sout{\gt{Addr}} \sinp{\int} T' }
    \end{array}
\end{equation}
under the assumption that $T'$ is the dual of $T$.
Intuitively, input is dual to output and the dual to a branch type
is a {\em choice} type 
\begin{math}
   \choice{\gt{l}_1 \as S_1, \ldots , \, \gt{l}_k \as S_k}
\end{math}, 
which allows the process executing the role described by the type to
choose one among the labels $\gt{l}_i$. 
These two principles lead to a general definition of the \emph{dual} of
a session type $T$, denoted $\stdual{T}$ in \cite{DBLP:conf/parle/TakeuchiHK94,DBLP:conf/esop/HondaVK98}.
\label{dual.discussion}

In order to allow
flexibility to the processes fulfilling the roles described by these
types a subtyping relation  between session types, $T \subt S$,  
is essential; see
\cite{DBLP:journals/acta/GayH05} for a description of the crucial role played by subtyping. 
Intuitively $T \subt S$ means that any process or component fulfilling
the role dictated by the session type $S$ may be used where one is
required to fulfil the role dictated by $T$. Thus subtyping gives an
intuitive \emph{comparative semantics} to session types. However, to
the best of our knowledge, subtyping for session types has only ever
been given a purely syntactic definition; in \rdef{subtyping} of
\rsec{session.types} we slightly generalise the standard definition of
\cite{DBLP:journals/acta/GayH05}, so as to account also for base types
such as $\gt{\int}$ and $\bool$.


\paragraph{{\bf Recursive types:}}
Sessions over which interactions have reached completion are
described by the type $\e$.
Some sessions, though, may allow interactions between their endpoints
to go on indefinitely. This is required for instance by processes that
offer services or methods. To accommodate this, recursion has been
added in some process calculi, for instance in 
\cite{DBLP:conf/esop/HondaVK98,DBLP:journals/entcs/YoshidaV07,DBLP:conf/concur/DemangeonH11}.
To show instances of such processes below, and also in 
\rsec{discussion}, we use the syntax of
  \cite{DBLP:journals/entcs/YoshidaV07}. 
The only non-standard
  constructs in that syntax are the  two which describe  session
  delegation, that is {\tt throw} and {\tt catch}.
  The process $\throw{\kappa'}{\kappa} P$ writes the endpoint $\kappa$
  over the endpoint $\kappa'$, and continues as $P$. The process
  $\catch{\kappa'}{z} X \appto{ Q }$, on the other hand, inputs a name,
  say $\kappa$, over the endpoint $\kappa'$, and continues as 
  the process $Q \subst{ \kappa }{ z }$.

\begin{exa}
\name{An ever-lasting session}\label{ex:ever-going-session}\\
Here we use the syntax of \cite{DBLP:journals/entcs/YoshidaV07} to describe processes.
$$
\begin{array}{lll}
D_s & = & X(y) := \poffer{y}{ \gt{plus} \as
  \pinYo{y}{x}{\int}\pinYo{y}{z}{\int}\pout{y}{x + z} X\appto{y} \,\talloblong\, \\
& & \phantom{X(y) := y \rhd \{ }\gt{pos} \as
\pinYo{y}{z}{\real}\pout{y}{ z > 0 } X\appto{y}}\\[.5em]
D_c & = & Y(x) := \pchoice{  x  }{ \gt{pos}\as
  \pout{x}{\mathit{random()}}\pinYo{x}{z}{\bool}Y\appto{x}}\\
P & = & \nn{\kappa}(\defin{D_s} \defin{D_c} X\appto{\kappa^+} \Par Y\appto{\kappa^-})
\end{array}
$$
The peer $X\appto{ \kappa^+ }$ defined  by instantiating  $D_s$ accepts over $\kappa^+$ the invocation of one of
the two methods $\gt{plus}$ and $\gt{pos}$, reads the actual
parameters, sends the result of the chosen method and 
starts again. 
The peer $Y\appto{ \kappa^- }$ defined by  instantiating   $D_c$
invokes via its endpoint $\kappa^-$ the method $\gt{pos}$, sends a
random number, reads the result of the invoked method, and also starts
again. The composition of these  two peers in $P$
results in a never ending session in which interaction occurs
between the two peers forever.

Note that the definition of $D_s$ is a recursive version of the math
server of \cite{DBLP:journals/acta/GayH05}.
\end{exa}

The type $T$ that describes the behaviour of $D_s$ on 
an  endpoint $k^+$ is intuitively a  
solution of the equation
$$ 
T = \branch{ \gt{plus} \at \sinp{\int}\sinp{\int}\sout{\int}T, \,\, 
  \gt{pos}\at\sinp{\int}\sout{\int}T} 
$$
A natural way to express solutions to such equations  is to use
recursive types. For instance we could take the type $T$ required
above
to be 
$$
\Rec[X]{\branch{ \gt{plus} \at \sinp{\int}\sinp{\int}\sout{\int} X, \,\, 
  \gt{pos}\at\sinp{\int}\sout{\int} X} }
$$

The type equations that require recursive terms arise naturally in
presence of recursive processes. 
The presence of recursive types justifies the coinductive definition
of the subtyping (see \rdef{subtyping}), which follows
the approach of \cite{DBLP:journals/mscs/PierceS96}.
\leaveout{\MHf{
I don't think \cite{DBLP:journals/mscs/PierceS96} is the original
source for recursive types}}

\paragraph{{\bf Contracts:}}
Web services \cite{DBLP:journals/tcs/Padovani10,DBLP:journals/toplas/CastagnaGP09} are distributed components which may be
combined and extended to offer services to clients. These services are
advertised using \emph{contracts}, which are high-level descriptions of 
the expected behaviour of services.  These contracts come equipped with a \emph{sub-contract}
relation $\gt{cnt}_1 \sqsubseteq \gt{cnt}_2$; intuitively this means that the
contract $\gt{cnt}_2$, or rather a service offering the behaviour
described by this contract, may be used as a service which is required
to provide the contract $\gt{cnt}_1$. These abstract contracts are
reminiscent of process calculi as CCS and CSP
\cite{DBLP:books/daglib/0067019,csp}, and indeed the theory of these
process calculi have been used to give a behavioural interpretation of
contracts and the sub-contract preorder,
\cite{DBLP:journals/tcs/Padovani10,LP07}.

Contracts are very similar, at least
syntactically, to sessions types; for example (\ref{eq:type2}) above
can very easily be read as the following process description from CCS,
$$
!(\gt{Id}).( ?\gt{l}_1.?\gt{Addr}.!\int.\gt{cnt'} \extc 
 ?\gt{l}_2.?\gt{Addr}.!\int.\gt{cnt'} \extc
?\gt{l}_3.?\gt{Addr}.!\int.\gt{cnt'}
)
$$
In fact in Section~\ref{sec:contracts} we give a straightforward
translation $\encSym$ from the language of session contracts to that of
contracts.  Via the mapping $\encSym$ it should therefore be possible
to give a behavioural interpretation to session types, thereby
explaining how session types determine process behaviour, at least
along individual sessions. Indeed steps in this direction have already been made in
\cite{DBLP:conf/ppdp/Barbanerad10,DBLP:conf/sac/BernardiH12} restricting session types
to the first-order ones, that is types that cannot express session delegation.  
But, as we will now explain, the use of delegation in
session types requires the use of higher-order types, and in turn higher-order contracts, for which
suitable behavioural theories are lacking.

\paragraph{Session delegation:}
Consider the following system where the costumer $\prfont{cstmr}$ is replaced by 
$\prfont{girlf}$ and there are now four components: 
$$
\begin{array}{l@{\hskip -.5pt}l}
  \nn{s} \nn{p} \nn{b}(&   \pout{ \pch{urls}}{s^+} \pout{ \pch{urlb}}{p^+} \pout{ \pch{urlb}}{b^+} \prfont{girlf} \Par \\
& \pin{\pch{urls}}{s^+}{S}  \prfont{store}  \Par \\
&  \pin{\pch{urls}}{p^+}{S_p}  \prfont{bank}  \Par \\
&\pin{\pch{urlbf}}{b^+}{S_{b}}  \prfont{boyf}   
)
\end{array}
$$
Here we model a scenario in which a girlfriend opens a connection to a store,
lets her boyfriend choose a commodity, and then pays for the present.
Three private sessions $s,p, b$ are created and the positive endpoints are distributed to the 
\prfont{store}, \prfont{bank}, and \prfont{boyf} respectively. 
One possible script for the new customer $\prfont{girlf}$
is as follows:
\begin{enumerate}[label=(\roman*)] 
\item send credential to \prfont{store}: send id on  session $s^-$
\item \textbf{delegate} choice of commodity to \prfont{boyf}: send session $b^-$ on session $s^-$
\item await \textbf{delegation} from \prfont{boyf} to arrange payment:
  receive session $s^-$ back on session $b^-$.
\end{enumerate}
Thus the session type $S_b$ at which the boyfriend uses the session
end $b^+$ must countenance both the reception and transmission of
session ends, rather than simply data. In this case we can take $S_b$
to be the higher-order session type
\begin{equation}\label{eq:Sb}
\sinp{T_1} \sout{T_2} \e
\end{equation}
where in turn $T_1$ is the session type $\choice{\gt{l}_1 \as
  \sinp{\gt{Addr}}\e }$ and $T_2 $ must allow \prfont{girlf}  to arrange
payment through the \prfont{bank}. \leaveout{As we will see}  This in
turn means that $T_2$ is a higher-order session type as payment will
involve the transmission of the payment session $p$.

The combination of delegation and recursion leads to processes with
complicated behaviour which in turn puts further strain on the system
of session types. 
\begin{exa}\name{Everlasting generation of finite sessions}
\label{exa:intro-funny-types-needed}
\\Consider the process $P$ defined as follows,
$$
\begin{array}{lll}
D & := & X( x, y ) = \nn{\kF}(  \throw{x}{\kF^+}\void \Par
\catch{y}{z} X \appto{ z, \, \kF^- })\\[.5em]
P & = &  \nn{\kO}(\defin{D} X\appto{ \kO^+, \, \kO^- })
\end{array}
$$
Intuitively, at each iteration the code $X\appto{ \kO^+,\kO^- }$ has
the two endpoints  of a pre-existing session, $\kO$,
generates a new \emph{fresh} session, $\kF$, delegates over the
endpoint $\kO^+$ the endpoint $\kF^+$, and then recursively repeats
the loop using $\kF$ as pre-existing session.

\leaveout{\MHf{Is this true ?}}
According to the reduction semantics in 
\cite{DBLP:journals/entcs/YoshidaV07}  the execution of $P$ 
will never give rise to  a communication error or a  deadlock. 
But the endpoint $\kF^{+}$ can only be assigned a session type of the
form $\Rec[X]{\sout{X} \e}$. Such types are forbidden in 
\cite{Barbanera-deLiguoro-2013} but they are allowed in the typing
systems of \cite{DBLP:conf/esop/HondaVK98, DBLP:journals/acta/GayH05,
  DBLP:journals/entcs/YoshidaV07, DBLP:journals/iandc/Vasconcelos12};
we discuss the type inference in \rsec{discussion}.  
\end{exa}

If session types are to be explained behaviourally via the
translation $\encSym$ into contracts, the target language of contracts 
needs to be higher-order. For instance, the type of (\ref{eq:Sb})
above is mapped by $\encSym$ to the contract
$ ?( !\gt{l}_1.?(\gt{Addr}).\Unit ).?( \gt{cnt}_2  ).\Unit $, where $\gt{cnt}_2 = \M{T_2}$.
This in turn means that we require a
behavioural theory of higher-order contracts. This is the topic of the
current paper. In particular we develop a novel sub-contract preorder,
which we refer to as the \emph{peer} sub-contract preorder $\Peerleq$
with the property that, for all session types,
\begin{eqnarray}\label{eq:main}
  S \subt T \,\text{if and only if}\, {\M{S}} \Peerleq {\M{T}}
\end{eqnarray}
On the left hand-side we have the subtyping preorder between 
session types, which determines when processes with session type $T$
can play  the role required by type $S$; on the right-hand side we have a 
behaviourally determined sub-contract preorder between the
interpretation of the types as higher-order contracts. This
behavioural preorder is defined in terms of a novel definition of 
\emph{peer compliance} between these contracts.

In the remainder of this Introduction we briefly outline how this 
sub-contract preorder is defined; we will refer to it as
the \emph{peer sub-contract preorder}. 
Intuitively 
$\sigma_1 \Peerleq \sigma_2$, where $\sigma_i$ are contracts,
if every contract $\rho$ which \emph{complies} with $\sigma_1$ also
\emph{complies} with $\sigma_2$. In turn the intuition behind
\emph{compliance}
is as follows. We say that a contract $\rho$ complies with contract $\sigma$, 
written  $\rho \peercmpl  \sigma$, if any pair of processes in the source
language $p,q$ which guarantee
the contracts $\rho, \sigma$ respectively, can interact indefinitely 
to their mutual satisfaction; in particular if no further interaction
is possible between them, individually they both have reached
\emph{successful} or \emph{happy} states. We call this 
relation, which is symmetric, \emph{mutual} or \emph{peer} compliance, as both participants are
required to attain a \emph{happy} state simultaneously. This is in
contrast to 
\cite{DBLP:journals/toplas/CastagnaGP09,DBLP:journals/tcs/Padovani10,DBLP:conf/sac/BernardiH12} 
where an asymmetric compliance is used, in which only one participant,
the client,
 is required to reach a \emph{happy} state.

In this paper, rather than discussing processes in the source language, how they
can interact and how they guarantee contracts, we mimic the
interaction between processes using a symbolic semantics between
contracts.  We define judgements of the form
\begin{eqnarray}\label{eq:taus}
  \rho \Par \sigma  \ar{\tau}  \rho' \Par \sigma'
\end{eqnarray}
meaning that if $p,q$, from the source language, guarantee
the contracts $\rho, \sigma$ respectively, then they can interact and evolve to 
processes $p',q'$ which guarantee the residual contracts
$\rho', \sigma'$ respectively. 

For example we will have the judgement
\begin{eqnarray*}
  !\gt{\int}.\rho'  \,\Par\,   ?\gt{\real}.\sigma'   &\ar{\tau}&   \rho' \Par \sigma' 
\end{eqnarray*}
On the right-hand side of the parallel constructor $\Par$ 
we have a contract guaranteed by a process 
which will accept  a datum which can
be used as a real;
on the left-hand side there is a 
contract guaranteed by a process
that supplies an $\int$.
Since we are assuming that integers can be interpreted as
reals, that is $\int \bsubtype \real$, we know that an interaction
described by the judgement above takes place. 

However it is unclear when  an interaction of the form
\begin{eqnarray}\label{eq:hotaus}
  ! (\sigma_1).\rho' \,\Par\,    ?(\sigma_2).\sigma'  \ar{\tau}  \rho' \Par \sigma' 
\end{eqnarray}
should take place. Here on the right is a 
contract satisfied by a process which provides a session endpoint that satisfies the contract
$\sigma_2$; 
on the left is a contract satisfied by one which is willing to accept any session endpoint which guarantees 
the contract $\sigma_1$. 
Intuitively the interaction should be allowed if $\sigma_1$ is a
sub-contract of $\sigma_2$, that is ${\sigma_1} \Peerleq {\sigma_2}$. 
However the whole purpose of defining the judgements (\ref{eq:taus})
above is in order to define the preorder $\Peerleq$; there is a
circularity in our arguments. 

We break this circularity by supposing a predefined sub-contract
preorder $\B$ and allowing the interaction (\ref{eq:hotaus}) whenever 
$\sigma_1 \,\B\, \sigma_2$. More generally we develop a parametrised
theory, with interaction judgements of the form
\begin{eqnarray*}
 \rho \Par \sigma  \ar[\B]{\tau} \rho' \Par \sigma' 
\end{eqnarray*}
leading to a parametrised peer-compliance relation $\sigma \peercmpl[\B]
\rho$ which in turn leads to a parametrised sub-contract preorder
$\rho_1 \peerleq[\B] \rho_2$. We then prove the main result of the
paper, (\ref{eq:main}) above, by showing: 
\begin{quote}
  There exists a preorder $\B_0$ over higher-order contracts such
  that  $S \subt T$  if and only if $\M{S} \peerleq[\B_0] \M{T}$
 \end{quote}
This particular preorder $\B_0$ which we construct, and which 
has been referred to in (\ref{eq:main}) above as $\Peerleq$, 
has a natural behavioural interpretation. It satisfies the behavioural equation
\begin{eqnarray}\label{eq:fixpoint}
  \sigma_1 \B_0   \sigma_2 \;\text{if and only if}\; \sigma_1 \peerleq[\B_0] \sigma_2
\end{eqnarray}
Moreover it is the largest preorder between higher-order contracts
which satisfies (\ref{eq:fixpoint}).

The proof of (\ref{eq:fixpoint}) depends on an alternative syntactic
characterisation of the set-based preorders   
$\peerleq[\B]$.
The proof of this syntactic characterisation 
 relies in turn on a crucial  property
of the peer-compliance relation:
\begin{quote}
  for every contract $\sigma$ there exists a complementary contract
  $\rho$
 which complies with it,  $\sigma \peercmpl[\B] \rho$.
\hfill{($\star$)}
\end{quote}
\noindent
However constructing such complementary contracts is not
straightforward. One possibility is to use the notion of the
\emph{dual} of a session type,
\cite{DBLP:conf/parle/TakeuchiHK94,DBLP:conf/esop/HondaVK98},
already discussed informally on page~\pageref{dual.discussion}. A 
formal definition may be found in Figure~\ref{fig:std-dual},  and is readily
adapted  to contracts via the translation  $\encSym$;
specifically we can define $\stdual{\sigma}$ to be 
$\M{\stdual{\invM{\sigma}}}$.

However there are contracts $\sigma$ which do not
comply with their  duals,  $\sigma \Npeercmpl[\B] \stdual{\sigma}$.
Moreover these are not esoteric contracts but occur naturally 
when modelling reasonable processes. A typical example is 
the contract $\Rec[x]{?(x).\Unit}$, corresponding to the session type
$\Rec[X]{\sout{X} \e}$ needed to type the process  
in \rexa{intro-funny-types-needed}. 
In  \rexa{issues-stdual}  we explain why this contract does not comply
with its dual, assuming $\B$ satisfies some minimal requirements. 

The problem occurs with recursive contracts in which recursion
variables occur in message fields. Indeed  in 
\rthm{standard-duality-comply} we show that 
$\sigma \peercmpl[\B] \stdual{\sigma}$ whenever 
$\sigma$ has no such occurrences of recursion variables;
again subject to minor conditions on $\B$. 

However for arbitrary contracts we need a more general method of
constructing a complementary contract, which for example applies
to useful contracts such as   $\Rec[x]{?(x).\Unit}$.
In \cite{DBLP:conf/concur/BernardiH14}
we used a function, $\comp{-}$, to fulfil this role;
the same function was independently proposed in
\cite{DBLP:journals/corr/abs-1202-2086}, although with a different
purpose in mind. 
But unfortunately this has the same defect as duality: 
there are also contracts $\sigma$ such that
 $\sigma \Npeercmpl[\B] \comp{\sigma}$; see
 \rexa{comp-broken}.

Here  we propose 
another, more complicated, function 
$\prdual{-}$, see \rdef{novel-dual}; we refer to  $\prdual{\sigma}$ as
the \emph{peer-dual} of the contract $\sigma$. 
The intuitive idea is that $\prdual{\sigma}$ 
\begin{itemize}
\item first 
syntactically  transforms   $\sigma$ into another contract
$\sigma_{\text{ok}}$ 
which has no offending recursion variables, but is in some sense
still functionally equivalent to the original $\sigma$

\item
then returns the standard dual of $\sigma_{\text{ok}}$, 
namely
$\stdual{\sigma_{\text{ok}}}$. 
\end{itemize}
In 
 Theorem~\ref{thm:complements-comply} we
prove that 
 $\sigma \peercmpl[\B] \prdual{\sigma}$  for every session contract 
$\sigma$, thus establishing $(\star)$ above.


\paragraph{\bfseries Paper structure:}
The remainder of the paper is organised as follows. 
In \rsec{session.types} we recall the standard theory of
session types, while \rsec{contracts} is devoted to our
exposition of higher-order contracts and our novel notion of
parametrised peer sub-contract preorder $\peerleq[\B] $.
In \rsec{syntactic-peer-preorders} we show that, although the
definition of this preorder is set-theoretic, it can be characterised 
using only the syntactic form of contracts; this stems from  the 
very restricted form that our higher-order contracts can take. 
Using this syntactic characterisation we can develop enough 
properties of the preorders~$\peerleq[\B] $ to ensure the existence
of the particular preorder $\B_0$ alluded to in (\ref{eq:fixpoint})
above; this is the topic of \rsec{results}.\enlargethispage{\baselineskip}

As we have already stated, this result depends on 
being able to construct the complement of a contract. 
Our proposal, $\prdual{\sigma}$, is discussed separately 
in \rsec{complement}.
This section also contains explanations of the deficiencies
of the previous proposals $\stdual{\sigma}$, and $\comp{\sigma}$.
It then finishes, in 
\rsec{discussion},  with  a discussion of how  our proposed operator 
$\prdual{-}$ 
can be used to improve on the type-checking systems for 
session types in papers such as \cite{DBLP:journals/entcs/YoshidaV07,DBLP:journals/iandc/Vasconcelos12}.

Related work is then discussed in \rsec{conclusions}, and the
appendices contain some standard material and some minor  technical results.

\section{Session types}
\label{sec:session.types} 
    \begin{figure}
    \hrulefill
      \leaveout{
    $$
    \begin{array}{lllll}
      T, S & :: = & & \textbf{Types}\\
      & & U & \text{Session type}\\
      & & \ch{ U } & \text{Standard channel type}\\
      & & \Rec{ U } & \text{Recursive channel type}
      \\[1em]

      U, V & ::= & & \textbf{Session types}\\
      & & \e & \textit{Terminated session} \\
      & & \sinp{M}U & \textit{Input}\\
      & & \sout{M}U & \textit{Output}\\
      & & \branch{ \gt{l_1}\as U_1, \ldots, \gt{l_n}\as U_n } &
      \textit{Branch} ,\, n \geq 1 \\
      & & \choice{ \gt{l_1}\as U_1, \ldots, \gt{l_n}\as U_n } &
      \textit{Choice} ,\, n \geq 1\\
      & & \Rec{U} & \textit{Recursive session type} \\
      &  &  X & \textit{Type variable}
      \\[1em]

      M, N  & ::= & & \textbf{Message types}\\
      & & T & \textit{Type} \\
      & & \gt{t} & \textit{Base type}
      \\[1em]
      \gt{t}   & ::= & & \textbf{Base types}\\
      & & \gt{Id}, \gt{Addr}, \int,\real, \ldots & 
    \end{array}
    $$
}

    $$
    \begin{array}{lllll}
      T, S & ::= & & \textbf{Session types}\\
      & & \e & \textit{Terminated session} \\
      & & \sinp{M}T & \textit{Input}\\
      & & \sout{M}T & \textit{Output}\\
      & & \branch{ \gt{l}_1 \as T_1, \ldots, \gt{l}_n \as T_n } &
      \textit{Branch} ,\, n \geq 1 \\
      & & \choice{ \gt{l}_1 \as T_1, \ldots, \gt{l}_n \as T_n } &
      \textit{Choice} ,\, n \geq 1\\
      & & \Rec{T} & \textit{Recursive session type} \\
      &  &  X & \textit{Type variable}
      \\[1em]

      M, N  & ::= & & \textbf{Message types}\\
      & & T & \textit{Session Type} \\
      & & \gt{t} & \textit{Base type}
      \\[1em]
      \gt{t}   & ::= & & \textbf{Base types}\\
      & & \gt{Id}, \gt{Addr}, \int,\real, \ldots & 
    \end{array}
    $$

    \caption{Grammar of \leaveout{session} types \label{fig:st}}
\hrulefill
\end{figure}
Here we recall, using the notation from
\cite{DBLP:journals/acta/GayH05}, the standard theory of subtyping for
session types. 
The grammar for the language $L_{\st}$ of session 
type terms is given in 
\rfig{st}. That grammar uses a collection 
of unspecified base types $\gt{BT}$, of which we enumerate a sample.
It also uses a denumerable set of labels, 
$\labels = \sset{ \gt{l}_1, \gt{l}_2, \gt{l}_3, \ldots }$, in the
\emph{branch} and \emph{choice} constructs. Recall from the Introduction
that 
$\branch{ \gt{l}_1 \as {S}_1, \ldots, \gt{l}_n \as S_n } $ 
offers different possible behaviours based on a set of 
labels $\sset{ \gt{l}_1, \gt{l}_2, \gt{l}_3, \ldots \gt{l}_n }$
while $\choice{ \gt{l}_1 \as {S}_1, \ldots, \gt{l}_n \as {S}_n }$
takes a choice of behaviours; in both constructs the labels used
are assumed to be distinct.

 We use ${\st}$ to denote the set of session type terms in $L_{\st}$
 which are \emph{closed} and \emph{guarded}, in other words the terms
 that contain only variables which are not free, and that appear at 
 least after one non-recursive type constructor. 
 The definitions of both concepts are
 standard, and they may be found in \rapp{defs}.

Subtyping is defined coinductively and uses some unspecified
subtyping preorder  $\bsubtype$ between  base types, a typical
example being $\int \bsubtype \real$, meaning that an integer may be
supplied where a real number is required. 
Intuitively subtyping between session types is determined by
two principles:
\begin{enumerate}[label=\({\alph*}]
\item \emph{Branch}: 
  a branch type $T_1$ can be replaced 
  with a branch type $T_2$ that allows {\em more} choices than~$T_1$.

\item \emph{Choice}:
  a choice type $T_1$ can be replaced 
  with a choice type $T_2$ that allows {\em fewer} choices than~$T_1$.
\end{enumerate}
\noindent
We give two examples of these principles.
\begin{exa}
  Let
  $ \textsc{Bartender} = \branch{ \gt{espresso} \at T_1 } $ and
  \begin{align*}
  \textsc{FancyBartender} = & \, \branch{ \gt{espresso} \at T'_1, \\
    &\phantom{\ \&\langle} \gt{deka}\at T'_2, \\
    &\phantom{\ \&\langle} \gt{double-espresso}\at T'_3 }
  \end{align*}
  The type $\textsc{Bartender}$ accepts the choice of only one
  option, \gt{espresso}, so all the customers satisfied by $\textsc{Bartender}$, are satisfied by any other type that offers
  {\em at least} the label~$\gt{espresso}$. It follows that 
  $\textsc{FancyBartender}$ satisfy all the customers satisfied by
  the~$\textsc{Bartender}$.
  This is formalised by the subtyping, which relates the two types
  $$
  \textsc{Bartender} \subt \textsc{FancyBartender}
  $$
  as long as also the continuations~$T_1$ and~$T'_1$ are related (ie.~$ T_1 \subt T'_1$).
\end{exa}
\begin{exa}
\label{ex:choice}
Let \textsc{ItalianCustomer} describe the different coffees that a
process may want to order when interacting with a bar tender.
\begin{align*}
\textsc{ItalianCustomer} = & \choice{\gt{espresso} \at T_1, \\
    &\phantom{\ \ \oplus\langle} \gt{deka}\at T_2, \\
    &\phantom{\ \ \oplus\langle} \gt{double-espresso}\at T_3
}
\end{align*}
All the bar tenders that are able to accept this range of choices,
have to offer {\em at least} the three labels that appear in \textsc{ItalianCustomer}.
Now consider the type
$$
\textsc{Customer} = \choice{ \gt{espresso}\at T'_1}
$$
Since \textsc{Customer} chooses among fewer options than
\textsc{ItalianCustomer}, a process that behaves according to
\textsc{Customer} can be used in place of one that behaves as
prescribed by \textsc{ItalianCustomer}.
This is formalised by the subtyping relation as follows,
$$
\textsc{ItalianCustomer} \subt \textsc{Customer}
$$
if also the continuations are related (ie.~$ T_1 \subt T'_1$).
\end{exa}

Added to the principles sketched above are the  standard 
covariant and contravariant requirements on the input and output 
constructs. Recursive types are handled by a standard function 
$\unfold{T}$ which unfolds all the  first-level  occurrences of  
$\Rec{-}$ in the (guarded) type $T$.
The formal definition of $\unfoldSym$ in turn depends on the
definition of substitution $T \subst{S}{X}$, the syntactic
substitution of the term $S$ for all free occurrences of $X$ in $T$.
The details may be found in \rapp{defs}.

\begin{defi}\name{Subtyping}\label{def:subtyping}\\
Let $\Fun{\subt}: \parts{{\st}^2} \longrightarrow  \parts{{\st}^2}$ be
  the functional defined so that $(T, \, S) \in \Fun{\subt}(\R)$ whenever:
  \begin{enumerate}[label=(\roman*)]
\leaveout{
  \item\GBf{Now (i) account for channel types.}
    \GBc{
      if $\unfold{T} = \ch{ T' }$ then $\unfold{S} = \ch{ S' }$, $T' \R S'$ and $S' \R T'$
    }
}
    
  \item 
    if $\unfold{T}=\e$ then $\unfold{S}=\e$

  \item \label{pt:fo-inp}
    if $\unfold{T}=\sinp{\gt{t}_1}S_1$ then
    $\unfold{S}=\sinp{\gt{t}_2}S_2$ and $ S_1 \R S_2$ and
      $\gt{t}_1 \bsubtype \gt{t}_2$

 \item \label{pt:fo-out}
    if $\unfold{T}=\sout{\gt{t}_1}S_1$ then
    $\unfold{S}=\sout{\gt{t}_2}S_2$ and $S_1 \R S_2$ and
     $\gt{t}_2 \bsubtype \gt{t}_1$

  \item  \label{pt:ho-out}
    if $\unfold{T}=\sout{T_1}S_1$ then
    $\unfold{S}=\sout{T_2}S_2$ and $S_1 \R S_2$ and
     $T_2 \R T_1$

  \item \label{pt:ho-inp}
    if $\unfold{T}=\sinp{T_1}S_1$ then
    $\unfold{S}=\sinp{T_2}S_2$ and $S_1 \R S_2 $ and
      $T_1 \R T_2$

  \item \label{pt:branch}
    if $\unfold{T} = \branch{\gt{l}_1\colon T_1,\ldots, l_m\colon T_m}$ then
    $\unfold{S} = \branch{\gt{l}_1\colon S_1,\ldots,\gt{l}_n\colon S_n}$ 
    where  $m\leq n$ and $T_i \R S_i$ for all  $i\in [1,\ldots,m]$

  \item \label{pt:choice}
    if $\unfold{T}=\choice{\gt{l}_1\colon  T_1,\ldots, l_m\colon  T_m}$ then
    $\unfold{S}=\choice{\gt{l}_1\colon  S_1,\ldots,\gt{l}_n\colon  S_n}$ 
    where  $n\leq m$ and $T_i \R  S_i$ for all $i\in [1,\ldots,n]$

  \end{enumerate}
  If~$ {\R} \subseteq \Fun{\subt}( \R )$, then we say that~$R$
  is a {\em type simulation}.
The monotonicity of $\Fun{\subt}$ and the Knaster-Tarski Theorem \cite{tarski1955} 
  ensure that there exists the greatest
  solution of the equation~$ {\R}  = \Fun{\subt}( \R ) $; we call this 
  solution the {\em subtyping}, and we denote it~$\subt$.
\end{defi}
\noindent
The cases (\ref{pt:fo-inp}) and (\ref{pt:fo-out}) are not present in the original definition of \cite{DBLP:journals/acta/GayH05}, so our subtyping relation is in fact slightly more general than the one of that paper.
Note the use of covariance in the first argument in  (\ref{pt:fo-inp}),  (\ref{pt:ho-inp}) and the 
contravariance in  (\ref{pt:fo-out}) and  (\ref{pt:ho-out}); the intuitions of (a) and (b) above are reflected in (\ref{pt:branch}) and (\ref{pt:choice}) respectively.
 
In the next example we show how to prove that two types are related by the subtyping.
\begin{exa}
\label{ex:subt}
Let $S = \Rec[X]{\sinp{X}X}$ and $T = \Rec[Y]{\sinp{Y}\sinp{Y}Y}$.
In this example we prove that 
$$ \Rec[X]{\sinp{X}X} \subt \Rec[Y]{\sinp{Y}\sinp{Y}Y}$$

Thanks to the Knaster-Tarski Theorem, we need only 
exhibit a prefixed point of the functional~$\Fun{\subt}$, that
contains the pair $( S, \, T)$.
Consider the following relation
$$
{\R} = \sset{ (S, T), \, (\sinp{S}S, \sinp{T}\sinp{T}T ), \,
  (\sinp{S}S,  \sinp{T}T ), \, (S,  \sinp{T}T) }
$$
and let $A =  \Fun{\subt}( \R )$. The set inclusion ${\R} \subseteq
A$ follows from case (\ref{pt:ho-inp}) of Definition~(\ref{def:subtyping}).
\end{exa}

\newcommand{\client}[1]{{\sf clientbody}(#1)}
\newcommand{\server}[2][1]{{\sf serverbody}_{#1}(#2)}

Now that we have defined the subtyping relation $\subt$, and
shown a proof method for it, we discuss its meaning.
In the context of $\pi$-calculus equipped with sessions, 
the relation $\subt$ formalises two notions of safe substitutivity.
Suppose that $\int \bsubtype \real$. If $S \subt T$, then both of the
following are true:
\begin{enumerate}
\item 
a session end-point $\kappa_2$ at type $S$ may safely be used in place of 
a session endpoint $\kappa_1$ at type $T$. For example - using the standard $\pi$ syntax -
a process
$$
P = \pin{c}{x}{\sinp{\real}\e } \pin{x}{y}{\real} \pout{b}{y / 2 } \void
$$
may safely receive along the channel $c$ an endpoint $\kappa_2$ that has type $\sinp{\int}\e$,
instead of the declared type $\sinp{\real}\e$.
Intuitively, this is the case because $ \sinp{\int}\e \subt \sinp{\real}\e$,
and thus if $P$ can read over $x$ a value of type $\real$,
then it can read also a value of type $\int$.

\item \label{safe-subst-processes}
a process $P$ that uses an endpoint $\kappa$ at type $T$ may be safely
used in place of a process $Q$ that uses $\kappa$ at type $S$.
To sketch this phenomenon, we follow \cite[Section 2]{DBLP:journals/acta/GayH05} and
discuss the following processes,

$$
\begin{array}{lll}
  \client{ x } & = & \pchoice{ x }{ \gt{plus}\as \pout{x}{2} \pout{x}{3} \pin{x}{u}{\int} \void}

\\[.5em]

  \server{ x } & = & \poffer{ x }{ \gt{plus} \as \pin{x}{z}{\int}  \pin{x}{y}{\int} \pout{x}{y + z} \void }

\\[.5em]

  \server[2]{ x } & = & 
  \poffer{ x }{ \gt{plus} \as \pin{x}{z}{\int}  \pin{x}{y}{\int} \pout{x}{y + z} \void \\
  & &\phantom{ x \kern-1pt\rhd \{\,}\gt{mult} \as  \pin{x}{z}{\int}  \pin{x}{y}{\int} \pout{x}{y * z} \void}
\end{array}
$$
The session type at which $\server{ x }$ employs $x$ is
$S = \branch{ \gt{plus} \as S'}$,
where 
$$S' = \sinp{\int}\sinp{\int}\sout{\int}\e,$$
while the type at which $\server[2]{x}$ uses $x$ is $T = \branch{ \gt{plus} \as S', \gt{mult} \as S' }$,
and it is routine work to check that $S \subt T$. 
Now observe that - at least intuitively - 
the process 
$$
Q_1 = \nn{\kappa} (\client{ \kappa^- } \Par \server{ \kappa^+ })
$$ is well-typed. 
Thanks to $S \subt T$ and subsumption, one can adapt the type derivation for $Q_1$ to prove
that also the following process is well-typed,
$$ 
Q_2 = \nn{\kappa} (\client{ \kappa^- } \Par \server[2]{ \kappa^+ }).
$$ 
Essentially, by knowing that $Q_1$ is well-typed and that $S \subt T$,
one can show that the process $\server{ x }$ can be safely replaced by 
the process $\server[2]{ x }$,
in the sense that also $ Q_2 $ is well-typed.
\end{enumerate}

\noindent The argument in (\ref{safe-subst-processes}) above let us argue that 
  $\server[2]{x}$ may be used where $\server{ x }$ is required.
  Since the behaviours of these processes are described respectively
  by the types $S$ and $T$, we can reason directly on types, and
  argue that $S \subt T$ means that processes adhering to the role 
  dictated  by $T$  may be used where processes following the role
  dictated by $S$ are required.
  Out aim is to  formalise this intuition, 
  by proving that the higher-order contracts determined by these 
  types, respectively $\M{S}$ and $\M{T}$,
  are related behaviourally using our notion of \emph{peer} 
sub-contract preorder.

\leaveout{
Intuitively $S \subt T$ means that processes adhering to the role 
dictated  by $T$  may be used where processes following the role
dictated by $S$ are required.}

\leaveout{
Our aim is to formalise \GBc{the intuition sketched in (\ref{safe-subst-processes}) above,
so that if $S \subt T$, by proving that the higher-order contracts determined by these 
types, respectively $\M{S}$ and $\M{T}$, are related behaviourally
using our notion of \emph{peer} compliance, \GBc{then we are guaranteed
that $\M{T}$ satisfies all the peers satisfied by $\M{S}$.}}
}

\section{Higher-order  contracts}
\label{sec:contracts}

\begin{figure}[t]
 \hrulefill
 $$
 \begin{array}{lllll}
   \rho, \sigma & ::= & & \textbf{Higher-order session contracts}\\
   & & \Unit & \textit{Satisfied contract} \\
   & & ?\gt{t}.\sigma & \textit{Input}\\
   & & !\gt{t}.\sigma & \textit{Output}\\
   & & !(\sigma).\sigma & \textit{Contract output},\\
   & & ?(\sigma).\sigma & \textit{Contract input}, \\
   & & \sum_{i \in I}?\gt{l}_i.\sigma_i &  \textit{External sum}  \, I\text{ non-empty, finite}\\
   & & \bigoplus_{i \in I}!\gt{l}_i.\sigma_i & \textit{Internal sum} \, I\text{ non-empty, finite}\\
   & & \Rec[x]{\sigma} & \textit{Recursive session contract} \\
   &  &x & \textit{Session contract variable}
 \end{array}
 $$
\caption{Grammar of higher-order session contracts \label{fig:sc}}
\hrulefill
\end{figure}

In the first section we define higher-order session contracts and explain the set-based subcontract preorder on them; this uses the notion of 
\emph{peer compliance} between them. In the following section we show
that
this set-based preorder can be characterised by comparing the purely
syntactic structure of contracts, up-to a parameter $\B$.

\subsection{Contracts and compliance }
The grammar for the language of contract terms $L_\sc$ is given in
\rfig{st}; there we assume the labels $\gt{l}_i$s to be pairwise distinct. 
We use $\sc$ to denote the set of terms which
are guarded and closed. These will be referred to as higher-order
session contracts, or simply contracts.

\begin{figure}
\hrulefill
  \[
  \begin{array}{l@{\hskip 4em}l}
    \infer[\rname{a-Ok}] { \Unit \ar{\ok}}
                {\; }
     &
    \infer[ \lambda \in \Act \,\,  \rname{a-Pre}]
              {\lambda.\sigma\ar{\lambda} \sigma}
              {\;\;}
    \\[2em]
 \infer[| \, I \, | > 1 \; \rname{a-Int}]{  \bigoplus_{i \in I}!\gt{l_i}.\sigma_i
   \ar{\tau}  !\gt{l_i}.\sigma_i   }{}
     &
    \infer[\rname{a-Ext}]
    {\sum_{i \in I}?\gt{l_i}.\sigma_i    \ar{?\gt{l_i}}  \sigma_i}
    {  } 
\\[2em]
\infer[\rname{a-Unfold}]
{\Rec[x]{\sigma} \ar{\tau} \sigma \subst{\Rec[x]{\sigma}}{x}}
{}
\\[1em]
\end{array}
\]
\caption{The operational semantics of session contracts}
\label{fig:opsem}
\hrulefill
\end{figure}

The operational meaning of contracts is given by 
viewing them as processes from a simple process calculus and 
interpreting them as states in a (higher-order) labelled transition
system.
To this end let 
\begin{align*}
\Act = 
  \setof{?\gt{l},\, !\gt{l}}{\gt{l} \in \labels}  
\; \cup \; 
 \setof{?\gt{t},\, !\gt{t}}{t \in \gt{BT}}
\; \cup \;
\setof{?(\sigma),\, !(\sigma), }{\sigma \in \sc}
\end{align*}
be the set of prefixes, ranged over by $\lambda$.
We use \Acttt to denote the set $\Act \cup \sset{\tau,\ok}$ to
emphasise that the special symbols  $\tau$ and $\ok$ are not in
$\Act$.  In \rfig{opsem} we give a set of axioms
which define 
judgements of the form 
\begin{align*}
  \sigma_1 \ar{\mu}\sigma_2
\end{align*}
where $\mu \in \Acttt$ and $\sigma_1,\sigma_2 \in \sc$.
Note that  that terms like $!\gt{l}.\sigma$ are singleton
  internal sums and 
we infer their semantics by using the rule for prefixes, \rname{a-Pre};
for example $!\gt{l}.\sigma \ar{!\gt{l}} \sigma$.
The rule \rname{a-Unfold} uses a form of substitution of a closed term $\rho$ for free 
occurrences of a variable $x$ in a term $\sigma$, denoted $\sigma\subst{\rho}{x}$. 
In Appendix~\ref{app:defs} we give a more general definition of the application of 
a substitution $\subsS$ to a term $\sigma$, denoted $\sigma \, \subsS$; then 
$\sigma\subst{\rho}{x}$ corresponds to $\sigma \subsS_x$, where $\subsS_x$ is a simple
substitution which maps the variable $x$ to the closed term $\rho$. 
We say that a contract $\sigma$ is {\em stable} whenever~$\sigma \nar{\tau}$.

In order to define \emph{compliance} between two contracts
$\rho,\, \sigma$,  we also need to 
say when two processes $p,\, q$ satisfying these contracts can
interact. This is formalised indirectly as a 
relation of the form 
\begin{displaymath}
  \rho \Par  \sigma \ar[\B]{\tau} \rho' \Par \sigma'
\end{displaymath}
which, as explained in the Introduction, is designed to capture the 
informal intuition that if processes $p,q$ satisfy the contracts 
$ \rho, \sigma$ respectively, then they can interact and their residuals
will satisfy the residual contracts $\rho',\sigma'$ respectively. 
This reduction relation is parametrised on a  relation $\sigma_1 \B \sigma_2$ between contracts, which 
determines when the contract $\sigma_2$ can be accepted  when  $\sigma_1$ is
required. Using such a $\B$ we define an \emph{interaction}
relation between contracts as follows:
$$
\lambda_1 \bowtie_{\B} \lambda_2 =
\begin{cases}
  \lambda_1 = {!\gt{l}}, \, \lambda_2 = {?\gt{l}} \\
  \lambda_1 = {?\gt{l}}, \, \lambda_2 = {!\gt{l}} \\
  \lambda_1 = {!\gt{t}_1}, \, \lambda_2 = {?\gt{t}_2} & \gt{t}_1 \bsubtype \gt{t}_2\\
  \lambda_1 = {?\gt{t}_1}, \, \lambda_2 = {!\gt{t}_2} & \gt{t}_2 \bsubtype \gt{t}_1\\
  \lambda_1 = {!( \sigma_1 )}, \, \lambda_2 = {?( \sigma_2 )} & \sigma_1  \B  \sigma_2\\
  \lambda_1 = {?( \sigma_1 )}, \, \lambda_2 = {!( \sigma_2 )} & \sigma_2 \B \sigma_1 \\
\end{cases}
$$
\noindent
Essentially the relation $\bowtie_{\B}$ treats $\B$ as a subtyping on
contracts; note that by definition $\bowtie_{\B}$ is symmetric, for any $\B$.

The inference rules in \rfig{interaction} are now
straightforward; $\rho \Par \sigma$ can proceed if either of the
components
$\sigma,\, \rho$ can proceed independently, or if the components can
interact, as dictated by $\bowtie_{\B}$.

\begin{figure}[t]
\hrulefill
\[
  \begin{array}{l@{\hskip 2em}l}
    \infer[\rname{I-Left}]
    {  \rho \Par  \sigma \ar[\B]{\tau} \rho' \Par \sigma}
    {  \rho \ar{\tau} \rho' }
    &
    \infer[\rname{I-Right}]
    { \rho \Par \sigma \ar[\B]{\tau}  \rho \Par \sigma'}
    { \sigma \ar{\tau} \sigma'}
    \\[1em]
\multicolumn{2}{c}{
      \infer[\lambda_1 \bowtie_{\B} \lambda_2\; \, \rname{I-Synch}]
      { \rho \Par \sigma \ar[\B]{\tau} \rho' \Par \sigma'}   
      { \rho \ar{\lambda_1} \rho'\quad \sigma \ar{\lambda_2}  \sigma'}
    }
\end{array}
\]
\caption{Interacting session contracts}
\label{fig:interaction}
\hrulefill
\end{figure}

We are now ready to define our version of \emph{(peer) compliance}.
\begin{defi}
\label{def:peercmpl}
\leaveout{
$\rho \peercmpl[\B] \sigma$ whenever $ \rho \Par \sigma
  \wt[\B]{} \rho' \Par \sigma' $ and $\rho' \Par \sigma' \stable[\B]$
  imply $\rho' \ar[\B]{ \ok }{\B}$ and $\sigma' \ar[\B]{ \ok }$.
}
Let $\FunCpeer : \parts{\sc^2} \times \parts{\sc^2}
\longrightarrow  \parts{\sc^2}$ be the rule functional defined so that
$(\rho, \sigma) \in \FunCpeer(\R, \B)$ whenever both the
following conditions hold:

\begin{enumerate}[label=(\roman*)]\itemsep 0pt

\item\label{pt:peercmpl-stuck-ok} 
if $ \rho  \Par \sigma \stable[\B]$ then $ \rho \ar{\ok}$ and $\sigma
\ar{\ok}$

\item\label{pt:peercmpl-derivatives}
if $ \rho \Par \sigma \ar[\B]{\tau}  \rho '\Par\sigma'$ then
  $\rho ' \R \sigma' $

 \end{enumerate}
 If~$ {\R} \subseteq \FunCpeer( \R, \B )$, then we say that~$\R$
 is a {\em $\B$-coinductive peer compliance}.
 Fix a $\B$. Standard arguments ensure that there exists the greatest
 solution of the equation~$ X = \FunCpeer( X, \B ) $; we call this 
 solution the {\em $\B$-peer compliance}, and we denote
 it~$\peercmpl[\B]$.
\end{defi}
\noindent
The intuition here is that if $\rho \peercmpl[\B] \sigma$ then pairs of
processes satisfying these contracts can interact indefinitely, until 
such time that they can both simultaneously perform the success action
$\ok$. 
So the interaction between them can continue indefinitely, even
forever; but if further interaction is not possible then condition
(\ref{pt:peercmpl-stuck-ok}) ensures that 
both participants 
must have reached 
the \emph{happy} state. 
\begin{exa}
We show the impact of the symmetric requirement in \rdefpt{peercmpl}{peercmpl-stuck-ok}.
Consider the contracts $\rho = {!\int.\Unit}$, and $ \sigma =  \Rec[y]{?\int. y}$.
While $\rho$ requires just one interaction to reach the satisfied state $\Unit$,
the contract $\sigma$  supports  an infinite number of interactions, but never
reaches a satisfied state. Under the reasonable assumption that 
$\int \bsubtype \int$, after one interaction the composition
$\rho \Par \sigma$ reduces to a stable state, in which the derivative of $\sigma$ 
is not satisfied; therefore $\rho \Npeercmpl[\B] \sigma$.
\end{exa}

In general the properties of the compliance relation 
$\peercmpl[\B] $ depends on those of the underlying relation $\B$, and
of the first-order subtyping relation $\bsubtype$, for example
$!( \Cnil ).\Unit \Npeercmpl[\emptyset] ?( \Unit ).\Unit $, while
$!( \Cnil ).\Unit \peercmpl[\B] ?( \Unit ).\Unit $ for ${\B} = \sset{ (\Cnil, \, \Unit) }$.
The ability of a contract to comply with another depends not on its
syntax but rather on its behaviour, which is determined by its operational
semantics as given by the rules in \rfig{opsem}.
To emphasise this point let us adapt the standard notion
of \emph{bisimulation equivalence} \cite{ccs} to our setting.

\leaveout{
In general the properties of the compliance relation 
$\peercmpl[\B] $ depends on those of the underlying relation $\B$, and
of the first-order subtyping relation $\bsubtype$. 
For example if both of
these are symmetric then one can show that $\peercmpl[\B] $ is also
symmetric. 
But in general the ability of a contract to comply with
another depends not on its syntax but rather on its behaviour as
determined by its operational semantics as given by the rules in
Figure~\ref{fig:opsem}. To emphasise this point let us adapt the standard notion
of \emph{bisimulation equivalence} \cite{ccs} to our setting.
}

\begin{defi}\name{ Strong bisimulation }\label{def:sb}\\
A relation ${\R} \subseteq \sc \times \sc $ is called a (strong) 
bisimulation if whenever ${\sigma_1} \R  {\sigma_2}$ then 
\begin{enumerate}
\item \label{pt:sb-ok}
$\sigma_1 \ar{\ok} $ if and only if
$\sigma_2 \ar{\ok} $, and
\item for every $\mu \in \Actt$
\begin{enumerate}[label=\({\alph*}]
\item \label{pt:sb-lr}
$\sigma_1 \ar{\mu} \sigma'_1$ implies 
$\sigma_2 \ar{\mu} \sigma'_2$ for some $\sigma'_2$, such 
that ${ \sigma'_1 } \R { \sigma'_2 }$

\item \label{pt:sb-rl}
conversely,
$\sigma_2 \ar{\mu} \sigma'_2$ implies 
$\sigma_1 \ar{\mu} \sigma'_1$ for some $\sigma'_1,$ such 
that  ${ \sigma'_1 } \R {\sigma'_2}$,
\end{enumerate}
\end{enumerate}
We write $\sigma_1 \bisim{} \sigma_2$ whenever there is some 
bisimulation $\R$ such that $\sigma_1 \R \sigma_2$. 
\end{defi}
\noindent
Our main interest in this strong form of bisimulation is encapsulated in 
\rprop{bisim}, which in turn uses the next lemma.

\begin{lem}
\label{lem:bisim-preserves-stuck-states}
  Suppose $\rho_1 \bisim{} \rho_2$.
  Then for every $\sigma \in \sc$ and every $\B$, 
  $\rho_1 \Par \sigma \nar[\B]{\tau}$ implies 
  $\rho_2 \Par \sigma \nar[\B]{\tau}$.
\end{lem}
\begin{proof}
  The hypothesis $\rho_1 \Par \sigma \nar[\B]{\tau}$ imply that
  both $\rho_1$ and $\sigma$ are stable.
  \rdef{sb} ensures that
  $\rho_2$ is also stable.
  To prove that $\rho_2 \Par \sigma \nar[\B]{\tau}$ it suffices to show that
  (a) if $\rho_2 \ar{ \lambda }$ then
  $\sigma \ar{ \lambda' }$ implies $\lambda \not \bowtie_{\B} \lambda'$.
  Suppose that $\rho_2 \ar{ \lambda }$ for some $\lambda$, \rpt{sb-lr} of 
  \rdef{sb} and
  the hypothesis that $\rho_1 \bisim{} \rho_2$ guarantee that $\rho_1 \ar{ \lambda }$.
  The hypothesis that $\rho_1 \Par \sigma \nar[\B]{\tau}$ and rule \rname{I-Synch}
  in \rfig{interaction} ensure that $\sigma \ar{ \lambda' }$ implies 
  $\lambda \not \bowtie_{\B} \lambda'$.
  In view of the assumption on $\lambda$, we have proven (a).
\end{proof}

In the proof of the next proposition we denote the
transitive closure of a relation $\R$ with $\R^{+}$, and the
reflexive and transitive closure with $\R^\star$.
We will use this notation throughout the paper.

\begin{prop}
\label{prop:bisim}
  Suppose $\rho_1 \bisim{} \rho_2$.
  Then for every $\sigma \in \sc$ and every $\B$, 
  $\rho_2 \peercmpl[\B] \sigma$ implies 
  $\rho_1 \peercmpl[\B] \sigma$.
\end{prop}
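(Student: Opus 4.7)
The plan is to prove this by coinduction, exhibiting an explicit $\B$-coinductive peer compliance that witnesses $\rho_1 \peercmpl[\B] \sigma$. The natural candidate relation is
\[
\R = \setof{(\rho, \sigma') \in \sc^2}{\exists \rho'.\; \rho \bisim{} \rho' \text{ and } \rho' \peercmpl[\B] \sigma'}.
\]
Note that $(\rho_1,\sigma) \in \R$ by taking $\rho' := \rho_2$, so it suffices to verify $\R \subseteq \FunCpeer(\R, \B)$, which reduces to checking conditions \rpt{peercmpl-stuck-ok} and \rpt{peercmpl-derivatives} of \rdef{peercmpl} for an arbitrary pair $(\rho, \sigma') \in \R$ with witness $\rho'$.

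For \rpt{peercmpl-stuck-ok}, suppose $\rho \Par \sigma' \stable[\B]$. Since strong bisimulation is symmetric, I can apply \rlem{bisim-preserves-stuck-states} with the roles swapped to transfer stability along the bisimulation: from $\rho \bisim{} \rho'$ and $\rho \Par \sigma' \nar[\B]{\tau}$, I conclude $\rho' \Par \sigma' \nar[\B]{\tau}$. Then $\rho' \peercmpl[\B] \sigma'$ gives $\rho' \ar{\ok}$ and $\sigma' \ar{\ok}$, and clause \rpt{sb-ok} of \rdef{sb} transfers $\ok$ back across the bisimulation to yield $\rho \ar{\ok}$.

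For \rpt{peercmpl-derivatives}, assume $\rho \Par \sigma' \ar[\B]{\tau} \rho'' \Par \sigma''$ and do a case analysis on which rule of \rfig{interaction} was applied. In each case I shall produce a residual bisimilar partner $\rho^{\dagger}$ on the $\rho'$-side such that $\rho'' \bisim{} \rho^{\dagger}$ and $\rho' \Par \sigma' \ar[\B]{\tau} \rho^{\dagger} \Par \sigma''$; then peer-compliance of $\rho'$ with $\sigma'$ delivers $\rho^{\dagger} \peercmpl[\B] \sigma''$, placing $(\rho'', \sigma'') \in \R$. Concretely: if the step is by \rname{I-Left} from $\rho \ar{\tau} \rho''$, clause \rpt{sb-lr} of \rdef{sb} furnishes $\rho' \ar{\tau} \rho^{\dagger}$ with $\rho'' \bisim{} \rho^{\dagger}$; if the step is by \rname{I-Right}, take $\rho^{\dagger} := \rho'$ (since $\rho'' = \rho$); and if the step is by \rname{I-Synch} with matching labels $\lambda_1 \bowtie_\B \lambda_2$, the bisimulation again supplies a matching transition $\rho' \ar{\lambda_1} \rho^{\dagger}$ with $\rho'' \bisim{} \rho^{\dagger}$, which recombines with $\sigma' \ar{\lambda_2} \sigma''$ via \rname{I-Synch} using the same $\lambda_1 \bowtie_\B \lambda_2$.

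The argument is essentially mechanical once the candidate relation is chosen, and no step is really an obstacle; the only mild subtlety is noticing that \rlem{bisim-preserves-stuck-states} is used in its symmetric form (reading the bisimulation $\R$ as $\R^{-1}$), and remembering that the interaction relation $\bowtie_\B$ depends only on the labels and on $\B$, so the matching transitions supplied by the bisimulation can be recombined via \rname{I-Synch} without any further side-conditions.
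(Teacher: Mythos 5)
Your proof is correct and follows essentially the same route as the paper: a coinductive argument showing that a relation containing $\bisim{} \cdot \peercmpl[\B]$ is a $\B$-coinductive peer compliance, using \rlem{bisim-preserves-stuck-states} for the stability clause and the transition-matching clauses of \rdef{sb} to transfer each step (recombined via \rname{I-Synch} with the same pair of interacting labels). The only difference is that your candidate relation is the composition $\bisim{} \cdot \peercmpl[\B]$ itself rather than the paper's $\tau$-reachability closure of it, which if anything makes the verification of the derivative clause more direct.
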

\begin{proof}
\leaveout{
  It is sufficient to define a relation $\R$ such that ${\bisim{}}  \subseteq {\R}$, and
  $ {\R} \subseteq \FunCpeer( \R, \B )$. }

We have to prove the inclusion ${\bisim{} \cdot \peercmpl[\B]}  \subseteq  {\peercmpl[\B]}$,
and it suffices to show a relation $\R$ such that~${\bisim{} \cdot \peercmpl[\B]} \subseteq {\R} $ and that $ {\R} \subseteq \FunCpeer( \R, \B )$.
Let
$$
{\R}  = \setof{ (\rho', \sigma') }{ 
\text{ for some } \rho_1, \rho_2, \sigma \in \sc \text{ such that }
\rho_1 \arstar{\tau} \rho', \, \sigma \arstar{\tau} \sigma', \,
\rho_2 \peercmpl[\B] \sigma, \, \rho_1 \bisim{} \rho_2 } 
$$

The relation $\R$ contains by construction the relation $\bisim{} \cdot \peercmpl[\B]$,
and in the rest of the proof we show that $ {\R} \subseteq \FunCpeer( \R, \B )$.

Fix a pair $\rho' \R \sigma'$,
the construction of $\R$ ensures that there exist
three contracts $\rho_1, \rho_2, \sigma \in \sc$ that enjoy the following properties
$$
\rho_1 \bisim{} \rho_2, 
\quad \rho_2 \peercmpl[\B] \sigma, 
\quad \rho_1 \arstar{\tau} \rho', \quad \sigma \arstar{\tau} \sigma'
$$
\rdef{sb},  $\rho_1 \bisim{} \rho_2$, and  $\rho_1 \arstar{\tau} \rho'$ imply that there exists
a $\rho'_2$ such that $\rho_2 \arstar{\tau} \rho'_2$ and $ \rho' \bisim{} \rho'_2$.
  \rPt{peercmpl-derivatives} of \rdef{peercmpl}, $\rho_2 \peercmpl[\B] \sigma$, and $\rho_2 \arstar{\tau} \rho'_2$ let us prove that $ \rho'_2 \peercmpl[\B] \sigma'$.

  \rdef{peercmpl} requires us to prove two facts,
  namely 
  \begin{enumerate}[label=(\roman*)]\itemsep 0pt
    
  \item\label{pt:peercmpl-1} 
    if $ \rho'  \Par \sigma' \stable[\B]$ then $ \rho \ar{\ok}$, $\sigma'
    \ar{\ok}$
    
  \item\label{pt:peercmpl-2}
    if $ \rho' \Par \sigma' \ar[\B]{\tau}  \rho ''\Par\sigma''$ then
    $\rho '' \R \sigma'' $

 \end{enumerate}
To prove \rpt{peercmpl-1} suppose that $ \rho'  \Par \sigma' \stable[\B]$.
Since  $\rho' \bisim{} \rho'_2$,
we apply \rlem{bisim-preserves-stuck-states}, and obtain that $\rho'_2 \Par \sigma' \stable[\B]$. Since $\rho'_2 \peercmpl[\B] \sigma'$,
\rpt{peercmpl-stuck-ok} of \rdef{peercmpl} imply that $\rho'_2 \ar{\ok}$
and $\sigma' \ar{\ok}$. Now $\rho' \bisim{} \rho'_2$ and \rdef{sb} ensure
that $\rho' \ar{ \ok }$, as required.

Now we prove \rpt{peercmpl-2}. 
Suppose that $\rho' \Par \sigma' \ar[\B]{\tau} \rho''\Par\sigma''$,
we have to show that $\rho'' \R \sigma''$.
The argument is by case analysis on the rule
of \rfig{interaction} used to derive the silent move at hand.
If \rname{I-Left} were applied, then $\rho' \ar{\tau} \rho'' $ and $\sigma' = \sigma''$.
It follows that $\rho_1 \arstar{\tau} \rho''$, thus the construction of $\R$ ensures that
$\rho'' \R \sigma''$. If \rname{I-Right} was applied, we reason in the same manner.
Suppose now that \rname{I-Synch} was applied. In this case
$ \rho' \ar{ \lambda } \rho''$ and $ \sigma' \ar{ \lambda' } \sigma''$ for some $\lambda$ and
$\lambda'$ such that $ \lambda \bowtie_{\B} \lambda' $.
To prove that $\rho'' \R \sigma''$ we exhibit a contract
$\hrho'_2$ which enjoys the following two properties,
\begin{center}
$\rho' \bisim{} \hrho'_2, \qquad \hrho'_2 \peercmpl[\B] \sigma'$
\end{center}
Since $\rho' \bisim{} \rho'_2$ and $\rho' \ar{ \lambda } \rho''$,
\rdef{sb} ensures that there exists a contract $ \hrho'_2$ such that
$\rho'_2 \ar{ \lambda } \hrho'_2$ and that $\rho'' \bisim{} \hrho'_2$.
The action performed by $\rho'_2$ lets us infer the hand-shake $\rho'_2 \Par \sigma' \ar[\B]{\tau} \hrho'_2 \Par \sigma''$.
Now $\rho'_2 \peercmpl[\B] \sigma'$, and \rpt{peercmpl-derivatives} of \rdef{peercmpl} ensure
that $\hrho'_2 \peercmpl[\B] \sigma''$. Observe that $\hrho'_2$ enjoys the two properties we required
above, thus $\rho'' \R \sigma''$.
\end{proof}

\begin{defi}\name{$\B$-peer subcontract preorder}\label{def:peer.subc}\\
For $\sigma_1,\, \sigma_2 \in \sc$ let
$\sigma_1 \peerleq[\B] \sigma_2$ whenever $\rho \peercmpl[\B] \sigma_1$
implies $\rho \peercmpl[\B] \sigma_2$, for every $\rho \in \sc$.
We use $\sigma_1 =^{\B}_\peer \sigma_2$ to denote the equivalence associated to $\peerleq[\B]$.
\end{defi}

\subsection{Syntactic characterisation}
\label{sec:syntactic-peer-preorders}

The parametrised peer subcontract preorder $\sigma_1 \peerleq[\B] \sigma_2$ is set based, and quantifies over the result of all 
peers in $\B$-compliance with $\sigma_1$. However, because of the restricted nature of higher-order contracts, \rfig{sc}, it turns out
that $\peerleq[\B]$ can be characterised by the syntactic structure of
$\sigma_1$ and $\sigma_2$, at least for relations $\B$
which satisfy certain minimal conditions.

Let $\FunS{}: \parts{\sc^2} \times \parts{\sc^2}
\longrightarrow  \parts{\sc^2}$ be the functional such that 
$(\sigma_1, \, \sigma_2) \in \FunS{}(\R, \B)$ whenever one of the
following holds:
\begin{enumerate}[label=(\roman*)] 
  \leaveout{  \item 
    \GBc{
      if $\unfold{ \sigma_1 } = \# ( \sigma'_1 ).\Unit$ then $\unfold{ \sigma_2 } = \# (\sigma'_2 ).\Unit$, $\sigma'_1 \R \sigma'_2$ and $\sigma'_2 \R \sigma'_1$
    }
  }
  
\item\label{pt:base-sempeerleq} 
  if $\unfold{\sigma_1}= \Unit$ then $\unfold{\sigma_2}= \Unit$
  
\item 
  if $\unfold{\sigma_1}= {?\gt{t}_1.\sigma'_1}$ then
  $\unfold{\sigma_2}= {?\gt{t}_2.\sigma'_2}$ and $ \sigma'_1 \R \sigma'_2$ and
  $\gt{t}_1 \bsubtype \gt{t}_2$
  
\item 
  if $\unfold{\sigma_1}= {!\gt{t}_1.\sigma'_1} $ then
  $\unfold{\sigma_2}= {!\gt{t}_2.\sigma'_2}$ and $  \sigma'_1 \R \sigma'_2$ and
  $\gt{t}_2 \bsubtype \gt{t}_1$
  
\item\label{pt:ho-output} 
  if $\unfold{\sigma_1}=  {!(\sigma''_1).\sigma'_1}$ then
  $\unfold{\sigma_2}= {!(\sigma''_2).\sigma'_2}$ and $  \sigma'_1 \R
  \sigma'_2$ and $\sigma''_2 \B \sigma''_1$

\item\label{case:ho-input} 
    if $\unfold{\sigma_1}= { ?(\sigma''_1).\sigma'_1 }$ then
    $\unfold{\sigma_2}= {?(\sigma''_2).\sigma'_2 }$ and
    $  \sigma'_1 \R \sigma'_2 $ and $ \sigma''_1 \B \sigma''_2 $

  \item 
    if $\unfold{\sigma_1} = \extsum_{i \in I}?\gt{l}_i.\sigma^1_i $ then
    $\unfold{\sigma_2} = \extsum_{j \in J}?\gt{l}_j.\sigma^2_j$ 
 where  ${I}  \subseteq {J}$ and $\sigma^1_i \R \sigma^2_i$ for all  $i\in I$

  \item 
    if $\unfold{\sigma_1}=\intsum_{i \in I}!\gt{l}_i.\sigma^1_i $ then
    $\unfold{\sigma_2}= \intsum_{j \in J}!\gt{l}_j.\sigma^2_j$
    where  ${J} \subseteq {I}$ and $\sigma^1_j \R \sigma^2_j$ for all $j
    \in J$
  \end{enumerate}

\begin{lem}
\label{lem:FunS-monotone}
  The functional $\FunS{}$ is monotone in both arguments:
  \begin{enumerate}[label=\({\alph*}]\itemsep0pt
    \item\label{pt:FunS-monotone-R}
      Fix a $\B$. If ${\R}  \subseteq {\R'}$, 
      then ${\apFunS{\R, \B}} \subseteq {\apFunS{\R', \, \B}}$
    \item\label{pt:FunS-monotone-B}
      Fix a $\R$. If ${\B} \subseteq {\B'}$, 
      then ${\apFunS{\R, \B}}
      \subseteq {\apFunS{\R, \, \B'}}$
\end{enumerate}
\end{lem}
\begin{proof}
  The proofs of both a) and b) are straightforward.
\end{proof}

\begin{defi}\name{$\B$-syntactic peer preorder}\label{def:sempeerleq}\\
If~$ {\R} \subseteq \FunS{}( \R, \,  \B )$, then we say that~$\R$
  is a {\em $\B$-coinductive peer preorder}.
  Fix a $\B$. Standard arguments based on \rpt{FunS-monotone-R} of the previous lemma ensure that there exists the greatest
  solution of the equation~$ X = \FunS{}( X, \, \B ) $; we call this  solution the {\em $\B$-syntactic peer preorder}, and we denote
  it by $\sempeerleq[\B]$.
\end{defi}

One immediate consequence of the  syntactic nature of 
this preorder is that it is preserved by unfolding.
This is the first part of the following lemma:
\begin{lem}
 \label{lem:FunS-closed-unfold} 
\qquad
\begin{enumerate}
\item \label{pt:FunS-closed-unfold}
For every $\sigma \in \sc$,  $\sigma \sempeerleq[\B] \unfold{\sigma}
\sempeerleq[\B]  \sigma$. 

\item\label{pt:FunS-stable} If $\sigma_1 \sempeerleq[\B] \sigma_2$ 
and $\sigma_2$ is stable, then 
there exist some stable 
$\sigma'_1$ such that $\sigma_1 \arstar{\tau} \sigma'_1$
and that $\sigma'_1 \sempeerleq[\B] \sigma_2$.
\end{enumerate}

\end{lem}
\begin{proof}
\rPt{FunS-closed-unfold}
of the lemma follows from the fact that 
$ \rho \apFunS{\R,\B}  \sigma$
if and only if  $ \unfold{\rho} \apFunS{\R,\B}  \unfold{\sigma}$.
This property of $\apFunS{-}$ in turn relies on the fact that 
$\unfoldSym$ is idempotent, that is 
$$\unfold{\unfold{\sigma}} = \unfold{\sigma}$$

To prove \rpt{FunS-stable} suppose $\sigma_1 \sempeerleq[\B] \sigma_2$. Part (1) ensures 
that $\unfold{\sigma_1} \sempeerleq[\B] \sigma_2$. 
If $\unfold{\sigma_1}$ is not stable then it must be an internal sum 
of the form 
$
\bigoplus_{i \in I}!\gt{l}_i.\rho_i 
$
and since $\sigma_2$ is stable it must be of the form 
$!\gt{l}_k.\rho'_k$ for some $k \in I$. The required
$\sigma'_1$ in this case is $!\gt{l}_k.\rho_k$. 
\end{proof}

\leaveout{
\noindent
The next example is an instance of \rpt{FunS-closed-unfold} the previous lemma.
\begin{exa}
  Let $\rho = \Rec[x]{!\gt{moka}.x}$, $\sigma =
  \Rec[x]{!\gt{moka}.!\gt{moka}.x}$, and  
  $$
{\R} = \sset{ (\rho, \sigma), \, (\rho, !\gt{moka}.\sigma )
}
$$
  First we show that $\R$ is prefixed point of $ \FunS{}$.
  To prove that $\rho \mathrel{\FunS{}(\R)} \sigma$ we unfold
  the terms,
  $ \unfold{\rho} = {!\gt{moka}.\rho}$, 
  $ \unfold{\sigma} = {!\gt{moka}.!\gt{moka}.\sigma}$ and check that 
  $ \rho \R {!\gt{moka}.\sigma}$. The last fact is true by
  construction.
  To prove that $ \rho \mathrel{\FunS{}(\R)} !\gt{moka}.\sigma$,
  we unfold the terms, and check that 
  $ \rho \R \sigma$ this is true by construction.
  
  To prove that $ \unfold{\rho} \mathrel{\FunS{}(\R)} \unfold{\sigma}$
  we have to show that $\unfold{\unfold{\rho}}$ and $\unfold{\unfold{\sigma}}$
  are related as required by \rdef{sempeerleq}.
  In view of the idempotency of $\unfoldSym$, it is enough
  to check that $\unfold{\rho}$ and $\unfold{\sigma}$ are related are required.
  But this is exactly what we proved to show that $\R$ is prefixed point of $ \FunS{}$.
\end{exa}
}

The reflexivity of $\sempeerleq[\B]$ depends tightly on the reflexivity
of $\B$.
\begin{lem}
\label{lem:reflexive-sempeerleq}
  If $\B$ is reflexive then $\sempeerleq[\B]$ is reflexive.
\end{lem}
\begin{proof}
  It suffices to prove that the identity relation $\I$ is contained
  in $\sempeerleq[\B]$. To prove this, we show that ${\I} \subseteq
 {\apFunS{ \I, \, \B}}$. Fix a pair $ \sigma \I \sigma $. The reasoning
  is by case analysis on $\sigma$, and the only two cases worthwhile involve a higher-order $\sigma$. We discuss one such case. 

Suppose that $\sigma = {!(\sigma^m).\sigma'}$. To show that $ \sigma \apFunS{ \I,  \, \B} \sigma$, we have to explain why $\sigma' \I \sigma'$ and $\sigma^m \B \sigma^m$.
The first fact follows form the reflexivity of $\I$, and the second from the reflexivity of the relation~$\B$.
\end{proof}
\noindent
In the previous lemma the hypothesis of $\B$ being reflexive is not
only sufficient, but also necessary for $\sempeerleq[\B]$\leaveout{$\apFunS{\R, \, \B}$} to be reflexive.
\begin{exa}
\label{exa:reflexivity-Speer}
In this example we prove that if $\B$ is not reflexive then
$\sempeerleq[\B]$ need not be reflexive.

Let $ \sigma = {!( \Unit ). \Unit}$.
The empty binary relation $\emptyset$ is not reflexive because
$(\Unit, \, \Unit)$ is not in $\emptyset$, so we take $\emptyset$ as our candidate $\B$.
In turn this implies that $ (\sigma, \sigma) \not \in {\apFunS{\sempeerleq[\emptyset], \, \emptyset}}$, because $\Unit$ does not satisfy the conditions
required by case (\ref{pt:ho-output}) of \rdef{sempeerleq}. 
Since
${\sempeerleq[\emptyset]} = {\apFunS{ \sempeerleq[\emptyset], \,
  \emptyset}}$, it follows that $ \sigma \Nsempeerleq[\emptyset]
\sigma$.
\leaveout{
In this example we prove that if $\B$ is not reflexive then
$\apFunS{\I, \, \B}$ need not be reflexive.

The empty binary relation $\emptyset$ is not reflexive because
$(\Unit, \, \Unit)$ is not in $\emptyset$.
In turn this implies that $ \sigma \NapFunS{\sempeerleq[\semptyset], \, \emptyset} \sigma$, because $\Unit$ does not satisfy the conditions required by case (\ref{pt:ho-output}) of \rdef{sempeerleq}. Since $\sempeerleq[\emptyset] = \apFunS{ \empeerleq[\emptyset], \, \emptyset}$, it follows that $ \sigma \Nsempeerleq[\emptyset] \sigma$.
In turn this implies that $ \sigma \NapFunS{\R, \, \emptyset} \sigma$ for every $\R$, because $\Unit$ does not satisfy the conditions required by case (\ref{pt:ho-output}) of \rdef{sempeerleq}.
It follows that $ \sigma \Nsempeerleq[\emptyset] \sigma$.
}
\end{exa}

Our intention is to show that the set-theoretic relation 
$\sigma_1  \peerleq[\B] \sigma_2$ coincides with the syntactically defined relation $\sigma_1 \sempeerleq[\B] \sigma_2$, provided $\B$ satisfies
some simple properties.
In one direction the proof requires the following technical lemma
showing that $ \peerleq[\B] $ preserves the ability of contracts to
interact.

\begin{lem}
\label{lem:compl.moves}
Suppose $\rho \peercmpl[\B] \sigma_1$ and  $\sigma_1  \peerleq[\B]
\sigma_2$ for some $\B$,
where all of $\sigma_1,\sigma_2,\rho$ are stable.
Then 
$\rho \Par \sigma_2 \ar[\B]{\tau}$ implies
$\rho \Par \sigma_1 \ar[\B]{\tau}$.
\end{lem}
\begin{proof}
  We know that $\rho \Par \sigma_2 \ar[\B]{\tau}   \rho' \Par \sigma_2'$ for
  some pair $\rho', \sigma'_2$. Since $\rho,\sigma_2$ are stable
  this reduction must involve interaction between these contracts. That is the
  derivation of the reduction must end with an application of rule 
  $\rname{I-Synch}$ from \rfig{interaction}.  The side condition and
  the premises of the rule ensure that $\rho \ar{ \lambda_1}$ for some
  $\lambda_1$. In turn this move can only be inferred by an
  application of rule \rname{a-Pre} or \rname{a-Ext} of
  \rfig{opsem}. In both cases one sees that $\rho \nar{\ok}$. But by
  hypothesis $\rho \peercmpl[\B] \sigma_1$, so \rpt{peercmpl-stuck-ok}
  of \rdef{peercmpl} ensures that $ \rho \Par \sigma_1
  \ar[\B]{\tau}$.
\end{proof}

\begin{cor}
  \label{cor:sound.ancillary}
Suppose  
$\sigma_1 \sempeerleq[\B]  \sigma_2$ where $\B$ is transitive
and 
$ \rho \peercmpl[\B] \sigma_1$.
Then
 $\rho \Par \sigma_2 \nar[\B]{\tau}$ 
implies 
$\rho \ar{\ok}$ and  $\sigma_2 \ar{\ok}$
 \end{cor}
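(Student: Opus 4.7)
The plan is to deduce the conclusion from \rpt{peercmpl-stuck-ok} of \rdef{peercmpl} once we have found a stable reduct of $\sigma_1$ that still approximates $\sigma_2$ syntactically and still peer-complies with $\rho$. First I would observe that $\rho \Par \sigma_2 \nar[\B]{\tau}$ forces each of $\rho$ and $\sigma_2$ to be stable individually, since otherwise rule \rname{I-Left} or \rname{I-Right} of \rfig{interaction} would produce a $\tau$-move. Then, \rpt{FunS-stable} of \rlem{FunS-closed-unfold} applied to $\sigma_1 \sempeerleq[\B] \sigma_2$ with $\sigma_2$ stable yields a stable $\sigma'_1$ with $\sigma_1 \arstar{\tau} \sigma'_1$ and $\sigma'_1 \sempeerleq[\B] \sigma_2$. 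Lifting those reductions to the parallel composition via rule \rname{I-Right} and iterating \rpt{peercmpl-derivatives} of \rdef{peercmpl} transports compliance to the reduct, giving $\rho \peercmpl[\B] \sigma'_1$.

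The main step is to show that $\rho \Par \sigma'_1 \nar[\B]{\tau}$. Both components are now stable, so any $\tau$-move must come from rule \rname{I-Synch}. I would therefore assume for contradiction that $\rho \ar{\lambda_\rho}$ and $\sigma'_1 \ar{\lambda_1}$ with $\lambda_\rho \bowtie_\B \lambda_1$, and run a case analysis on $\lambda_1$ guided by the clauses of \rdef{sempeerleq} applied to the stable pair $\sigma'_1 \sempeerleq[\B] \sigma_2$. In each clause the continuation of $\sigma_2$ offers an action $\lambda_2$ of the same shape, and one verifies $\lambda_\rho \bowtie_\B \lambda_2$: for the base-type prefixes the side conditions on $\bsubtype$ compose; for the label sums the required index lies in the appropriate inclusion; and for the higher-order clauses \rpt{ho-output} and \rpt{case:ho-input} the transitivity of $\B$ is used to chain the witness arising from $\lambda_\rho \bowtie_\B \lambda_1$ with the witness $\sigma^m_1 \B \sigma^m_2$ supplied by \rdef{sempeerleq}. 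Either way we obtain $\rho \Par \sigma_2 \ar[\B]{\tau}$, contradicting the hypothesis. Hence $\rho \Par \sigma'_1 \nar[\B]{\tau}$.

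Applying now \rpt{peercmpl-stuck-ok} of \rdef{peercmpl} to $\rho \peercmpl[\B] \sigma'_1$ gives $\rho \ar{\ok}$ and $\sigma'_1 \ar{\ok}$. Since $\sigma'_1$ is stable and the only stable contract that fires $\ok$ according to \rfig{opsem} is $\Unit$, we have $\sigma'_1 = \Unit$. Finally, \rpt{base-sempeerleq} of \rdef{sempeerleq} applied to $\sigma'_1 \sempeerleq[\B] \sigma_2$ yields $\unfold{\sigma_2} = \Unit$, whence $\sigma_2 \ar{\ok}$, as required. The only delicate part of the argument is the case analysis in the middle paragraph, whose higher-order sub-cases are precisely where the transitivity hypothesis on $\B$ is invoked; the remaining steps are direct consequences of the lemmas already established.
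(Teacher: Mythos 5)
Your proof is correct, and its skeleton coincides with the paper's: both pass to a stable $\sigma'_1$ with $\sigma_1 \arstar{\tau} \sigma'_1$ and $\sigma'_1 \sempeerleq[\B] \sigma_2$ via \rlemP{FunS-closed-unfold}{FunS-stable}, transport $\peercmpl[\B]$ along those $\tau$-moves, argue that $\rho \Par \sigma'_1 \nar[\B]{\tau}$, and then finish with \rpt{peercmpl-stuck-ok} of \rdef{peercmpl} together with the $\Unit$ clause of $\sempeerleq[\B]$. Where you genuinely diverge is the central step, $\rho \Par \sigma'_1 \nar[\B]{\tau}$. The paper disposes of it by citing \rlem{compl.moves}, whose statement is the implication ``$\rho\Par\sigma_2 \ar[\B]{\tau}$ implies $\rho\Par\sigma_1 \ar[\B]{\tau}$''; taken literally, neither this implication nor its contrapositive carries the hypothesis $\rho\Par\sigma_2 \nar[\B]{\tau}$ over to the needed $\rho\Par\sigma'_1 \nar[\B]{\tau}$, and the proof of that lemma nowhere touches the transitivity of $\B$ that the corollary explicitly assumes (and which Example~\ref{ex:soundness-transitivity} shows cannot be dropped). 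Your replacement --- suppose a synchronisation $\rho\ar{\lambda_\rho}$, $\sigma'_1\ar{\lambda_1}$ with $\lambda_\rho \bowtie_{\B} \lambda_1$, read off from the clauses of $\FunS{}$ on the stable pair $\sigma'_1 \sempeerleq[\B] \sigma_2$ a matching action of $\sigma_2$, and chain the witness from $\lambda_\rho \bowtie_{\B}\lambda_1$ with the witness supplied by $\sempeerleq[\B]$ using transitivity of $\B$ (respectively of $\bsubtype$) in the higher-order (respectively first-order) cases, contradicting $\rho\Par\sigma_2 \nar[\B]{\tau}$ --- is exactly the transfer argument the statement requires, and it is the only place where the transitivity hypothesis does any work. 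So your route is not merely different but more complete: it supplies the detail that the paper's one-line appeal to \rlem{compl.moves} elides (or, arguably, states in the converse direction). The remaining steps --- stability of $\rho$ and $\sigma_2$, preservation of compliance under $\tau$-reductions of the right component by \rpt{peercmpl-derivatives} of \rdef{peercmpl} and rule \rname{I-Right}, and the final identification $\sigma'_1 = \Unit$ forcing $\unfold{\sigma_2} = \Unit$ and hence $\sigma_2 \ar{\ok}$ --- match the paper's reasoning exactly.
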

\begin{proof}
We know that $\rho$ and $\sigma_2$ are stable, and we let $\sigma'_1$ be
the stable contract guaranteed by the second part 
\rlem{FunS-closed-unfold}  such that 
$\sigma_1 \arstar{\tau} \sigma'_1$ and $\sigma'_1 \sempeerleq[\B]
\sigma_2$. Simple properties of compliance ensure that
$ \rho \peercmpl[\B] \sigma'_1$. Therefore by \rlem{compl.moves} 
we know  that $\rho \ar{\ok}$ and $\sigma'_1 \ar{\ok}$. 
This means that $\sigma'_1$ is actually $\Unit$, so,  since 
$\sigma'_1 \sempeerleq[\B] \sigma_2$ and $\sigma_2$ is
stable, $\sigma_2$ must also be $\Unit$; that is $\sigma_2 \ar{\ok}$
as required. 
\end{proof}

\begin{thm}
\label{thm:sem-sound}
  Let $\B$ be a transitive relation on session contracts. Then 
  $\sigma_1\! \sempeerleq[\B]\! \sigma_2$ implies  $\sigma_1 \peerleq[\B]
  \sigma_2$.
\end{thm}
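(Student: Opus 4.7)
The plan is a coinductive argument. Define
$$
\R \; = \; \setof{(\rho, \sigma_2) \in \sc \times \sc}{\exists \sigma_1 \in \sc \text{ such that } \rho \peercmpl[\B] \sigma_1 \text{ and } \sigma_1 \sempeerleq[\B] \sigma_2}
$$
and show $\R \;\subseteq\; \FunCpeer(\R, \B)$; the greatest-fixpoint characterisation in \rdef{peercmpl} then yields $\R \;\subseteq\; \peercmpl[\B]$, which is exactly what the theorem asserts. Fix a pair $(\rho, \sigma_2) \in \R$ with witness $\sigma_1$. For \rpt{peercmpl-stuck-ok}, the assumption $\rho \Par \sigma_2 \stable[\B]$ feeds directly into \rcor{sound.ancillary} to deliver $\rho \ar{\ok}$ and $\sigma_2 \ar{\ok}$; this is the only point at which transitivity of $\B$ is used.

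For \rpt{peercmpl-derivatives}, assume $\rho \Par \sigma_2 \ar[\B]{\tau} \rho' \Par \sigma_2'$ and reason by case analysis on the rule of \rfig{interaction} applied. In case $\rname{I-Left}$ the same rule lifts the $\tau$-move to $\rho \Par \sigma_1$, so \rpt{peercmpl-derivatives} of $\rho \peercmpl[\B] \sigma_1$ gives $\rho' \peercmpl[\B] \sigma_1$, placing $(\rho',\sigma_2) \in \R$ via the same witness. In case $\rname{I-Right}$ either $\sigma_2$ unfolds, in which case $\unfold{\sigma_2} = \unfold{\sigma_2'}$ and \rpt{FunS-closed-unfold} keeps $\sigma_1 \sempeerleq[\B] \sigma_2'$ with the same witness; or $\sigma_2$ is an internal sum $\bigoplus_{i\in J}\,!\gt{l_i}.\sigma^2_i$ selecting branch $k$, in which case the internal-sum clause of \rdef{sempeerleq} inverts $\sigma_1 \sempeerleq[\B] \sigma_2$ to $\unfold{\sigma_1} = \bigoplus_{i \in I}\,!\gt{l_i}.\sigma^1_i$ with $J \subseteq I$ and $\sigma^1_j \sempeerleq[\B] \sigma^2_j$ for each $j \in J$; iterating \rpt{peercmpl-derivatives} along the reductions $\sigma_1 \arstar{\tau}\; !\gt{l_k}.\sigma^1_k$ then supplies the new witness $!\gt{l_k}.\sigma^1_k$.

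The hard case is $\rname{I-Synch}$, where $\rho \ar{\lambda_1} \rho'$ and $\sigma_2 \ar{\lambda_2} \sigma_2'$ with $\lambda_1 \bowtie_{\B} \lambda_2$, and both $\rho$ and $\sigma_2$ are necessarily stable because the axioms $\rname{a-Pre}$ and $\rname{a-Ext}$ of \rfig{opsem} are the only sources of such actions. Part \rpt{FunS-stable} of \rlem{FunS-closed-unfold} supplies a stable $\tilde\sigma_1$ with $\sigma_1 \arstar{\tau} \tilde\sigma_1$ and $\tilde\sigma_1 \sempeerleq[\B] \sigma_2$, and iterating \rpt{peercmpl-derivatives} yields $\rho \peercmpl[\B] \tilde\sigma_1$. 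The clause of \rdef{sempeerleq} matching $\sigma_2$'s stable shape then exposes a corresponding action $\lambda_1'$ of $\tilde\sigma_1$ and a residual $\tilde\sigma_1'$ with $\tilde\sigma_1' \sempeerleq[\B] \sigma_2'$.

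The delicate step is to show that $\rho$ actually synchronises with $\tilde\sigma_1$ over some $\lambda_1'$. Since $\tilde\sigma_1$ is stable and lacks any $\ok$ transition, \rpt{peercmpl-stuck-ok} of $\rho \peercmpl[\B] \tilde\sigma_1$ forbids $\rho \Par \tilde\sigma_1$ from being stable, forcing the interaction $\lambda_1 \bowtie_{\B} \lambda_1'$ to hold. In the higher-order subcases this is precisely where peer compliance itself, rather than a transitivity-chaining from the $\B$-compatibility inherited from $\sigma_2$, supplies the $\B$-compatibility between $\rho$'s payload and $\tilde\sigma_1$'s payload; an analogous observation handles the base-type subtyping and external-sum label cases. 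A final application of \rpt{peercmpl-derivatives} to $\rho \Par \tilde\sigma_1 \ar[\B]{\tau} \rho' \Par \tilde\sigma_1'$ delivers $\rho' \peercmpl[\B] \tilde\sigma_1'$, so $(\rho', \sigma_2') \in \R$ with witness $\tilde\sigma_1'$, completing the verification of $\R \;\subseteq\; \FunCpeer(\R, \B)$.
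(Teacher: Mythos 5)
Your proof is correct, and its skeleton matches the paper's: the same relation $\R$, the same appeal to \rcor{sound.ancillary} for the stuck case, and the same case split on the rules of \rfig{interaction}. The genuine divergence is in the \rname{I-Synch} case. The paper inverts $\sigma_1 \sempeerleq[\B] \sigma_2$ to expose the matching prefix of the (stable derivative of) $\sigma_1$ and then chains the two available $\B$-facts by transitivity to conclude that $\rho$ can also synchronise with it; you instead obtain that synchronisation directly from $\rho \peercmpl[\B] \tilde\sigma_1$: both sides are stable and $\rho \nar{\ok}$, so \rpt{peercmpl-stuck-ok} forbids $\rho \Par \tilde\sigma_1$ from being stuck, and the only available rule is \rname{I-Synch}. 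This buys you two real advantages. First, uniformity: the same observation covers the label, base-type and higher-order subcases, and in the external-sum subcase it is not a convenience but a necessity, since the syntactic clause only yields $I \subseteq J$ and so does not by itself guarantee that $\tilde\sigma_1$ offers the label that $\rho$ selects. Second, robustness with respect to the variance of the clauses: for contract output, \rdef{sempeerleq} gives $\sigma^m_2 \B \sigma^m_1$ while the interaction with $\rho$ gives $\sigma^m_2 \B \rho^m$, and these do not transitively compose into the required $\sigma^m_1 \B \rho^m$ (the paper's write-up reads its own clause with the opposite orientation before invoking transitivity); your argument never needs to compose them at all, and transitivity of $\B$ indeed survives only inside \rcor{sound.ancillary}, as you note. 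The one presentational nit is in the external-sum subcase, where the ``corresponding action of $\tilde\sigma_1$'' is exposed by the forced interaction rather than by the clause of \rdef{sempeerleq}, so that step should come after, not before, the compliance-forcing argument.
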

\begin{proof}
  It suffices to show that the following relation is a 
  $\B$-coinductive peer compliance,
  $$
  {\R} = \setof{ (\rho, \, \sigma_2)}{ \rho \peercmpl[\B] \sigma_1,
    \, \sigma_1 \sempeerleq[\B] \sigma_2, \, \text{for some } \sigma_1 \in \sc}
  $$
This requires establishing two properties. 
\begin{enumerate}[label=(\roman*)]
\item If  $ \rho \Par \sigma_2 \nar[\B]{\tau}$ whenever $ \rho \R
  \sigma_2$ then   $\rho \ar{\ok}$ and $\sigma_2 \ar{\ok}$.
  This follows from \rcor{sound.ancillary}. 

\item 
  Suppose $ \rho \Par \sigma_2 \ar[\B]{\tau} \rho' \Par
  \sigma'_2$,  where   $ \rho \R \sigma_2$.     We have to prove that
  $ \rho' \R \sigma'_2$.
  
Because $ \rho \R \sigma_2$ we know there exists some 
$\sigma_1$ such that $\rho \peercmpl[\B] \sigma_1$
  and $\sigma_1 \sempeerleq[\B] \sigma_2$. We have to find some 
$\sigma'_1$ such that   $\rho' \peercmpl[\B] \sigma_1'$
  and $\sigma'_1 \sempeerleq[\B] \sigma'_2$.
The nature of $\sigma'_1$ depends on the form of the reduction
$ \rho \Par \sigma_2 \ar[\B]{\tau} \rho' \Par \sigma'_2$.

If this is a silent move by $\rho$ to $\rho'$ we can take $\sigma'_1$ 
to be $\sigma_1$ itself, since $\rho' \peercmpl[\B] \sigma_1$. 
If it is a silent move by $\sigma_2$ then there are two cases.
If the move is inferred using the rule 
\rname{a-Unfold} from \rfig{opsem} then 
$\unfold{\sigma_2} = \unfold{\sigma'_2}$, which ensures that 
$\sigma_1 \sempeerleq[\B] \sigma'_2$.  

The only other possible way for $\sigma_2$ to do a silent move is
by an application of rule $\rname{a-Int}$. Here $\sigma_2$ has the form 
$\intsum_{i \in I} !\gt{l}_i.\sigma^2_i$ and $\sigma'_2$ must be 
$ !\gt{l}_k.\sigma^2_k$ for some $k \in I$.  In this case again we can
take $\sigma'_1$ to be $\sigma_1$, since $\sigma_1 \sempeerleq[\B] \sigma'_2$.

  If the reduction  is due to an interaction between $\rho$ and
  $\sigma_2$
  then there are six cases
  to discuss. The argument for each of them is similar, so we discuss
  only one case involving higher-order communication.
  Suppose that $ \rho = {?( \rho^m ).\rho'}$. Since $\sigma_2$ engages
  in a communication with $\rho$ it must be the case that
  $\sigma_2 = {!(\sigma^m_2).\sigma'_2}$ with  $  \sigma^m_2 \B \rho^m$.
  Since $ \sigma_1 \sempeerleq[\B] \sigma_2 $, it follows that
  $\unfold{\sigma_1} = {!(\sigma^m_1).\sigma'_1}$ with $ \sigma^m_1 \B
  \sigma^m_2$ and $ \sigma'_1 \sempeerleq[\B] \sigma'_2$.
  It remains to show that $\rho' \peercmpl[\B] \sigma'_1$.
  The transitivity of $\B$ ensures that $  \sigma^m_1 \B  \rho^m $,
  so we can infer $ \rho \Par \sigma_1 \arstar[\B]{\tau} \rho' \Par
  \sigma'_1$, which implies that $\rho' \peercmpl[\B] \sigma'_1$.\qedhere
\end{enumerate}
\end{proof}

\begin{exa}
  \label{ex:soundness-transitivity}
  We show that if $\B$ is not a transitive relation, then 
  $ \sempeerleq[\B]$ need not be contained in $\peerleq[\B]$,
  that is 
  $ {\sempeerleq[\B] } \not \subseteq { \peerleq[\B] }$.

  Let 
  $
  {\B} = \sset{ (\Unit, \, !\gt{l}.!\gt{l}.\Unit), ( !\gt{l}.!\gt{l}.\Unit, \, !\gt{l}.\Unit)}
  $.
  This relation is not transitive, because $\Unit \B {!\gt{l}.!\gt{l}.\Unit}$ and $!\gt{l}.!\gt{l}.\Unit \B {!\gt{l}.\Unit}$,
  while 
$ (\Unit,  {!\gt{l}.\Unit}) \not \in {\B}$.

  Let $ \sigma_1 = {!( !\gt{l}.!\gt{l}.\Unit ).\Unit}$ and let $\sigma_2
  = {!( \Unit ).\Unit}$. We show that $\sigma_1  \sempeerleq[\B]
  \sigma_2  $. 
  The witness of this fact is the relation 
  $
  {\R}  = \sset{ ( \sigma_1, \, \sigma_2 ), ( \Unit, \, \Unit ) } 
  $.
  We are required to prove that $ {\R} \subseteq {\apFunS{\R, \,
      \B}}$.
  This amounts in showing that 
  a) $ \Unit \apFunS{\R, \, \B} \Unit$, and b)
  $ \sigma_1 \apFunS{\R, \, \B} \sigma_2$.
  Point a) is true thanks to case (\ref{pt:base-sempeerleq}) of
  \rdef{sempeerleq}, and point b) follows from
  case (\ref{pt:ho-output}) of the same definition.

  Now we prove that $\sigma_1  \Npeerleq[\B] \sigma_2$.
  We have to exhibit a session contract $\rho$, such that $ \rho
  \peercmpl[\B] \sigma_1$, and $ \rho \Npeercmpl[\B] \sigma_2$.
  Let $ \rho = { ?(!\gt{l}.\Unit).\Unit }$. 
  To see why $ \rho \peercmpl[\B] \sigma_1 $, note 
  that the relation
  $
  \sset{ (\rho, \, \sigma_1), ( \Unit, \, \Unit) }
  $
  is a $\B$-coinductive mutual compliance.

  To conclude the example, we have to prove that $ \rho \Npeercmpl[\B] \sigma_2 $.
  The witness that $\B$ is not transitive is
  the fact that 
  $(\Unit, !\gt{l}.\Unit) \not \in {\B}$. 
  This implies that
  $!( \Unit ) \not \bowtie_{\B} ?( !\gt{l}.\Unit )$, and in turn that
  $ \rho \Par \sigma_2 \nar[\B]{\tau}$. 
  Since $\rho \nar{ \ok }$, it follows that $ \rho \Npeercmpl[\B]
  \sigma_2$.
\end{exa}

The converse to \rthm{sem-sound} relies on
following property of session contracts, whose proof is relegated to
 \rsec{complement}; see
 \rthm{complements-comply}.

\begin{thm}
\label{thm:dual-cmpl-with-ctr}
  Let $\B$ be a preorder on session contracts.
  For every session contract $\rho$ there exists a session contract
  $\prdual{\rho}$ such that $\rho \,\peercmpl[\B]\, \prdual{\rho}$.
\end{thm}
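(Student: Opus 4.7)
The plan is to prove the theorem by first defining a suitable dual operator $\dual{-}$ on session contracts (the operator announced as \rdef{novel-dual}) and then exhibiting an explicit witness relation, namely
\[
\R \;=\; \setof{(\rho,\,\dual{\rho})}{\rho \in \sc}
\]
suitably closed under unfolding and under the choices made by internal sums, and showing that $\R \subseteq \FunCpeer(\R, \B)$. The dual operator will act on stable forms by swapping $?\gt{t}/!\gt{t}$ and $?(\sigma)/!(\sigma)$ prefixes, by sending the internal sum $\intsum_{i\in I}!\gt{l_i}.\sigma_i$ to the external sum $\extsum_{i\in I}?\gt{l_i}.\dual{\sigma_i}$, by sending the external sum $\extsum_{i\in I}?\gt{l_i}.\sigma_i$ to a singleton internal sum $!\gt{l_{i_0}}.\dual{\sigma_{i_0}}$ for some fixed $i_0 \in I$, and by fixing $\dual{\Unit}=\Unit$. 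Its behaviour on recursive types is where the novelty matters, and where care is needed to avoid the pathologies of \rexa{issues-stdual}.

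Verification of the two clauses of \rdef{peercmpl} will proceed by case analysis on $\unfold{\rho}$. The stability clause is immediate: if $\rho \Par \dual{\rho}$ is $\B$-stable, then by construction of $\dual{-}$ and the interaction table $\bowtie_\B$ the only possibility is $\unfold{\rho}=\unfold{\dual{\rho}}=\Unit$, so both components emit $\ok$ via rule \rname{a-Ok}. For the derivative clause, each $\tau$-move of $\rho \Par \dual{\rho}$ is either an unfolding (handled by closing $\R$ under $\unfoldSym$, using the idempotency of $\unfoldSym$ as in \rlem{FunS-closed-unfold}), or an internal choice in one of the sum components (handled by closing $\R$ under the branches that the internal sum may commit to), or a synchronisation via \rname{I-Synch}. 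In the synchronisation cases, reflexivity of $\bsubtype$ handles first-order prefixes, reflexivity of $\B$ (which is a preorder by hypothesis) handles the higher-order prefixes $!(\sigma^m)\bowtie_\B ?(\sigma^m)$ and $?(\sigma^m)\bowtie_\B !(\sigma^m)$, and the label-matching in the sum cases follows from the construction of $\dual{-}$ on internal and external sums; in every situation the residual is again of the form $(\rho', \dual{\rho'})$, so it lies in $\R$.

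The main obstacle, and the reason the theorem is postponed to the dedicated \rsec{complement}, is the recursive case. The standard dual $\stdual{-}$ of \cite{DBLP:conf/esop/HondaVK98} commutes naively with $\Rec[x]{-}$ and with substitution, and for this reason it can produce, from a contract that outputs a session whose type mentions the recursion variable, a residual in which the payload-type constraint $\sigma^m \B \sigma^m$ is not met after unfolding; this is what \rexa{issues-stdual} will exhibit. The novel $\dual{-}$ must therefore be defined so that, on unfolding, the payload side of each higher-order prefix in $\dual{\rho}$ still agrees (via $\B$) with the corresponding payload in $\rho$. Once $\dual{-}$ is set up to guarantee this invariant, the coinductive verification above goes through uniformly, and \rthm{complements-comply} of \rsec{complement} delivers exactly the statement claimed here.
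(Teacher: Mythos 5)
Your overall strategy is the right one, and you have correctly diagnosed where the difficulty lies, but the proposal has a genuine gap: it never actually defines $\dual{-}$ on recursive contracts, and that definition is the mathematical content of the theorem. You write that ``the novel $\dual{-}$ must be defined so that, on unfolding, the payload side of each higher-order prefix in $\dual{\rho}$ still agrees (via $\B$) with the corresponding payload in $\rho$'' and that ``once $\dual{-}$ is set up to guarantee this invariant, the coinductive verification goes through uniformly.'' That sentence is a restatement of the problem, not a solution to it. For $\rho = \Rec[x]{?(x).\Unit}$ any operator that commutes naively with $\Rec[x]{-}$ and with substitution yields $\Rec[x]{!(x).\Unit}$, whose unfolded payload is not $\B$-related to the payload of $\unfold{\rho}$ for any reasonable $\B$; your proposal acknowledges this pathology but does not say what to do instead, so the claim that every synchronisation residual is again of the form $(\rho',\dual{\rho'})$ is unsupported precisely in the case that matters.

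The paper closes this gap with a two-stage construction that your plan would need to reproduce in some form. First it isolates the \emph{m-closed} contracts (\rdef{Message-closed}), those whose message payloads are closed terms; for these, payloads are untouched by unfolding, the standard dual $\stdual{-}$ commutes with $\unfoldSym$ (\rlem{mclosed-impl-stdual-unfold-commute}), and reflexivity of $\B$ then gives $\rho \peercmpl[\B] \stdual{\rho}$ (\rthm{standard-duality-comply}) by essentially the coinductive argument you sketch. Second, an arbitrary $\sigma$ is converted to an m-closed $\mcl{\sigma}$ by threading an accumulator substitution that is applied only in message positions (\rdef{dual}); one then proves $\sigma \bisim{} \mcl{\sigma}$ and transfers compliance across bisimilarity via \rprop{bisim}, setting $\dual{\rho} = \stdual{\mcl{\rho}}$. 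Each of these steps (the commutation lemma, the substitution lemmas behind $\mcl{-}$, and the bisimulation transfer) is a nontrivial piece of work that your proposal implicitly relies on but does not supply. Your variant choice of sending an external sum to a singleton internal sum is harmless for compliance, but it does not address the recursive case either.
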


\begin{thm}
\label{thm:sem-complete}
  Let $\B$ be a preorder on session contracts.
  Then $ \sigma_1 \peerleq[\B] \sigma_2$ implies $\sigma_1 \sempeerleq[\B] \sigma_2$.
\end{thm}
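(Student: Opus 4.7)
The plan is to show that $\peerleq[\B]$ is itself a $\B$-coinductive peer preorder, i.e.\ $\peerleq[\B] \;\subseteq\; \FunS{}(\peerleq[\B], \B)$; by the coinductive definition of $\sempeerleq[\B]$ (\rdef{sempeerleq}) this immediately yields $\peerleq[\B] \;\subseteq\; \sempeerleq[\B]$.

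So fix $\sigma_1 \peerleq[\B] \sigma_2$ and proceed by case analysis on the shape of $\unfold{\sigma_1}$, using \rlem{FunS-closed-unfold} to replace $\sigma_i$ by $\unfold{\sigma_i}$ whenever convenient. In each case the strategy is: (a) build a witness contract $\rho$, obtaining any required "filler" continuation via the complement operator supplied by \rthm{dual-cmpl-with-ctr}, in such a way that $\rho \peercmpl[\B] \sigma_1$; (b) transfer this to $\rho \peercmpl[\B] \sigma_2$ by hypothesis; (c) read off the shape of $\unfold{\sigma_2}$ and the required message/continuation relations from the fact that the stable $\rho$ must actually interact with $\sigma_2$ and then continue to comply. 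For instance, in the case $\unfold{\sigma_1} = \, !(\sigma^m_1).\sigma'_1$ take $\rho = \, ?(\sigma^m_1).\dual{\sigma'_1}$: reflexivity of $\B$ (since $\B$ is a preorder) gives $\sigma^m_1 \B \sigma^m_1$, so the interaction needed for $\rho \peercmpl[\B] \sigma_1$ is enabled and then $\dual{\sigma'_1} \peercmpl[\B] \sigma'_1$ follows from \rthm{dual-cmpl-with-ctr}; the ensuing $\rho \peercmpl[\B] \sigma_2$ then forces $\unfold{\sigma_2} = \, !(\sigma^m_2).\sigma'_2$ with $\sigma^m_2 \B \sigma^m_1$, exactly what \rpt{ho-output} of \rdef{sempeerleq} demands. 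The first-order input/output cases are analogous, using reflexivity of $\bsubtype$ on base types; the external sum case uses singleton witnesses $\rho = \, !\gt{l_i}.\dual{\sigma^1_i}$ (one per $i \in I$) to prove $I \subseteq J$; the internal sum case uses the external-sum witness $\rho = \sum_{i \in I} ?\gt{l_i}.\dual{\sigma^1_i}$, which is unable to match any label outside $I$ and therefore forces $J \subseteq I$.

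To discharge the continuation clauses of $\FunS{}$, e.g.\ $\sigma'_1 \peerleq[\B] \sigma'_2$, I repeat the witness construction parametrically: given an arbitrary $\rho'$ with $\rho' \peercmpl[\B] \sigma'_1$, I prefix $\rho'$ in the appropriate way (for example $\rho = \, ?(\sigma^m_1).\rho'$ in the higher-order output case, or $\rho = \, !\gt{l_i}.\rho'$ for a branch selection), verify that $\rho \peercmpl[\B] \sigma_1$ using reflexivity of $\B$ once more, invoke the hypothesis to obtain $\rho \peercmpl[\B] \sigma_2$, and then extract $\rho' \peercmpl[\B] \sigma'_2$ from the single forced interaction. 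The hardest step will be the internal-sum case, where one must rule out a spurious branch of $\sigma_2$: the argument must confirm that if $\sigma_2$ could silently select some $!\gt{l_j}$ with $j \notin I$ then the configuration $\rho \Par \sigma_2$ reaches a stable state in which $\rho$ cannot perform $\ok$, contradicting the mutual happiness requirement of \rdef{peercmpl}. All other cases follow mechanically once reflexivity of $\B$ and the existence of complements have been invoked.
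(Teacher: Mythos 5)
Your proposal is correct and follows essentially the same route as the paper: both exhibit $\peerleq[\B]$ (modulo unfolding) as a $\B$-coinductive peer preorder by constructing, for each syntactic case, a complying witness whose continuation is supplied by the complement operator of \rthm{dual-cmpl-with-ctr}, using reflexivity of $\B$ to enable the forced interaction and a parametrised witness $?(\sigma^m_1).\rho'$ to discharge the continuation clause. The only difference is that you spell out the sum cases, which the paper leaves to the reader.
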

\begin{proof}
  Since $ \sigma_1 \peerleq[\B] \sigma_2$ implies that
  $ \unfold{\sigma_1} \peerleq[\B] \unfold{\sigma_2}$, it is enough to
  prove that $\R$  is a $\B$-coinductive peer preorder, 
${\R} \subseteq {\apFunS{\R,\B}}$, 
where $\R$ is given by 
$$
{\R} = \setof{ (\sigma_1, \, \sigma_2)}{ \unfold{\sigma_1} \peerleq[\B] \unfold{\sigma_2}}
$$
Pick a pair $\sigma_1 \R \sigma_2$.
To show that   
 ${\sigma_1} \apFunS{\R,\B} { \sigma_2 }$ 
we reason by case analysis on the
unfoldings of these contracts; the argument for many cases are
similar, so we only discuss two cases.
\begin{itemize}
\item Suppose that $\unfold{\sigma_1} = \Unit$. The relation $\sset{
    (\Unit, \, \Unit)}$ is a coinductive $\B$-mutual compliance, so
  $\Unit \peercmpl[\B] \unfold{\sigma_1}$. As $\unfold{\sigma_1} \peerleq[\B]
  \unfold{\sigma_2}$, it follows that $\Unit \peercmpl[\B]
  \unfold{\sigma_2}$. 
A simple argument, based on the possible structure of $\sigma_2$,
will show that  $\unfold{\sigma_2}$ is stable,
  from which it follows that 
$\Unit \Par \unfold{\sigma_2} \nar{\tau}$. 
Compliance now ensures that $\unfold{\sigma_2} \ar{\ok}$. 
Because of the restrictive syntax for contracts this is only possible
if $\unfold{\sigma_2}$ is actually $\Unit$. 
It follows that  ${\sigma_1} \apFunS{\R,\B} {\sigma_2}$.

\item Suppose that $\unfold{\sigma_1} = { !( \sigma^m_1).\sigma'_1}$.
We have to prove the equality 
$$
\unfold{\sigma_2} = {!( \sigma^m_2).\sigma'_2}
$$ 
where $ \sigma^m_2 \B \sigma^m_1$, and $ \sigma'_1 \R \sigma'_2$.

  \rthm{dual-cmpl-with-ctr} and the
  hypothesis on $\B$ ensures the existence of some contract 
  $\prdual{\sigma'_1}$
  such that 
  $\sigma'_1 \peercmpl[\B]  \prdual{\sigma'_1}  $.
  Let $\rho = {?( \sigma^m_1 ). \prdual{\sigma'_1} }$ and let 
  $${\R'} = \sset{ ( \rho, \, \unfold{\sigma_1} ) }  \, \cup \peercmpl[\B]$$
  the relation $\R'$ is a $\B$-coinductive peer compliance. This is the case because
  the preorder $\B$ is reflexive, thus there exists the derivation 
  $$
  \infer[ ?( \sigma^m_1 )  \bowtie_{\B} !( \sigma^m_1 ) ]
  { \rho \Par  \unfold{\sigma_1} \ar[\B]{\tau} \prdual{\sigma'_1} \Par \sigma'_1}
  { \rho \ar{ ?( \sigma^m_1 ) } \prdual{\sigma'_1} \quad  \unfold{\sigma_1} \ar{  !( \sigma^m_1) } \sigma'_1  }
  $$
  and because, thanks to the symmetry of $\peercmpl[\B] $,  $ \prdual{\sigma'_1} \peercmpl[\B] \sigma'_1 $.

  It follows that $\rho  \peercmpl[\B]  \unfold{\sigma_1}$, 
  and since 
  $\unfold{\sigma_1} \peerleq[\B] \unfold{\sigma_2}$, 
  we obtain immediately that $ \rho \peercmpl[\B] \unfold{\sigma_2}  $.
  
  Since $\rho \nar{\ok}$, the composition $\rho \Par
  \unfold{\sigma_2}$ performs a silent move.
  The syntax of session contracts ensures that
  $\unfold{\sigma_2}$ cannot be an internal sum,
  and therefore $\unfold{\sigma_2}$ has to interact
  with $\rho$. 
  The definition of $\bowtie_{\B}$ ensures that 
  $\unfold{\sigma_2}$  has to have the form 
  $ !( \sigma^m_2).\sigma'_2$ where 
  $ \sigma^m_2 \B \sigma^m_1$, our first requirement.
  Moreover 
  $\rho \Par \unfold{\sigma_2} \ar[\B]{\tau}
  \rho' \Par \sigma'_2$ from which  $\rho' \peercmpl[\B] \sigma'_2$
  follows. 
  
  To show the second requirement, 
  $ \sigma'_1 \R \sigma'_2$, let $\rho'$ be any contract satisfying the condition
  $\rho' \peercmpl[\B] \unfold{\sigma'_1}$;
  we have to prove that 
  $\rho' \peercmpl[\B] \unfold{\sigma'_2}$.
  Note that because of \rlemP{FunS-closed-unfold}{FunS-closed-unfold}
  we can also assume that  $\rho' \peercmpl[\B] \sigma'_1$.
  We \leaveout{can therefore}repeat the above argument to establish that
  $?( \sigma^m_1 ). \rho'  \peercmpl[\B] \unfold{\sigma_2}  $,
  and since 
  $$
  ?( \sigma^m_1 ). \rho'  \Par \unfold{\sigma_2}  \ar[\B]{\tau}  
  \rho' \Par \sigma'_2
  $$
  we know that $\rho'  \peercmpl[\B] \sigma_2  $.
  The required 
  $\rho' \peercmpl[\B] \unfold{\sigma'_2}$
  now follows by another application of 
  \rlemP{FunS-closed-unfold}{FunS-closed-unfold}.\qedhere
\end{itemize}
\end{proof}

\begin{exa}
\label{ex:complentess-requires-reflexivity}
We show that if $\B$ is not a reflexive relation, then 
$ \peerleq[\B]$ needs not be contained in $\sempeerleq[\B]$:
$ {\peerleq[\B]} \not \subseteq {\sempeerleq[\B]}$.

Recall \rexa{reflexivity-Speer}, and the session contract we
employed there, $ \sigma = {!( \Unit ). \Unit}$. 
We know that $ \sigma \peerleq[\emptyset] \sigma$, because no peer
can interact with $  !( \Unit ) $, so no peer complies with $\sigma$.
However in \rexa{reflexivity-Speer} we have proven that 
$ \sigma \Nsempeerleq[\emptyset] \sigma$.
\end{exa}

\begin{cor}\label{cor:set.syn}
  For any preorder $\B$ over session contracts, 
  $\sigma_1 \sempeerleq[\B] \sigma_2$ if and only if   $\sigma_1 \peerleq[\B] \sigma_2$.
\end{cor}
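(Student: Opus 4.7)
The plan is to observe that this corollary is an immediate consequence of the two theorems immediately preceding it, namely \rthm{sem-sound} and \rthm{sem-complete}, once we notice that a preorder is by definition both reflexive and transitive.

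For the forward implication, assume $\sigma_1 \sempeerleq[\B] \sigma_2$. Since $\B$ is a preorder it is, in particular, transitive, so \rthm{sem-sound} applies directly to give $\sigma_1 \peerleq[\B] \sigma_2$.

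For the converse, assume $\sigma_1 \peerleq[\B] \sigma_2$. Since $\B$ is a preorder, \rthm{sem-complete} applies directly (its hypothesis is exactly that $\B$ is a preorder, which we need in order to invoke \rthm{dual-cmpl-with-ctr} inside the proof), yielding $\sigma_1 \sempeerleq[\B] \sigma_2$.

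There is no real obstacle here, since all the work has been done in the preceding two theorems; the only thing to check is that the hypotheses of both theorems are subsumed by the assumption that $\B$ is a preorder. This is immediate: transitivity (needed by \rthm{sem-sound}) and the preorder assumption (needed by \rthm{sem-complete}) both follow from the hypothesis. Thus the biconditional holds.
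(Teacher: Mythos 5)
Your proof is correct and is essentially the paper's own argument: the paper likewise derives the corollary immediately from Theorem~\ref{thm:sem-sound} and Theorem~\ref{thm:sem-complete}, with the preorder hypothesis supplying the transitivity needed by the former and the full preorder assumption needed by the latter. Nothing further is required.
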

\begin{proof}
  Follows immediately from Theorem~\ref{thm:sem-sound} and 
Theorem~\ref{thm:sem-complete}. 
\end{proof}

\section{Modelling session types}
\label{sec:results}

In our treatment session types and contracts are obviously just syntactic variations on each other. We formalise the relationship between them
  as a function which 
maps session types from ${\st}$ to session contracts from $\sc$, $$\encSym: L_{\st} \longrightarrow L_{\sc}$$ 
We then show that the subtyping relation between session types,
$S \subt T$, 
can be modelled precisely by the set-based contract preorder, 
$\M{S} \peerleq[\B]  \M{T}$,
for a particular choice of $\B$.

The interpretation of types into contracts is defined by the following 
syntactic translation:
\[
\M{S} = 
\begin{cases}
  \Unit &\text{if }S=\e,\\
  !\gt{t}.\M{S'}&\text{if }S=\sout{\gt{t}}S',\\
  ?\gt{t}.\M{S'}&\text{if }S=\sinp{\gt{t}}S',\\
  !( \M{T} ).\M{S'}&\text{if }S=\sout{T}S',\\
  ?( \M{T} ).\M{S'}&\text{if }S=\sinp{T}S',\\
  \extsum_{i\in[1;n]}?\gt{l}_i.\M{S_i}&\text{if
  }S=\branch{\gt{l}_1\as S_1,\ldots, \gt{l}_n\as S_n},\\
  \intsum_{i\in[1;n]}!\gt{l}_i.\M{S_i}&\text{if
  }S=\choice{\gt{l}_1\as S_1,\ldots, \gt{l}_n\as S_n},\\
\Rec[x]{\M{S'}}&\text{if }S=\Rec{S'},\\
x&\text{if }S=X
\end{cases}
\]
\noindent
The function $\invEncSym : L_{\sc} \longrightarrow L_{\st}$ is the
obvious inverse of $\encSym$, for instance $$\invM{!\gt{t}.\sigma} = \sout{\gt{t}}\invM{\sigma},$$
and we omit its definition.

Because of the syntactic nature of $\encSym$ and $\invEncSym$
the following properties are easy to establish.
\begin{lem}
  \label{lem:M-sanity-checks}
 For every $S, T \in L_{\st}$ and $\rho,\sigma \in L_{\sc}$,
  \begin{enumerate}[label=\alph*)]
  \item \label{pt:M-subst-distribute}$\M{S\subst{T}{X}}=(\M{S})\subst{\M{T}}{\M{X}}$
  \item \label{pt:invM-subst-distribute}$\invM{ \rho \subst{\sigma}{x}}=(\invM{\rho})\subst{\invM{\sigma}}{\invM{x}}$
  \item \label{pt:M-unfold-commute}$\unfold{\M{T}}=\M{\unfold{T}}$
  \item \label{pt:invM-unfold-commute}$\unfold{\invM{\sigma}}=\invM{\unfold{\sigma}}$
  \item \label{pt:M-unfold} $\unfold{\invM{\sigma}}=T\text{ iff }\unfold{\sigma}=\M{T}$
  \end{enumerate}
\end{lem}
\begin{proof}
  \rPt{M-subst-distribute} and \rpt{invM-subst-distribute} are proven
  by structural induction, respectively on $S$ and $\rho$.
  \rPt{M-unfold-commute} uses rule induction on $\unfold{T}$ together
  with an application of
  \rpt{M-subst-distribute}. \rPt{invM-unfold-commute} is proven by
  rule induction on $\unfold{\sigma}$, and uses
  \rpt{invM-subst-distribute}. \rPt{M-unfold} is a consequence of
  (\ref{pt:M-unfold-commute}) and (\ref{pt:invM-unfold-commute}).
\end{proof}

In order to find the appropriate $\B$ \leaveout{with which }that captures the
subtyping relation $S \subt T$ via the interpretation $\encSym$,
we need to develop some properties of  functionals over contracts. 
Let ${\mathcal Pre}$ denote the collection of preorders over the set
of contracts $\sc$.
\begin{lem}
  $(\mathcal Pre, \subseteq)$ is a complete lattice. 
\end{lem}
\begin{proof}
  We have to show that all the subsets of ${\mathcal Pre}$ have
  infimum and supremum. Let ${X} \subseteq {{\mathcal Pre}}$.
  The infimum of $X$ is defined as the intersection of
  the elements of $X$, that is $ \bigsqcap X = \bigcap \setof{ \B }{{\B} \in X}$. 
  The supremum of $X$ is defined as
  the transitive closure of the union of the elements of $X$, that is
  $
  \bigsqcup X = ( \, \bigcup \setof{ \B }{{\B} \in X} \, )^{\mathbf +}
  $.
  It is routine work to check that 
  $\bigsqcap X  \subseteq {\B}$ and ${\B} \subseteq \bigsqcup X$ 
  for every ${\B} \in X$.
\end{proof}
\noindent
Let $\Fun{\peer}:  {\mathcal Pre} \longrightarrow {\mathcal Pre}$ be
defined by letting $\Fun{\peer}(\B)$ be  $\peerleq[\B]$.
\begin{prop}
\label{prop:Funpeer-monotone}
  $\Fun{\peer}$ is a monotone endofunction.
\end{prop}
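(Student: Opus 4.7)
The plan is to decompose the statement into two sub-claims: that $\Fun{\peer}$ really maps into $\mathcal{P}re$ (i.e.\ $\peerleq[\B]$ is a preorder whenever $\B$ is), and that it is monotone with respect to the inclusion order on $\mathcal{P}re$.

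For the well-definedness part, I would argue directly from \rdef{peer.subc}. Reflexivity of $\peerleq[\B]$ is immediate, since the implication $\rho \peercmpl[\B] \sigma \Rightarrow \rho \peercmpl[\B] \sigma$ holds trivially for every $\rho$. Transitivity follows by chaining the two defining implications: if $\sigma_1 \peerleq[\B] \sigma_2$ and $\sigma_2 \peerleq[\B] \sigma_3$, then $\rho \peercmpl[\B] \sigma_1$ implies $\rho \peercmpl[\B] \sigma_2$, which in turn implies $\rho \peercmpl[\B] \sigma_3$. Neither step uses any hypothesis on $\B$, so $\Fun{\peer}(\B) \in \mathcal{P}re$ for every $\B \in \mathcal{P}re$.

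For monotonicity, I would \emph{not} try to argue from the set-theoretic definition, because when $\B$ is replaced by a larger relation $\B'$ both the interaction predicate $\bowtie_{\B}$ and the induced reduction $\ar[\B]{\tau}$ change, so the class of $\rho$'s complying with $\sigma_1$ and the notion of compliance itself both shift simultaneously. Instead I would route the argument through the syntactic characterisation: by \rcor{set.syn} we have $\peerleq[\B] = \sempeerleq[\B]$ and $\peerleq[\B'] = \sempeerleq[\B']$ whenever $\B$, $\B'$ are preorders, so it suffices to show $\sempeerleq[\B] \subseteq \sempeerleq[\B']$.

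This last inclusion is where \rlemP{FunS-monotone}{FunS-monotone-B} does all the work. Assuming $\B \subseteq \B'$, that lemma gives $\FunS{}(\sempeerleq[\B], \B) \subseteq \FunS{}(\sempeerleq[\B], \B')$; since $\sempeerleq[\B]$ is a fixed point of $\FunS{}(-, \B)$, we conclude that $\sempeerleq[\B] \subseteq \FunS{}(\sempeerleq[\B], \B')$, i.e.\ $\sempeerleq[\B]$ is a $\B'$-coinductive peer preorder in the sense of \rdef{sempeerleq}. By maximality of $\sempeerleq[\B']$ among such relations, $\sempeerleq[\B] \subseteq \sempeerleq[\B']$, which is the desired monotonicity. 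The only real obstacle is the one flagged above, namely that a direct set-theoretic proof is blocked because the reductions used to define compliance themselves depend on the parameter; invoking \rcor{set.syn} to pass to the coinductive syntactic presentation sidesteps this cleanly.
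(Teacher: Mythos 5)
Your proof is correct and follows the same route as the paper: the paper's own argument is precisely that a direct set-theoretic proof is unavailable and that the result follows from \rlemP{FunS-monotone}{FunS-monotone-B} together with \rcor{set.syn}. You have merely spelled out the coinductive step (that $\sempeerleq[\B]$ is a $\B'$-coinductive peer preorder, hence contained in $\sempeerleq[\B']$ by maximality) and the easy well-definedness check, both of which the paper leaves implicit.
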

\begin{proof}
  A priori there is no simple direct argument to show, using \rdef{peer.subc}, 
  that
  $\Fun{\peer}$ is monotonic. But the result is now a direct
  corollary of \rpt{FunS-monotone-B} of \rlem{FunS-monotone}, and of
  \rcor{set.syn}.
\end{proof}

\begin{defi}\name{Peer subcontract preorder}
\\We use  $\Peerleq$ to denote $\nu X . \Fun{\peer}(X)$, the greatest
\fp of the monotone function $\Fun{\peer}$. The existence of this \fp is
guaranteed by \rprop{Funpeer-monotone}. We refer to $\Peerleq$ as the {\em Peer subcontract preorder}.

We also let $\Peereq$ denote the {\em Peer equivalence} generated by
$\Peerleq$ in the obvious way.
\end{defi}

The properties of $\Peerleq$ alluded to in (\ref{eq:fixpoint})  of the
Introduction are now easy to establish. 
\begin{prop}
  $\Peerleq$ is the largest preorder $\B$ over $\sc$ satisfying: 
  $\sigma_1 \B   \sigma_2$ if and only if $\sigma_1 \peerleq[\B]
  \sigma_2$. 
\end{prop}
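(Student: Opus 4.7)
The proof is a direct application of the Knaster--Tarski machinery already assembled. The plan is to observe that the statement has two parts: first, that $\Peerleq$ itself satisfies the biconditional; and second, that any other preorder $\B$ satisfying the biconditional is contained in $\Peerleq$.

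For the first part, I would unfold the definitions. By construction $\Peerleq = \nu X.\Fun{\peer}$, so $\Peerleq$ is a fixed point of $\Fun{\peer}$, i.e.\ $\Peerleq = \Fun{\peer}(\Peerleq) = \peerleq[\Peerleq]$. Therefore $\sigma_1 \Peerleq \sigma_2$ iff $\sigma_1 \peerleq[\Peerleq] \sigma_2$, as required. Before claiming this, I would note that $\Peerleq$ is indeed a preorder: the functional $\Fun{\peer}$ is well-typed as an endofunction on $\mathcal{Pre}$ because, for any $\B$, the set-theoretic relation $\peerleq[\B]$ (Definition~\ref{def:peer.subc}) is reflexive and transitive by inspection of its quantifier-based definition. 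Monotonicity has already been established in Proposition~\ref{prop:Funpeer-monotone}, so Knaster--Tarski applies inside the complete lattice $\mathcal{Pre}$, and $\Peerleq$ is legitimately the greatest fixed point of $\Fun{\peer}$.

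For the second part, suppose $\B \in \mathcal{Pre}$ satisfies $\sigma_1 \B \sigma_2$ iff $\sigma_1 \peerleq[\B] \sigma_2$. This exactly says $\B = \Fun{\peer}(\B)$, so $\B$ is a fixed point of $\Fun{\peer}$ in $\mathcal{Pre}$. The greatest-fixed-point property gives $\B \subseteq \Peerleq$, which is the maximality claim.

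There is essentially no obstacle: all the substantive work (existence of the greatest fixed point, monotonicity of $\Fun{\peer}$, the fact that $\mathcal{Pre}$ is a complete lattice, and the equivalence between the set-based preorder $\peerleq[\B]$ and its syntactic characterisation $\sempeerleq[\B]$ that underpins monotonicity) has already been carried out earlier in the section. The proof is therefore short and amounts to spelling out that ``greatest fixed point of $\Fun{\peer}$'' and ``largest preorder with $\B = \peerleq[\B]$'' are two ways of saying the same thing.
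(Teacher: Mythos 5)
Your proposal is correct and is exactly the paper's argument: the paper's proof is the one-liner ``a direct consequence of the fact that $\Peerleq$ is the greatest fixed point of $\Fun{\peer}$,'' and you have simply spelled out the two halves (a greatest fixed point is itself a fixed point, and every fixed point lies below it). The extra remark that $\peerleq[\B]$ is a preorder for any $\B$, so that $\Fun{\peer}$ really is an endofunction on $\mathcal{Pre}$, is a sound and welcome sanity check but not a departure from the paper's route.
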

\begin{proof}
  A direct consequence of the fact that $\Peerleq$ is the greatest
  \fp of $\Fun{\peer}$. 
\end{proof}

The proof that 
$\Peerleq$ provides  a fully-abstract model of  subtyping $\subt$ on
session types {, \leaveout{via the
interpretation $\M{-}$, }relies on another characterisation, which in
turn uses a standard result from lattice theory
\cite[pag. 19]{arnold2001rudiments}.
\begin{lem}\name{Golden lemma}\\
  Let $L$ be a complete lattice and $f : L\times L \longrightarrow L$
  an endofunction monotone in both arguments. Then  $\nu y. \nu
  x. f(x, y) = \nu x . f( x, x)$.\footnote{The result proven in
    \cite{arnold2001rudiments} is more general; it pertains to
    both least and greatest fixed points of endofunctions.}
\end{lem}

\begin{lem}\label{lem:golden.consequence} 
  ${\Peerleq} = \nu X . \FunS{}(X, \, X)$.
\end{lem}
\begin{proof}
  By definition ${\Peerleq} = \nu X. \Fun{\peer}(X) = \nu X. {\peerleq[X]}$.
But by \rcor{set.syn}  the preorder $\peerleq[\B]$ coincides with 
the relation $\sempeerleq[\B]$ for any preorder $\B$. 
Since $\Fun{\peer}$ is a function over ${\mathcal Pre}$, 
 we have  ${\Peerleq} = \nu Y. {\sempeerleq[Y]}$.
\rdef{sempeerleq} lets us expand  $\sempeerleq[\B]$, thereby obtaining
the equality ${\Peerleq} = \nu Y.\nu X.{\apFunS{X, \, Y}}$.
\leaveout{
Expanding $\sempeerleq[\B]$ from \rdef{sempeerleq}
it follows the equality $\Peerleq = \nu Y.\nu X. \apFunS{X, \, Y}$.
}
The result is now a consequence of the Golden lemma. 
\end{proof}

To obtain the \fabs result, we show how the \pfps 
of the functionals $\FunS{}$ and~$\Fun{\subt}$ are related via~$\encSym$.

\newcommand{\T}{\mathrel{\mathcal{T}}}

\begin{lem}
\label{lem:model-sound}
  Fix a relation $\B$ such that ${\B} \subseteq {  \apFunS{\B, \,
    \B}}$ and let $${\T } = \setof{ ( \encSym^{-1}(\sigma_1), \,
    \encSym^{-1}(\sigma_2))}{ \sigma_1 \B \sigma_2 }$$
  Then $ {\T} \subseteq { \Fun{\subt}(\T)}$.
\end{lem}
\begin{proof}(Outline)
Fix a pair $ S_1 \T S_2$. These types are the images via
$\encSym^{-1}$ of two session contracts, respectively $\sigma_1$
and $\sigma_2$, such that $ \sigma_1 \B \sigma_2 $.

The proof proceeds by a case analysis on the structure of
$\unfold{\invM{\sigma_1}}$; we give the details of two cases.

\begin{itemize}
\item {\sloppy Suppose $\unfold{\invM{\sigma_1}} = \e$.  
     According to \rdef{subtyping}
     we then have to show that $\unfold{\invM{\sigma_2}} = \e$.
     Because of
     \rlemP{M-sanity-checks}{M-unfold} we know that $\unfold{\sigma_1}
     = \Unit$; case~(\ref{pt:base-sempeerleq}) of
     \rdef{sempeerleq} ensures that $\unfold{\sigma_2}
     = \Unit$, and \rlem{M-sanity-checks} therefore implies
     the syntactic equality
     $\unfold{\invM{\sigma_2}} = \e$.}

\item Suppose  $\unfold{\invM{\sigma_1}} = {\sout{ T_1 }S_1}$. 
  We are required to prove that 
  \begin{align}
    \label{eq:1}
    \unfold{\invM{\sigma_2}} = { \sout{T_2}S_2   }
  \end{align}
for some $T_2$ and $S_2$ such that 
  $T_2 \T T_1$ and $ S_1 \T S_2$.

  \rlemP{M-sanity-checks}{M-unfold} 
  ensures that $\unfold{\sigma_1}= { !( \M{T_1}).\M{S_1} }$. 
  Since we know that $\sigma_1 \B \sigma_2$, the hypothesis that ${\B} \subseteq 
  {\apFunS{\B, \,  \B}}$ and \rdef{sempeerleq} imply
  the equality 
  $$
  \unfold{\sigma_2} = {!( \sigma^m_2 ).\sigma'_2 }
  $$ 
  with
  $\sigma^m_2$ such that
  $\sigma^m_2 \B \M{T_1}$ and some~$\sigma'_2$ such that $\M{S_1} \B
  \sigma'_2$.
  The construction of~$\T$ implies that
  $$
  {\invM{\sigma^m_2}} \T T_1, \quad S_1 \T \invM{\sigma'_2}
  $$
  Because of $\sigma_2 = \M{S_2}$,
  \rlem{M-sanity-checks} implies the syntactic equality
  $ \unfold{\sigma_2} =
  \M{\unfold{S_2}}$, thus $ \unfold{S_2} =
  \sout{\invM{\sigma^m_2}}\invM{\sigma'_2}$.
  This ensures that (\ref{eq:1}) above is satisfied.
\end{itemize}

\noindent The proof for the remaining cases is similar to the argument already
shown, and  left to the reader.
\end{proof}

\begin{lem}
\label{lem:model-complete}
  Let $\T$ be a type simulation, and 
  $${\B } = \setof{ ( \M{S}, \,
    \M{T})}{ S \T T }$$
  Then $ {\B } \subseteq {  \apFunS{\B, \, \B} }$.
\end{lem}
\begin{proof}(Outline)
Suppose $\sigma_1 \B \sigma_2$. By construction $\sigma_1 = \M{S_1}$
and $\sigma_2 = \M{S_2}$ for some $S_1$ and $S_2$ related by $\T$.
The proof is a case analysis on $\unfold{\sigma_1}$.

\begin{itemize}
\item
  If $\unfold{\sigma_1} = \Unit$ we have to prove that
  $\unfold{\sigma_2} = \Unit$.
  An application of \rlemP{M-sanity-checks}{M-unfold}
  shows that $\unfold{S_1} = \e$.
  The hypothesis that $\T$ is a type simulation
  ensures that $\unfold{S_1} = \e$, so another application of
  \rlem{M-sanity-checks} leads to $\unfold{\sigma_2} =
  \Unit$.

\item
  If $\unfold{\sigma_1} = {?( \sigma^m_1 ).\sigma'_1}$ we have to show
  that 
  $$ \unfold{\sigma_2} = { ?( \sigma^m_2).\sigma'_2 }$$ 
  with $\sigma^m_1 \B \sigma^m_2$ and $\sigma'_1 \B \sigma'_2$.
  We apply \rlemP{M-sanity-checks}{M-unfold} and obtain the equality
  $$\unfold{S_1} = { \sinp{ \invM{\sigma^m_1} }\invM{\sigma'_1} }$$ 
  By hypothesis the relation $\T$ is a type simulation, so
  $ S_1 \T S_2 $ let us deduce that 
  $$\unfold{S_2} = \sinp{ T_2 } S'_2$$ 
  with 
  $$ \invM{\sigma^m_1} \T T_2, \quad \invM{\sigma'_1} \T S'_2$$
  This implies that
  $$
  \sigma^m_1 \B \M{T_2}, \quad \sigma'_1 \B \M{S'_2}
  $$
  \rlem{M-sanity-checks}, $\unfold{S_2} = \sinp{ T_2 }
  S'_2$, and $\sigma_2 = \M{S_2}$ ensure the equality
  $$\unfold{\sigma_2} = { ?( \M{T_2} ).\M{S'_2} }.$$
\end{itemize}
The other cases are analogous and left to the reader. 
\end{proof}

The last two results make the proof of \fabs
straightforward.

\begin{thm}\name{\Fabs}
\label{thm:full-abstraction}\\
For every $T, S \in {\st}$, $S \subt T$ if and only if $\M{S} \Peerleq \M{T}$.
\end{thm}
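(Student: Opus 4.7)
The strategy is to derive both implications by combining the Golden-lemma characterisation $\Peerleq \,=\, \nu X.\FunS{}(X,X)$ of Lemma~\ref{lem:golden.consequence} with the two ``transfer'' lemmas, Lemma~\ref{lem:model-sound} and Lemma~\ref{lem:model-complete}, which move coinductive witnesses between the world of session types (witnessed by $\Fun{\subt}$-simulations) and the world of contracts (witnessed by single-argument postfixed points of $\FunS{}$). Both directions will be coinductive: exhibit a postfixed point and invoke the corresponding Knaster--Tarski greatest-fixed-point principle.

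For the direction $\M{S} \Peerleq \M{T} \Rightarrow S \subt T$, the plan is to use $\B \,:=\, \Peerleq$. By Lemma~\ref{lem:golden.consequence} we have $\Peerleq \,=\, \FunS{}(\Peerleq,\Peerleq)$, so in particular $\Peerleq \,\subseteq\, \FunS{}(\Peerleq,\Peerleq)$. Hence $\B$ satisfies the hypothesis of Lemma~\ref{lem:model-sound}, and the resulting relation
\[
\T \,=\, \{\,(\invM{\sigma_1},\invM{\sigma_2}) \mid \sigma_1 \Peerleq \sigma_2\,\}
\]
is a type simulation, i.e.\ $\T \,\subseteq\, \Fun{\subt}(\T)$. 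Because $\M{S}\Peerleq\M{T}$ and $\invM{\M{U}}=U$ for every $U\in{\st}$, the pair $(S,T)$ lies in $\T$. The greatest-fixed-point property of $\subt$ then gives $\T \,\subseteq\, \subt$, so $S \subt T$.

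For the converse direction $S \subt T \Rightarrow \M{S} \Peerleq \M{T}$, I would dually take $\T \,:=\, \subt$. Since $\subt$ is the greatest fixed point of $\Fun{\subt}$, it is in particular a type simulation. Lemma~\ref{lem:model-complete} then produces
\[
\B \,=\, \{\,(\M{S'},\M{T'}) \mid S' \subt T'\,\}
\]
satisfying $\B \,\subseteq\, \FunS{}(\B,\B)$. By Lemma~\ref{lem:golden.consequence}, $\Peerleq$ is the greatest postfixed point of the endofunction $X \mapsto \FunS{}(X,X)$, so $\B \,\subseteq\, \Peerleq$. The assumption $S \subt T$ yields $(\M{S},\M{T})\in \B$, whence $\M{S}\Peerleq\M{T}$, as required.

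The only non-routine ingredient is the Golden-lemma reduction $\nu Y.\nu X.\FunS{}(X,Y) \,=\, \nu X.\FunS{}(X,X)$, which was already established in Lemma~\ref{lem:golden.consequence} and is precisely what lets us reuse the single preorder $\B$ in both slots of $\FunS{}$; without it, the ``parametrised'' nature of $\FunS{}(-,-)$ would block the direct application of Lemmas~\ref{lem:model-sound} and~\ref{lem:model-complete}. Given that lemma, together with the basic sanity equations $\invM{\M{U}}=U$ from Lemma~\ref{lem:M-sanity-checks}, the proof is a clean two-line invocation in each direction, and I expect no further obstacles.
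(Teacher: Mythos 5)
Your proposal is correct and follows essentially the same route as the paper's own proof: both directions instantiate Lemma~\ref{lem:model-sound} and Lemma~\ref{lem:model-complete} with exactly the relations $\B = \setof{(\M{S'},\M{T'})}{S' \subt T'}$ and $\T = \setof{(\invM{\sigma_1},\invM{\sigma_2})}{\sigma_1 \Peerleq \sigma_2}$, and then conclude via the Golden-lemma characterisation $\Peerleq \,=\, \nu X.\FunS{}(X,X)$ of Lemma~\ref{lem:golden.consequence} together with the greatest-fixed-point property of $\subt$. No gaps.
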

\begin{proof}
Suppose that $S \subt T$.
\rlem{model-complete} implies that the relation
$$
{\B} = \setof{ ( \M{S}, \, \M{T})}{ S \subt T }
$$
is contained in $ \apFunS{\B, \, \B}$, thus $ { \B } \subseteq  \nu X
.{\apFunS{X, \, X}}$. \rlem{golden.consequence}
implies that ${ \B } \subseteq { \Peerleq }$.
It follows that $\M{S} \Peerleq \M{T}$.

Suppose that $\M{S} \Peerleq \M{T}$.
Note that ${ \Peerleq } \subseteq { \apFunS{ \Peerleq, \, \Peerleq} }$.
\rlem{model-sound} implies that the relation
$$
{\T } = \setof{  ( \invM{\rho}, \, \invM{\sigma}) }{ \rho \Peerleq
  \sigma }
$$
is a type simulation, so ${ \T } \subseteq { \subt }$.
Since $S \T T$, it follows that $S \subt T$.
\end{proof}
\noindent
\Fabs has two immediate consequence.
The first is a result on the decidability of $\Peerleq$.
\begin{prop}
 If $\bsubtype$ is decidable, then relation $\Peerleq$ is decidable.
\end{prop}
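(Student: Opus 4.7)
The plan is to leverage the full-abstraction theorem to reduce decidability of $\Peerleq$ on contracts to decidability of $\subt$ on session types. Since $\encSym$ and $\invEncSym$ are both purely syntactic, structurally recursive translations between $L_{\st}$ and $L_{\sc}$, they are trivially computable. Thus, to decide $\sigma_1 \Peerleq \sigma_2$ on arbitrary $\sigma_1,\sigma_2 \in \sc$, we compute $\invM{\sigma_1}$ and $\invM{\sigma_2}$, decide $\invM{\sigma_1} \subt \invM{\sigma_2}$, and appeal to \rthm{full-abstraction}.

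To decide $S \subt T$, I would follow the standard Gay--Hole style algorithm for subtyping of regular recursive types. Since $\st$ consists of closed and guarded terms, the set of session types reachable from any fixed $T \in \st$ by arbitrary unfoldings of subterms is finite (each such reachable type is, up to a fixed substitution for its recursion variables, a subterm of $T$, and there are only finitely many subterms). Consequently, starting from the pair $(S,T)$, the set of pairs that can appear in any attempt to build a type simulation containing $(S,T)$ is finite. The algorithm proceeds iteratively: maintain a growing set $\R$ of pairs, initially $\{(S,T)\}$; at each step, pick a pair $(S',T') \in \R$, compute $\unfold{S'}$ and $\unfold{T'}$, and either witness a local failure (mismatched head constructor, a violation of $I\subseteq J$ or $J\subseteq I$ in the sum cases, or a failure $\gt{t}_1 \not\bsubtype \gt{t}_2$ at a base-type leaf, which is decidable by hypothesis) or add the finitely many immediate sub-obligations dictated by the matching clause of $\Fun{\subt}$ in \rdef{subtyping}. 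If the process terminates without failure, the resulting $\R$ satisfies $\R \subseteq \Fun{\subt}(\R)$, so by Knaster--Tarski $\R \subseteq {\subt}$, and we return true; otherwise we return false.

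Correctness of the ``false'' answer follows because each failure corresponds to a clause of $\Fun{\subt}$ that no type simulation containing $(S,T)$ could satisfy. Correctness of the ``true'' answer is immediate from the prefixed-point criterion. Termination follows from the finite state-space argument above, combined with the decidability of $\bsubtype$ at base leaves.

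The main obstacle is establishing finiteness of the reachable state space in a setting that includes higher-order inputs/outputs of session types as messages. The key point is that higher-order clauses (\ref{pt:ho-out}) and (\ref{pt:ho-inp}) of \rdef{subtyping} only generate sub-obligations on \emph{subterms} of the original types---namely the carried types $T_1,T_2$ and the continuations $S_1,S_2$---so guardedness is preserved along transitions of the algorithm and no new types are fabricated. Once this is observed, the finiteness argument for the first-order case carries over verbatim, and the proposition follows.
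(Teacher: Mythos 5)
Your proposal takes essentially the same route as the paper: both reduce deciding $\sigma_1 \Peerleq \sigma_2$ to deciding $\invM{\sigma_1} \subt \invM{\sigma_2}$ via the computable translation $\invEncSym$ together with \rthm{full-abstraction}, and then decide $\subt$ algorithmically. The only difference is presentational: the paper obtains decidability of $\subt$ by citing the Gay--Hole algorithm and adding two rules for input/output of base types (using decidability of $\bsubtype$), whereas you re-derive that same iterative simulation-building algorithm and its termination via the standard finite-closure argument.
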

\proof

  First we describe the an algorithm to decide $\subt$.
  In \cite[Figure~11, Lemma~10,
  Corollary~2]{DBLP:journals/acta/GayH05} an algorithm is presented,
  which decides $\subt$ but for a language of types with no 
  input/output of base types.
  Adding the following two rules to the ones in Figure~11 of that paper
  we obtain an algorithmic subtyping relation $\leqslant$, 
  that works also for types with input/output of base types.
  $$
\begin{array}{c}
  \infer[ \gt{t_1} \bsubtype \gt{t_2}]
  { \Sigma \typerel \sinp{\gt{t_1}}S_2 \leqslant \sinp{\gt{t_2}}S_2}
  { \Sigma \typerel S_1 \leqslant S_2 }
\\[1em]
  \infer[ \gt{t_2} \bsubtype \gt{t_1}]
  { \Sigma \typerel \sout{\gt{t_1}}S_2 \leqslant \sout{\gt{t_2}}S_2}
  { \Sigma \typerel S_1 \leqslant S_2 }
\end{array}
  $$
  Thanks to the hypothesis that $\bsubtype$ is decidable,
  Lemma~10, Corollary~2 of \cite{DBLP:journals/acta/GayH05} are true
  also for $\leqslant$ and $\subt$, that is for every session
  type $S_1$ and $S_2$
  \begin{enumerate}[label=\roman*)]
  \item\label{pt:alg-terminates}
    The algorithmic subtyping $ \typerel S_1 \leqslant S_2$ terminates
  \item\label{pt:alg-snd-comp} $\typerel S_1 \leqslant S_2$ if and only if $ S_1 \subt S_2$
  \end{enumerate}

  Now we show how to decide whether two session contracts $\sigma_1$
  and $\sigma_2$ are in the relation $\Peerleq$.
  \begin{enumerate}[label=\arabic*)]
    \item Let $S_1 = \invM{\sigma_1} $ and $S_2 = \invM{\sigma_2}$.
      The applications of the function~$\invEncSym$ terminates 
      because~$\invEncSym$ is defined inductively,
    \item apply the algorithmic subtyping to decide whether
      $\typerel S_1 \leqslant S_2$. \rPt{alg-terminates} above ensures
      that the algorithm terminates,
    \item \rPt{alg-snd-comp} above ensures that the algorithm has
      decided whether $ S_1 \subt S_2 $,
    \item \rthm{full-abstraction} now implies that if $ S_1 \subt S_2$
      then $\sigma_1 \Peerleq \sigma_2$, and if $ S_1 \not \subt S_2$
      then $\sigma_1 \not \Peerleq \sigma_2$.\qed
\end{enumerate}

The second immediate consequence of \rthm{full-abstraction} is an
explanation of type equivalence.
{\em Type equivalence}, denoted $\typeEQ$, is the equivalence
generated by the subtyping, so that
\begin{equation}
\label{eq:typeEQ}
T \typeEQ S \text{ whenever }  T \subt S \text{ and } S \subt T
\end{equation}
The explanation of $\typeEQ$ is alternative to the standard 
one based on tree models of types \cite{DBLP:journals/fuin/BrandtH98}.

\begin{prop}\name{\Fabs type equivalence}\\
For every $T, S \in {\st}$, $S \typeEQ T$ if and only if $\M{S} \Peereq \M{T}$.
\end{prop}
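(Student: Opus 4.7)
The plan is to derive this proposition as an immediate corollary of \rthm{full-abstraction} by unfolding both equivalences $\typeEQ$ and $\Peereq$ into their defining conjunctions of preorder judgements, and then applying the Full-Abstraction theorem to each conjunct separately.

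First, I would invoke the definition of type equivalence given in equation (\ref{eq:typeEQ}): $S \typeEQ T$ holds precisely when $S \subt T$ and $T \subt S$. Second, I would apply \rthm{full-abstraction} in both directions, turning the pair of subtyping judgements into the pair of peer subcontract judgements $\M{S} \Peerleq \M{T}$ and $\M{T} \Peerleq \M{S}$. Third, I would appeal to the definition of $\Peereq$ as the equivalence generated in the obvious way from the preorder $\Peerleq$, which means that $\M{S} \Peereq \M{T}$ is by definition the conjunction of these two preorder judgements. Reading the same chain of biconditionals in reverse yields the converse implication.

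There is no genuine obstacle here: the substantive work --- the syntactic characterisation $\peerleq[\B] \,=\, \sempeerleq[\B]$ provided by \rcor{set.syn}, the existence of complementary contracts via $\dual{\sigma}$ required for \rthm{sem-complete}, and the fixed-point identity $\Peerleq \,=\, \nu X.\FunS{}(X,X)$ from \rlem{golden.consequence} --- has already been discharged in proving \rthm{full-abstraction}. The current statement merely lifts the preorder-level correspondence to the kernel equivalences they each generate, and hence the proof is essentially a one-step calculation using the two definitions together with the Full-Abstraction theorem.
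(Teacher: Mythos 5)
Your proposal is correct and matches the paper's proof, which likewise dispatches the statement as a direct consequence of the definition of $\typeEQ$ in (\ref{eq:typeEQ}), the definition of $\Peereq$ as the equivalence generated by $\Peerleq$, and \rthm{full-abstraction} applied to each of the two preorder judgements. No further comment is needed.
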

\begin{proof}
  A direct consequence of the definitions of $\typeEQ$, $\Peereq$ and
  of \rthm{full-abstraction}.
\end{proof}

\section{Complements of Contracts}
\label{sec:complement}

\begin{figure}
\hrulefill
$$
\begin{array}{c}
\stdual{\e} = \e, \quad \stdual{X} = X, \quad 
\stdual{\Rec[X]{S}} = \Rec[\stdual{X}]{\stdual{S}}
\\[.5em]
\stdual{\sinp{M}S} = \sout{M}\stdual{S}, \quad 
\stdual{\sout{M}S} =
\sinp{M}\stdual{S}
\\[.5em]
\stdual{\branch{ \gt{l}_1 \as S_1, \ldots, \gt{l}_n\as S_n}} =
\choice{\gt{l}_1 \as \stdual{S_1}, \ldots, \gt{l}_n\as
  \stdual{S_n}}
\\[.5em]
\stdual{\choice{ \gt{l}_1 \as S_1, \ldots, \gt{l}_n\as S_n}} =
\branch{ \gt{l}_1 \as \stdual{S}_1, \ldots, \gt{l}_n\as \stdual{S_n} }
\end{array}
$$
\caption{Standard syntactic definition of dual
  session types \label{fig:std-dual}}
\hrulefill
\end{figure}

The converse to \rthm{sem-complete} relies on the existence
for every session contract~$\sigma$ of a ``complementary'' session
contract $\prdual{\sigma}$ that is in $\B$-peer compliance with~$\sigma$,
at least for~$\B$s that satisfy certain minimal conditions.
The well-known {\em syntactic duality} of session types,
discussed in the Introduction and 
defined inductively in \rfig{std-dual}, is an
obvious candidate. It is defined for the language of session contracts
$L_\sc$ 
by structural induction in \rfig{std-dual}; we are primarily
interested in it as applied to session contracts $\sc$.
Intuitively to obtain the dual of a contact $\sigma$, denoted
$\stdual{\sigma}$,
\begin{itemize}
\item every internal choice is transformed into an external choice

\item every external choice is transformed into an internal choice

\item inputs are turned into outputs, and outputs into inputs. 
\end{itemize}
But it should be emphasised that in the transformation from
$\sigma$ to $\stdual{\sigma}$ all messages are left unchanged. 

Unfortunately, as we will see in the following example, this standard
duality transformation does not satisfy our requirements for
complementary contracts. First a definition.

\begin{defi}
\label{def:reasonable}
We say that $\B$ is {\em reasonable} whenever $\sigma_1 \B \sigma_2$ implies 
\begin{enumerate}[label=\roman*)]
\item $\unfold{\sigma_1} \B \unfold{\sigma_2}$
\item if $\sigma_1 \ar{\lambda_1}$ and $\sigma_2 \ar{\lambda_2}$ then $\lambda_1$ and
$\lambda_2$ are {\em both} input actions or output actions.
\item\label{pt:reasonable-continuations} %
  if $\sigma_1 \ar{ ?( \sigma^m_1 ) } \sigma'_1$ and
  $\sigma_2 \ar{ ?( \sigma^m_2 ) } \sigma'_2$ then 
$\sigma^m_1 \B \sigma^m_2$
and 
$\sigma'_1 \B \sigma'_2$
\qed
\end{enumerate}
\end{defi}

\noindent
For instance, if   $ !(\Unit).\Unit \B {?(\Unit).\Unit}$ for some $\B$,
then $\B$ is not reasonable. 

The family of reasonable relations is not arbitrary.
\rthm{sem-complete} implies that the $\peerleq[\B]$ for
every preorder $\B$ is reasonable, thus reasonable relations are an
over-approximation of the behavioural preorders that we are concerned
with.

\begin{figure}
  \hrulefill
  \begin{center}
    \begin{tabular}{c}
      \begin{tikzpicture}
        \node[state] (r) at (0,0) {$\sigma$};
        \node[state] (r1) at (2,0) {$ ?( \sigma ).\Unit $};
        \node[state] (r2) at (4,0) {$\Unit$};
        \node[state] (r3) at (6,0) {$\Cnil$};

\path[->]
(r) edge node [above] {$\tau$} (r1)
(r1) edge node [above] {$?( \sigma )$} (r2)
(r2) edge node [above] {$\ok$} (r3);
      \end{tikzpicture}
      \\[1em]
      \begin{tikzpicture}
        \node[state] (s) at (0,0) {$\stdual{\sigma}$};
        \node[state] (s1) at (2,0) {$ !( \stdual{\sigma} ).\Unit $};
        \node[state] (s2) at (4,0) {$\Unit$};
        \node[state] (s3) at (6,0) {$\Cnil$};
\path[->]
(s) edge node [below] {$\tau$} (s1)
(s1) edge node [below] {$!( \stdual{\sigma} )$} (s2)
(s2) edge node [below] {$\ok$} (s3);
      \end{tikzpicture}
    \end{tabular}
  \end{center}
If $\B$ is reasonable, then $ !( \stdual{\sigma} )  \not \bowtie_{\B}?( \sigma )$, and so $\sigma$ and $\stdual{\sigma}$ are not in $\B$-mutual compliance.
  \caption{The behaviours of the contracts $\rho$ and $\sigma$ of \rexa{issues-stdual}}
  \label{fig:lts-comp-wrong}
  \hrulefill
\end{figure}

\begin{exa}
\label{exa:issues-stdual}
In general it is not true that 
$\sigma$ complies with its dual $\stdual{\sigma}$; if 
$\B$ is reasonable then we can find a contract $\sigma$ such that   
$\sigma \Npeercmpl[\B] \stdual{\sigma}$.

For example take $\sigma$ to be  $\Rec[x]{?(x).\Unit}$; here
$\stdual{\sigma}$ is $ \Rec[x]{!(x).\Unit}$. The behaviour of these
contracts is depicted in \rfig{lts-comp-wrong}.
Since $\unfold{\sigma}$ performs inputs, while
$\unfold{\stdual{\sigma}}$ performs outputs, 
and $\B$ is reasonable, one sees that 
$( \unfold{\stdual{\sigma}}, \unfold{\sigma} ) \not \in {\B}  $, and in turn 
$(\stdual{\sigma}, \sigma) \not \in {\B} $.
This implies that $ !( \stdual{\sigma} ) \not \bowtie_{\B} ?( \sigma
)$, and so $ \sigma \Par \stdual{\sigma} \ar[\B]{\tau} \unfold{\sigma} \Par \unfold{\stdual{\sigma}} \nar[\B]{\tau}$.
But this means that $ \sigma \Npeercmpl[\B] \stdual{ \sigma }$ because 
$\unfold{\sigma}$ does not perform  $\ok$.
\end{exa}

\subsection{Message-closed contracts}

In this  section we give a restriction on contracts which ensures that they do indeed comply with 
their duals. The essential idea is that terms used as messages should have no free occurrences of 
recursion variables. 

\begin{defi}\name{Message-closed}
\label{def:Message-closed}\\
For any $\sigma \in L_\sc$ we say that it is \emph{message-closed}, or \emph{m-closed}, whenever
a sub-term of the form $?(\sigma^m).\sigma'$ or  $!(\sigma^m).\sigma'$ occurs in the main body of
$\sigma$ then $\sigma^m$ is 
a closed contract, that is, is in $\sc$. More formally:
\begin{enumerate} 
\item The terms $\Unit$ and $x$ are \emph{m-closed}

\item The term $\Rec[x]{\sigma'}$ is \emph{m-closed} if $\sigma'$ is \emph{m-closed}.

\item The terms $!(\sigma^m).\sigma'$ and $?(\sigma^m).\sigma'$ are \emph{m-closed} if
      $\sigma'$ is \emph{m-closed} and $\sigma^m$ is closed. 

\item The terms $ \extsum_{i \in I}?\gt{l}_i.\sigma_i$ and 
        $ \intsum_{i \in I}!\gt{l}_i.\sigma_i$ are \emph{m-closed} if 
     all $\sigma_i$ are \emph{m-closed}. 
\end{enumerate}
\end{defi}
\noindent
It is important to note the \emph{m-closed} is quite a strong condition; if $\sigma$ is
closed then it does not automatically follow that it is \emph{m-closed}. As a counterexample
we can take the contract used in Example~\ref{exa:issues-stdual},
$\Rec[x]{?(x).\Unit}$.

The crucial property of \emph{m-closed} terms is that the dual function $\stdual{ -  }$ 
is preserved by substitutions. This is expressed in the following lemma where we use 
$\stdual{s}$ to denote the substitution which maps each variable $X$ to $\stdual{s(X)}$. 
\begin{lem}
\label{lem:subs-stdual}\qquad
  \begin{enumerate}[label=(\roman*)]
  \item 
  Suppose $\sigma \in L_\sc$ is   \emph{m-closed}.
Then 
\begin{enumerate}[label=\({\alph*}]
\item \label{pt:stdual-distributes}
$\stdual{(\sigma \subsS) } = (\stdual{\sigma}) \stdual{\subsS}$

\item \label{pt:stdual-preserves-mclosed}
$\stdual{\sigma}$ is \emph{m-closed}

\item \label{pt:application-subs}
if $\subsS(x)$ is \emph{m-closed} for every $x \in \dom{\subsS}$ then
$\sigma \subsS$ is also \emph{m-closed}. 
\end{enumerate}

\item \label{pt:lts-moves-preserve-mclosed}
For every \emph{m-closed} $\sigma \in \sc$ and $\mu \in 
\Acttt$, $\sigma \ar{\mu} \sigma'$ implies $\sigma'$ is also \emph{m-closed}. 
  \end{enumerate}
\end{lem}
\begin{proof}
  Part (i) is proved by structural induction on $\sigma$. Part (ii) uses rule induction on 
the judgements  $\sigma \ar{\mu} \sigma'$; the case when $\sigma$ has the form
$\Rec[x]{\sigma_1}$ relies on Part (i) (c). 
\end{proof}
\noindent
The requirement that $\sigma$ be \emph{m-closed} in
Lemma~\ref{lem:subs-stdual} is essential. Again the contract 
$\sigma = \Rec[x]{?(x).\Unit}$,
used
in Example~\ref{exa:issues-stdual}, provides a counterexample. 
Let $\subsS$ be the substitution 
$\Lsubs{x}{\sigma}$ and let $\sigma'$ be the body of the
recursive definition, $?(x).\Unit$, which is not \emph{m-closed}. 
Then $\stdual{(\sigma' \subsS) }$ is the contract $!(   \Rec[x]{?(x).\Unit}   ).\Unit$ whereas,
since $\stdual{\sigma} = \Rec[x]{!(x).\Unit}$, 
$ (\stdual{\sigma'}) \stdual{\subsS}$ is the different contract 
$!(  \Rec[x]{!(x).\Unit}   ).\Unit$.

\begin{lem}
\label{lem:mclosed-impl-stdual-unfold-commute}
For every $\sigma \in \sc$, if $\sigma$ is m-closed then
$ \stdual{ \unfold{ \sigma}} = \unfold{\stdual{ \sigma }}$.
\end{lem}
\begin{proof}
Let $\unfold{ \sigma } = \rho$.
We have to show that 
\begin{align}\label{eq:du}
  \stdual{\rho} = \unfold{\stdual{\sigma}}
\end{align}

\noindent We reason by rule induction on the derivation of the 
judgement $\unfold{\sigma} = \rho$.
The base case is when $\sigma$ is not a recursive term, in which 
case $\rho$ coincides with $\sigma$ itself. Examining Definition~\ref{def:dual}
it is obvious that if $\stdual{\sigma}$ is a recursive term then so is
$\sigma$; from this we conclude that  $\unfold{\stdual{\sigma}}$
is $\stdual{\sigma}$ itself, from which the required (\ref{eq:du}) is
trivially true.

The inductive case is when $\sigma$ has the form $\Rec[x]{\sigma'}$
and $\unfold{\sigma} = \rho$ because 
$$\unfold{\sigma'\Lsubs{ x }{\sigma}}  = \rho $$
The hypothesis that $\sigma$ is m-closed, and \rdef{Message-closed}
imply that also $\sigma'$ is m-closed, thus \rpt{application-subs} 
of \rlem{subs-stdual} lets us prove that  $\sigma'\Lsubs{ x }{\sigma}$
is m-closed. Thanks to this property of $\sigma'\Lsubs{ x }{\sigma}$
we apply the inductive hypothesis to prove that
$\stdual{\rho} = \unfold{\stdual{\sigma'\Lsubs{ x}{\sigma}}}$. So the
required  (\ref{eq:du}) will follow if we show 
  $\unfold{\stdual{\sigma}} = 
\unfold{\stdual{\sigma'\Lsubs{x}{\sigma}}}$.
Since $\sigma$ and $\sigma'$ are m-closed, \rpt{application-subs} of
\rlem{subs-stdual} ensures that
\begin{equation}
  \label{eq:commutativity-crux}
  \stdual{\sigma' \Lsubs{x}{\sigma}} =
  \stdual{ \sigma' } \Lsubs{x}{ \stdual{\sigma}}
\end{equation}
The equality we are after is now easy to prove,
$$
\begin{array}{llll}
\unfold{ \stdual{ \sigma }} & = & \unfold{  \Rec[x]{ \stdual{\sigma'}
  }} & \text{Because } \stdual{\sigma} = \Rec[x]{ \stdual{\sigma'}}\\
& = & 
\unfold{ \stdual{ \sigma'}
  \Lsubs{x }{ \Rec[x]{\stdual{ \sigma' }}} } &\text{By definition of
}\unfoldSym\\
& = & \unfold{ \stdual{ \sigma'}
  \Lsubs{x }{ \stdual{ \sigma }} } & \text{Because }  
\stdual{ \sigma } = \Rec[x]{\stdual{ \sigma' }}\\
& = & \unfold{ \stdual{ \sigma' \Lsubs{x}{\sigma} } } &\text{Because of } (\ref{eq:commutativity-crux})
\end{array}
$$
\end{proof}
\noindent
This last result implies the main property
of $\stdual{-}$ over {\em m-closed} contracts:
given a m-closed peer $\rho$, its dual $\stdual{\rho}$
has indeed a complementary behaviour, that is $\stdual{\rho}$
is in compliance with $\rho$, with respect to any preorder $\B$.

\begin{lem}
\label{lem:stduals-interact-or-happy}
  Suppose $\B$ is a reflexive relation.
  For every session contract $\rho$,
  we have that either $\rho \ar{ \ok }$
  or
  $ \rho \Par \stdual{\rho} \ar[\B]{\tau}$.
\end{lem}
\begin{proof}
  To prove the lemma we assume that $\rho \nar{\ok}$,
  and show that $ \rho \Par \stdual{\rho} \ar[\B]{\tau}$.
  Either $\rho \ar{\tau}$ or $\rho$ is stable.
  In the first case we apply \rname{I-Left} to derive
  the desired $ \rho \Par \stdual{\rho} \ar[\B]{\tau}$.
  Suppose now that $\rho$ is stable. The argument proceeds by case
  analysis on the shape of $\rho$, which, in view of our
  assumption, cannot be $\Unit$ and has no top-most recursion.
  We discuss only three cases.
  If $ \rho = {!( \sigma ).\rho'}$,
  then $\stdual{ \rho } = {?( \sigma ).\stdual{ \rho' }}$.
  Since by hypothesis $\B$ is reflexive, $ !(\sigma) \bowtie_{\B} ?(\sigma)$, 
  thus we apply \rname{I-Synch}
  to derive the required $ \rho \Par \stdual{\rho} \ar[\B]{\tau} $.
\leaveout{  If $ \rho = \#( \sigma ).\Unit$ the argument is analogous to the
  previous one.}
  Suppose now that 
$\rho = \intsum_{i \in I}
  ?\gt{l}_i . \rho_i$. Since $\rho$ is stable, it must be the case
  that $| I | = 1$, that is $I = \sset{ k }$ for some $k$.
  We know by definition that $\stdual{ \rho } = ?\gt{l}_k . \stdual{\rho_k} $,
  and that $!\gt{l}_k \bowtie_{\B} ?\gt{l}_k$, thus we apply rule
  \rname{I-Synch}
  to infer the hand-shake  $\rho \Par \stdual{\rho} \ar[\B]{\tau}$.
\end{proof}

\begin{thm}
\label{thm:standard-duality-comply}
Suppose $\B$ is a preorder. 
Then $\rho \peercmpl[\B] \stdual{\rho}$ for every 
\emph{m-closed} session contract $\rho$.
\end{thm}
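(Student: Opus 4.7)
The proof will proceed by coinduction: I exhibit a relation $\R$ containing every pair $(\rho,\stdual{\rho})$ with $\rho$ m-closed, and verify $\R \subseteq \FunCpeer(\R,\B)$ per \rdef{peercmpl}. The naive choice $\setof{(\sigma,\stdual{\sigma})}{\sigma \text{ m-closed}}$ is not closed under $\ar[\B]{\tau}$ because rule \rname{a-Int} breaks syntactic duality: starting from a dual pair whose left side is $\bigoplus_{i\in I}!\gt{l}_i.\sigma_i$ and whose right side is $\extsum_{i\in I}?\gt{l}_i.\stdual{\sigma_i}$, one reduces to $(!\gt{l}_k.\sigma_k,\extsum_{i\in I}?\gt{l}_i.\stdual{\sigma_i})$, a pair that no longer fits the dual template. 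I therefore enlarge $\R$ to comprise three kinds of pairs: the happy pair $(\Unit,\Unit)$; dual pairs $(\sigma_1,\sigma_2)$ in which $\sigma_1$ is m-closed and $\unfold{\sigma_2}=\stdual{\unfold{\sigma_1}}$; and selected pairs $(!\gt{l}_k.\sigma_k,\extsum_{i\in I}?\gt{l}_i.\stdual{\sigma_i})$ together with their symmetric counterparts, where $k\in I$ and each $\sigma_i$ is m-closed.

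For condition (\ref{pt:peercmpl-derivatives}) of \rdef{peercmpl} I do a case analysis on the rule of \rfig{opsem} or \rfig{interaction} producing the $\tau$-move. An \rname{a-Unfold} step by $\sigma_1$ leaves $\unfold{\sigma_1}$ invariant, so by \rlem{mclosed-impl-stdual-unfold-commute} the dual-pair condition is preserved; symmetrically for $\sigma_2$. An \rname{a-Int} move converts a dual pair into a selected pair. A synchronization from a dual pair, whatever the prefix being exchanged (base-type I/O, higher-order I/O, or label interaction), yields continuations of the form $(\sigma'_1,\stdual{\sigma'_1})$, because the dual operator does not alter messages and because reflexivity of $\B$ (guaranteed since $\B$ is a preorder) discharges the side-condition $\sigma^m\bowtie_\B \sigma^m$ on higher-order matches. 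A synchronization from a selected pair occurs on the chosen label $\gt{l}_k$ and brings us to the dual pair $(\sigma_k,\stdual{\sigma_k})$. Throughout, \rlemP{subs-stdual}{lts-moves-preserve-mclosed} keeps the relevant m-closedness invariants.

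For condition (\ref{pt:peercmpl-stuck-ok}), observe that selected pairs are never $\B$-stable: the output prefix $!\gt{l}_k$ and the matching external branch $?\gt{l}_k$ satisfy $!\gt{l}_k\bowtie_\B ?\gt{l}_k$ unconditionally, enabling \rname{I-Synch}. For a stable dual pair, both components must be stable, so $\sigma_i=\unfold{\sigma_i}$ and $\sigma_2=\stdual{\sigma_1}$; \rlem{stduals-interact-or-happy} then forces $\sigma_1\ar{\ok}$, which by the grammar means $\sigma_1=\Unit$ and hence $\sigma_2=\Unit$. The happy pair trivially satisfies both prescribed $\ok$-actions.

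The main obstacle is identifying the right set of reachable pairs and checking the internal-choice case, where syntactic duality is locally broken; the crucial insight is that the broken pair is always immediately reducible by a label synchronization and returns to a dual pair in one step. Reflexivity of $\B$, which comes for free from the hypothesis that $\B$ is a preorder, is what makes higher-order synchronizations of dual pairs fire; without it, as \rexa{issues-stdual} shows, the standard dual would not suffice.
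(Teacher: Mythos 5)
Your proof is essentially the paper's: the paper takes $\R$ to be the set of all pairs of independent $\tau$-derivatives of $(\rho,\stdual{\rho})$ for m-closed $\rho$, which is precisely your three classes of pairs, and it rests on the same ingredients — commutation of $\stdual{-}$ with $\unfold{-}$ on m-closed terms (\rlem{mclosed-impl-stdual-unfold-commute}), preservation of m-closedness under transitions, \rlem{stduals-interact-or-happy} for the stuck case, and reflexivity of $\B$ for the higher-order synchronisations. One small repair: because the two components unfold asynchronously, an \rname{a-Int} step can produce a pair whose right component is still folded, e.g.\ $(!\gt{l}_k.\sigma_k,\,\sigma_2)$ with $\unfold{\sigma_2}=\extsum_{i\in I}?\gt{l}_i.\stdual{\sigma_i}$ but $\sigma_2$ a recursive term, so your ``selected'' class must be stated up to unfolding of the non-committed component, exactly as you already do for the dual pairs.
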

\begin{proof}
Let 
$$
    {\R} = \setof{ (\rho_1, \rho_2) }
    { \rho \arstar{\tau} \rho_1,\ \stdual{\rho} \arstar{\tau} \rho_2\, \text{for every  \emph{m-closed} } \rho}
$$
It is sufficient to show the set inclusion $ {\R} \subseteq \FunCpeer( \R, \B )$ for any preorder $\B$.
Pick a pair of contracts $(\rho_1, \, \rho_2)$ in the relation $\R$. The construction of $\R$ ensures that
there exists a contract $\rho$ which is m-closed, and such that 
\begin{equation}\label{eq:tau-steps}
\rho \arstar{\tau} \rho_1, \qquad \stdual{\rho} \arstar{\tau} \rho_2
\end{equation}

\rdef{peercmpl} requires us to prove two properties of the pair $(\rho_1, \, \rho_2)$,
depending on whether the composition $ \rho_1 \Par \rho_2 $ is stable or not.
First assume that 
\begin{equation}
  \label{eq:stable}
  \rho_1 \Par \rho_2 \nar[\B]{\tau} 
\end{equation}
We have to show that
$\rho_1 \ar{ \ok }$ and that $\rho_2 \ar{ \ok }$.
The assumption $\rho_1 \Par \rho_2 \nar[\B]{\tau}$ ensures that
$\rho_1 \nar{\tau}$; a property of the unwind function, given in 
Lemma~\ref{lemma:unfold} of the Appendix, then ensures that
$$\unfold{\rho} \arstar{\tau} \rho_1$$
Similar reasoning 
establishes 
$\unfold{\stdual{\rho}} \arstar{\tau} \rho_2$. 
In fact by Lemma~\ref{eq:commutativity-crux} the latter may be rewritten 
as 
$\stdual{\unfold{\rho}} \arstar{\tau} \rho_2$. 
Introducing $\sigma$ to denote $\unfold{\rho}$ 
we have now established 
\begin{equation}
\label{eq:unfolds-reduce}
 \sigma \arstar{\tau} \rho_1 \;\,\text{and}\,\;
 \stdual{ \sigma } \arstar{\tau} \rho_2
\end{equation}

\noindent There is very little scope for performing $\tau$ transitions in 
(\ref{eq:tau-steps}) above, and even less in 
(\ref{eq:unfolds-reduce}), since $\sigma$ cannot be a 
recursive term of the form $\Rec[x]{\ldots}  $. 
In fact because of the assumption
(\ref{eq:stable}) we can show that
\begin{equation}\label{eq:notaus}
  \sigma = \rho_1 \;\,\text{and}\,\; \stdual{\sigma} = \rho_2 
\end{equation}
The proof of this fact proceeds by a case analysis on the
syntactic structure of $\sigma$. The only possibility for generating
a $\tau$ transition in (\ref{eq:unfolds-reduce}) is when either 
$\sigma$ or its dual $\stdual{\sigma}$ is an internal choice
$\setof{ !\gt{l}_i.\sigma_i }{ i \in I }$, where $|I| > 1$.
Without loss of generality suppose the former. 
Then  $\stdual{\sigma}$ is
$\extsum_{I} ?\gt{l}_i.\sigma_i $ and 
$\rho_1 \Par \rho_2$ must take the form 
$ {!\gt{l}_k.\sigma_k} \Par \extsum_{I} ?\gt{l}_i.\sigma_i$,
for some $k \in I$. 
An application of  \rname{I-Synch} gives a $\tau$ transition, 
$\rho_1 \Par \rho_2 \ar[\B]{\tau}$, thereby contradicting the
assumption (\ref{eq:stable}).

An application of \rlem{stduals-interact-or-happy}, together with the
just established (\ref{eq:notaus}), immediately gives  lets us show $ \rho_1 \ar{ \ok
}$,
from which $\rho_2 \ar{ \ok }$ also follows, since 
$\rho_2$ is $ \stdual{\sigma}$.

Now suppose that $ \rho_1 \Par \rho_2 \ar[\B]{ \tau } \rho'_1 \Par \rho'_2$,
we have to prove that  $\rho'_1 \R \rho'_2$. This
will follow if we exhibit a m-closed contract $\hrho$ such that
$\hrho \arstar{\tau} \rho'_1$,
and $\stdual{ \hrho } \arstar{\tau} \rho'_2$.
We reason by case
analysis on the rule of \rfig{interaction} used to infer the silent move
$\rho_1 \Par \rho_2 \ar[\B]{ \tau } \rho'_1 \Par \rho'_2$.

If \rname{I-Left} was applied, then $ \rho_1 \ar{\tau} \rho'_1 $
and $\rho'_2 = \rho_2$. The desired $\hrho$ is the contract $\rho$
itself. If rule \rname{I-Right} again the $\hrho$ we are after is $\rho$
itself.

The last case to discuss is when
$ \rho_1 \Par \rho_2 \ar[\B]{ \tau } \rho'_1 \Par \rho'_2$ is inferred
applying rule \rname{I-Synch}. In this case
$\rho_1 \ar{ \lambda_1 } \rho'_1$, $\rho_2 \ar{\lambda_2} \rho'_2$,
and $\lambda_1 \bowtie_{\B} \lambda_2 $.
We have already proven that the contract $\rho_1$ 
is m-closed, thus \rPt{lts-moves-preserve-mclosed} of
\rlem{subs-stdual} imply that $\rho'_1$ 
is m-closed. We pick as candidate $\hrho$ the contracts $\rho'_1$.
To finish the proof it suffices to show that $\rho'_2 = \stdual{
  \rho'_1 }$. 
We proceed by case analysis on $\lambda_1$, and there are four
cases to discuss, for $\lambda_1$ is either an input or an output,
\leaveout{or a $\#$ }action, and there are two 
subcases depending on the action being first-order or higher-order.
\leaveout{
If $\lambda_1$ is an input or an output there are
other two subcases on the actions being first-order or higher-order.
}
We discuss only two cases involving higher-order actions and one
involving first-order actions.

Suppose that $\lambda_1 = {?( \sigma_1 )}$. In view of the restrictive
syntax of contracts, it must be the case that $ \rho_1 = {?( \sigma
).\rho'_1 }$. Now $\rho \arstar{\tau} \rho'_1$ implies that 
$\unfold{ \rho } = {?( \sigma ).\rho'_1 }$.

Since $\lambda_1 \bowtie_{\B} \lambda_2$,  $\lambda_2 = {!( \sigma' )}$
for some $\sigma'$, and $\rho_2 = {?( \sigma' ).\rho'_2}$. The last
fact and $\stdual{ \rho } \arstar{ \tau } \rho_2$ imply that
$\unfold{ \stdual{\rho} } = \rho_2 $.
An application of \rlem{mclosed-impl-stdual-unfold-commute} lets
us obtain that $\rho_2 = \unfold{ \stdual{ \rho } } = {!( \sigma ).\stdual{
  \rho'_1 }}$. It follows that $ \rho'_2 = \stdual{ \rho'_1 }$,
as required.

If $\lambda_1 = {!( \sigma )}$, 
or $\lambda_1 = {!\gt{ t }}$, or $\lambda_1 = {?\gt{ t }}$,
the argument is analogous to the previous one.

Suppose now that $\lambda_1 = {! \gt{l}} $.
It must be the case
that $\unfold{ \rho } = \intsum_{i \in I} ! \gt{l}_i. \rho_i$, and for some
$k \in I$, $\rho_1 = {!\gt{l}_k.\rho_k}$.
The construction of $\R$ ensures that 
\begin{equation}
\label{eq:stdual-peercmpl-aux}
\stdual{ \rho } \arstar{\tau} \rho_2
\end{equation} and this implies that $\unfold{ \stdual{ \rho }}
\arstar{\tau} \rho_2$. By definition $\stdual{ \unfold{ \rho } } = 
\extsum_{i \in I} ? \gt{l}_i. \stdual{ \rho_i }$, thus
commutativity (\rlem{mclosed-impl-stdual-unfold-commute}),
ensures that $ \unfold{ \stdual{ \rho } } = 
\extsum_{i \in I} ? \gt{l}_i. \stdual{ \rho_i }$. Observe that
$ \unfold{ \stdual{ \rho } } \nar{ \tau}$, thus
\req{stdual-peercmpl-aux}
above implies that $ \rho_2 = \unfold{ \stdual{ \rho } }$, so
$\rho_2 = \extsum_{i \in I} ? \gt{l}_i. \stdual{ \rho_i }$.
As $\rho_2 \ar{ \lambda_2 } \rho'_2$ and $!\gt{ l }_k \bowtie_{\B}
\lambda_2$,
it follows that $\lambda_2 = {?\gt{l}_k}$, and $\rho'_2 = \stdual{
  \rho_k }$. The equality $\rho_k = \rho'_1$ now lets us conclude that
$ \rho'_2 = \stdual{ \rho'_1}$, which is the fact we were after.
\end{proof}

\subsection{Extension to arbitrary session contracts}

Our intention here is use \rthm{standard-duality-comply}
to find a complement for all session contracts. 
This is achieved in two steps. First for each $\sigma \in \sc$ we
construct
a \emph{behaviourally equivalent} contract $\mcl{\sigma}$ which is
\emph{m-closed}. 
The required complement of $\sigma$, which we will denote by
$\prdual{\sigma}$ will be taken to be the standard dual of 
$\mcl{\sigma}$.

\begin{defi}\name{M-closure}\label{def:dual}\\
For any $\sigma \in L_\sc$ 
and any $\subsS$ such that $\fv{\sigma} \subseteq \dom{\subsS}$ the
term
$\mclo{\sigma}{\subsS}$ is defined by structural induction as follows:
$$
\mclo{\sigma}{\subsS} =
\begin{cases}
  \Unit & \text{if } \sigma = \Unit,\\
  x & \text{if } \sigma = x,\\
  !\gt{t}.\mclo{\sigma'}{\subsS} & \text{if } \sigma = {!\gt{t}.\sigma'},\\
  ?\gt{t}.\mclo{\sigma'}{\subsS} & \text{if } \sigma = {?\gt{t}.\sigma'},\\
  !(\sigma^m\, \subsS).\mclo{\sigma'}{\subsS} & \text{if } \sigma = {!(\sigma^m).\sigma'},\\
  ?(\sigma^m\, \subsS).\mclo{\sigma'}{\subsS} & \text{if } \sigma = {?(\sigma^m).\sigma'},\\
%
%
  \extsum_{i \in I}!\gt{l}_i.\mclo{\sigma_i}{\subsS} & \text{if } \sigma =
  \extsum_{i \in I}?\gt{l}_i.\sigma_i,\\
  \intsum_{i \in I}?\gt{l}_i.\mclo{\sigma_i}{\subsS} & \text{if } \sigma =
  \intsum_{i \in I}!\gt{l}_i.\sigma_i,\\
  \Rec[x]{\mclo{\sigma'}{\subsS \cdot \Lsubs{x}{\sigma\subsS}}} & \text{if } \sigma = \Rec[x]{\sigma'}
\end{cases}
$$
Note that in the last clause the substitution 
$\subsS \cdot \Lsubs{x}{\sigma\subsS}$
is well-defined, as $\sigma\subsS$ is closed.

For $\sigma \in \sc $ we let $\mcl{\sigma}$, called the  \emph{m-closure} of
$\sigma$,  denote 
$\mclo{\sigma}{\varepsilon}$.
\end{defi}

The intuition behind $\mclo{\sigma}{\subsS}$ is that an m-closed term
equivalent to $\sigma$, can be constructed by (1) keeping track in
the accumulator $\subsS$ of all the substitutions that take place
unfolding $\sigma$, and by (2) applying these substitutions only
to the messages of $\sigma$, and not the continuations.

\begin{exa}
\label{exa:applications-mcl}
  In this example we apply the function
  $\mcl{-}$ to two session contracts.
  The first is the contract $\rho = \Rec[x]{?(x).\Unit}$ 
  that we already used in \rexa{issues-stdual}.
  By definition we have the equalities
  $$
  \begin{array}{lll}
    \mcl{ \rho }
    & = &   \mclo{ \Rec[x]{?(x).\Unit} }{ \varepsilon }\\
    & = &   \Rec[x]{ \mclo{?(x).\Unit}{ \Lsubs{ x }{ \rho } } }\\
    & = &   \Rec[x]{ ?(\rho ).\mclo{\Unit}{ \Lsubs{ x }{ \rho } }
    }\\
    & = &   \Rec[x]{ ?(\rho ). \Unit }\\
  \end{array}
  $$
  Since $\rho$ is closed, the contract $\mclo{ \rho }{ \varepsilon }$ is
  m-closed.

  Now we apply $\mcl{ - }$ to a more involved contract,
  namely $\sigma = \Rec[x]{\Rec[y]{ ?(y).x }}$.
  The following equalities are true by definition,
$$
\begin{array}{lll}
  \mcl{ \sigma }
  & = & \mclo{ \Rec[ x ]{ \Rec[ y ]{ ?( y ) .x } } }{ \varepsilon } \\
  & = &\Rec[ x ]{( \mclo{ \Rec[ y ]{ ?( y ). x } }{ \Lsubs{ x }{
        \sigma } } )}\\
  & = &\Rec[x]{\Rec[y]{ ( \mclo{  ?( y ).x }{ \subsS } )}}\\
  & = &\Rec[x]{\Rec[y]{ ?( \Rec[ y ]{ ?(y).\sigma } ). \mclo{ x }{ \subsS } }}\\
  & = &\Rec[x]{\Rec[y]{ ?( \Rec[ y ]{ ?(y).\sigma } ). x }}\\
  & = &\Rec[x]{\Rec[y]{ ?( \Rec[ y ]{ ?( y ).\sigma } ). x }}
\end{array}
$$
where $\subsS = \LLsubs{x}{ \sigma}{y }{ \Rec[ y ]{ ?(y).\sigma } }$.
Since $\sigma$ is closed also the contract $ \Rec[ y ]{ ?( y ).\sigma}$ is closed, thus the contract 
$\mclo{ \sigma }{\varepsilon}$ is m-closed.
\end{exa}
\leaveout{
 \leaveout{ 
  Recall the type $T$ of \rexa{no-commutativity},
  \begin{align}
    \label{eq:ex-badcomp}
    \badcomp{ T } & =   
    \Rec[X]{\Rec[Y]{ \sout{ \Rec[Y]{ \sinp{Y}{ X } } }{ X }}}
    \\[.5em]
    \label{eq:ex-comp}
    \comp{ T } & = 
    \Rec[X]{\Rec[Y]{ \sout{ \Rec[Y]{ \sinp{Y}{T} } }{ X }}}
  \end{align}
We have shown (\ref{eq:ex-badcomp}) in \rexa{no-commutativity},
while we obtain the equality in (\ref{eq:ex-comp}) as follows,
}
The difference between $\badcomp{ T }$ and $\comp{ T }$
lies in the messages of these types.
The message of $\badcomp{ T }$ is  $ \Rec[Y]{ \sinp{Y}{ X
  } } $, and it contains a free occurrence of $X$, whereas the message of
$\comp{ T }$ is $\Rec[Y]{ \sinp{Y}{T} }$, and it is a closed term.
This means that the term $\unfold{ \comp{ T } }$ contains the same messages
of $\comp{ T }$, and this is the reason why the 
functions $\unfoldS$ and $\compS$ commute.
}

\begin{lem}\label{lem:m-closed}
  Suppose $\fv{\sigma} \subseteq \dom{\subsS} $. Then 
  $\mclo{\sigma}{\subsS}$ is \emph{m-closed}.
\end{lem}
\begin{proof}
  By structural induction on $\sigma$. 

  First suppose $\sigma$ has the form $\Rec[x]{\sigma'}$.
  By definition 
    $\mclo{\sigma}{\subsS} = \Rec[x]{\mclo{\sigma'}{ \subsS \cdot \Lsubs{x}{\sigma \subsS}}}$.
    Since $\dom{ \subsS} \subseteq \dom{ \subsS \cdot \Lsubs{x}{\sigma \subsS} }$,
    also the inclusion
    $\fv{ \sigma' } \subseteq \dom{  \subsS \cdot \Lsubs{x}{\sigma \subsS} }$ is true,
    and we apply induction to conclude that 
    $\mclo{ \sigma' }{ \subsS \cdot \Lsubs{x}{\sigma \subsS}}$ is \emph{m-closed};
    by definition this means $\sigma$ is \emph{m-closed}. 

\leaveout{
By definition 
$\mclo{\sigma}{\subsS} = \Rec[x]{\mclo{\sigma'}{ \subsS \cdot \Lsubs{x}{\sigma}}}$.
Since $\fv{\subsS \cdot \Lsubs{x}{\sigma}} \subseteq \sigma'$ we can
apply induction to conclude that 
$\mclo{\sigma'}{ \subsS \cdot \Lsubs{x}{\sigma}}$ is \emph{m-closed};
by definition this means $\sigma$ is \emph{m-closed}. 
}

As another case suppose $\sigma$ has the form 
$!(\sigma^m). \sigma'$. 
Here $\mclo{\sigma}{\subsS}$ is 
$!(\sigma^m \subsS). \mclo{\sigma'}{\subsS}$. 
Induction gives that $\mclo{\sigma'}{\subsS}$ is \emph{m-closed}, and 
since $\fv{\sigma^m} \subseteq \fv{\subsS} $ we know 
$\sigma^m \subsS$ is closed; this means that by definition
$\sigma$ is \emph{m-closed}. 

All remaining cases are either similar, or trivial. 
\end{proof}

Because of \leaveout{this Lemma} \rlem{m-closed}
we know from
Theorem~\ref{thm:standard-duality-comply} that $\stdual{\mcl{\rho}}$
complies with $\mcl{\rho}$ for every session contract $\rho$. We now
show that that $\mcl{\rho}$ and $\rho$ are \leaveout{essentially} behaviourally
equivalent, in that they comply with exactly the same contracts.  This
involves first establishing a sequence of technical lemmas.

\begin{lem}\label{lem:closed-mclo}
For every $\sigma \in L_{\sc}$, and substitutions $\subsS_1, \subsS_2$,
if
$\subsS_1(x) = \subsS_2(x)$ for every $x \in \fv{\sigma}$,
then $\mclo{\sigma}{\subsS_1} = \mclo{\sigma}{\subsS_2}$, whenever both
are defined.
\leaveout{
  Suppose $\subsS_1(x) = \subsS_2(y)$ for every $x \in \fv{\sigma}$.
Then $\mclo{\sigma}{\subsS_1} = \mclo{\sigma}{\subsS_2}$, whenever both
are defined.
}
\end{lem}
\begin{proof}
  Straightforward by structural induction on $\sigma$. 
\end{proof}

Given a substitution $\subsS = [ x_1 \mapsto \sigma_1, \ldots, x_n \mapsto \sigma_n ]$,
we let $\mcl{ \subsS } = [ x_1 \mapsto \mcl{\sigma_1}, \ldots, x_n \mapsto \mcl{\sigma_n} ]$.
\begin{prop}\label{prop:mclo-subs}
Let $\sigma \in L_{\sc}$, and suppose that 
$\fv{\sigma} \subseteq \dom{\subsS_1} \cup \dom{\subsS_2} $. Then 
$\mclo{\sigma \subsS_1}{\subsS_2} = \mclo{\sigma}{ \subsS_2 \cdot \subsS_1} \, \mcl{\subsS_1}$.
\end{prop}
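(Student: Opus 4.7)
The plan is to proceed by structural induction on $\sigma$. The main tool will be \rlem{closed-mclo}: it lets me freely replace any accumulator by one that agrees with it on the free variables of the term being m-closed, which is what absorbs the mismatch between the two different composite substitutions that arise in the recursive case. Throughout the argument I may assume, after alpha-renaming, that bound variables are chosen fresh for the ambient substitutions; one also checks at each step that the side condition $\fv{-} \subseteq \dom{-}$ required by \rdef{dual} is preserved.

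The base case $\sigma = \Unit$ is immediate, since both sides reduce to $\Unit$. For $\sigma = x$ there are two subcases. If $x \in \dom{\subsS_1}$, then the left-hand side is $\mclo{\subsS_1(x)}{\subsS_2}$; because $\subsS_1(x)$ is closed, \rlem{closed-mclo} equates this with $\mcl{\subsS_1(x)}$, which matches the right-hand side $x \, \mcl{\subsS_1} = \mcl{\subsS_1(x)}$. If instead $x \in \dom{\subsS_2} \setminus \dom{\subsS_1}$, both sides reduce to $x$, using that $x \notin \dom{\mcl{\subsS_1}}$.

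For the remaining non-recursive inductive cases (external and internal sums, first-order and higher-order prefixes) the argument is uniform: expand both sides using \rdef{dual}, apply the induction hypothesis to each continuation, and appeal to two routine facts, namely the associativity $(\sigma' \subsS_1) \subsS_2 = \sigma'(\subsS_2 \cdot \subsS_1)$ of substitution composition, and the observation that any substitution acts as the identity on a closed term. The latter is what controls the message positions: after expansion a message such as $\sigma^m \subsS_1 \subsS_2 = \sigma^m (\subsS_2 \cdot \subsS_1)$ is already closed, so the trailing $\mcl{\subsS_1}$ on the right-hand side leaves it untouched.

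The only genuinely delicate case is the recursive one, $\sigma = \Rec[x]{\sigma'}$. Write $\tau = \Rec[x]{(\sigma' \subsS_1) \subsS_2}$; unfolding both sides and peeling off the common outer $\Rec[x]{-}$ reduces the goal to
\[
\mclo{\sigma' \subsS_1}{\subsS_2 \cdot \Lsubs{x}{\tau}} \;=\; \mclo{\sigma'}{(\subsS_2 \cdot \subsS_1) \cdot \Lsubs{x}{\tau}} \, \mcl{\subsS_1}.
\]
Applying the induction hypothesis to the left-hand side (with accumulators $\subsS_1$ and $\subsS_2 \cdot \Lsubs{x}{\tau}$) rewrites it as $\mclo{\sigma'}{(\subsS_2 \cdot \Lsubs{x}{\tau}) \cdot \subsS_1} \, \mcl{\subsS_1}$, so all that remains is to show
\[
\mclo{\sigma'}{(\subsS_2 \cdot \Lsubs{x}{\tau}) \cdot \subsS_1} \;=\; \mclo{\sigma'}{(\subsS_2 \cdot \subsS_1) \cdot \Lsubs{x}{\tau}}.
\]
I expect this to be the main obstacle: the two composite substitutions are not syntactically equal, but a direct case split on a free variable $y$ of $\sigma'$ (whether $y = x$, $y \in \dom{\subsS_1}$, $y \in \dom{\subsS_2}$, or none of these) shows that both send $y$ to the same closed term, using that the values of $\subsS_1$ and $\tau$ are closed and that $x$ is fresh for $\subsS_1$. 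One final appeal to \rlem{closed-mclo} then yields the equality and closes the recursive case.
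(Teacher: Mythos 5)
Your proof is correct and follows essentially the same route as the paper's: structural induction on $\sigma$, with \rlem{closed-mclo} absorbing the accumulator in the variable case, the closedness of instantiated messages handling the prefix cases, and the recursive case reduced to reconciling $(\subsS_2 \cdot \Lsubs{x}{\tau}) \cdot \subsS_1$ with $(\subsS_2 \cdot \subsS_1) \cdot \Lsubs{x}{\tau}$ after one application of the induction hypothesis. The only cosmetic difference is that you justify that final reconciliation by a pointwise case split on free variables (under an alpha-renaming convention) where the paper invokes the algebraic identity $\Lsubs{x}{\rho}\cdot\SnotX_1 = \subsS_1\cdot\Lsubs{x}{\rho}$ on restricted substitutions; both amount to the same observation.
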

\leaveout{
\MHc{\MHf{To be omitted}MH: Explanation:

\noindent
LHS: First the substitution $\subsS_1$ is applied to $\sigma$; there
may still be free variables in $\sigma \subsS_1$, those in
$\dom{\subsS_2}$. 
 Then $\mclo{-}{\subsS_2}$ is applied.

\noindent
RHS: Here $\mcl{\subsS_1}$ is a substitution. Its domain is the same
as that of $\subsS_1$ and it maps each $x$ in the domain to
$\mcl{\subsS_1(x)}$.
So first $\mclo{-}{ \subsS_2 \cdot \subsS_1}$ is applied to
$\sigma$. There may still be free variables in the resulting term. 
To obtain the LHS we need then to apply the substitution 
$\Lsubs{x}{\mcl{\subsS_1(x)}}$, for each $x \in \dom{\subsS_1}$. 

}}
\begin{proof}
  By structural induction on $\sigma$. We examine the three most interesting cases. 

\begin{enumerate}

\item 
First suppose $\sigma$ is a variable $z$. The hypothesis implies that
$z$ is either in $\dom{\subsS_1}$ or $\dom{\subsS_2}$. 
Suppose that $z \in \dom{\subsS_1}$. The
left hand side is by definition $\mclo{\subsS_1(z)}{ \subsS_2}$ but because 
$\subsS_1(z)$ is always a closed term, by Lemma~\ref{lem:closed-mclo}, this is the same as 
 $\mclo{\subsS_1(z)}{ \varepsilon}$, that is
$\mcl{\subsS_1(z)}$.
This is precisely the  right hand side:
by definition $\mclo{z}{  \subsS_2 \cdot \subsS_1} = z$, so applying the
substitution $\mcl{\subsS_1}$ to $z$ we have 
$\mcl{\subsS_1(z)}$.

On the other hand if $z \in \dom{\subsS_2}$ and $z \not\in \dom{\subsS_1}$ then both 
sides evaluate to the term $z$.

\item 
Suppose $\sigma$ has the form $\Rec[x]{\sigma'}$.
Here 
$\mclo{\sigma}{ \subsS_2 \cdot \subsS_1} = 
\Rec[x]{\mclo{\sigma'}
            { \subsS_2 \cdot \subsS_1 \cdot \Lsubs{x}{\sigma (\subsS_2
              \cdot \subsS_1) }}
         }$.
So the right hand side is equal to
$$
(\Rec[x]{ ( \mclo{\sigma'}{ \subsS_2 \cdot \subsS_1 \cdot
      \Lsubs{x}{\sigma (\subsS_2 \cdot \subsS_1)}}) }\, \,\mcl{\subsS_1}
$$
\leaveout{
\begin{align*}
  &(\Rec[x]{ ( \mclo{\sigma'}{ \subsS_2 \cdot \subsS_1 \cdot
      \Lsubs{x}{\sigma (\subsS_2 \cdot \subsS_1)}}) }\, \,\mcl{\subsS_1}
\end{align*}
}
Applying the definition of substitution we get 
\begin{align}\label{eq:rhs}
  &\Rec[x]{ (\mclo{\sigma'}
                { \subsS_2 \cdot \subsS_1 \cdot \Lsubs{x}{\sigma \subsS_2
              \cdot \subsS_1}}  \mcl{\SnotX_1 })}
\end{align}
\leaveout{
\MHc{\MHf{To be omitted}MH Explanation:
Recall $\mcl{\subsS_1}$ is the substitution which maps $y$ to 
$\mcl{\subsS_1(y)}$ for each $y \in \dom{\subsS_1}$.
So $ \mcl{\SnotX_1 }$ has as domain $\dom{\subsS_1} - \sset{x}$ and
maps each $y$ in the domain to $\mcl{\subsS_1(y)}$. 
}
}
We show that the left hand side can be rewritten in this form also. 
First we apply the substitution $\subsS_1$ to $\sigma$ and get
\begin{align*}
  \sigma\,\subsS_1 &= \Rec[x]{(\sigma'\,\SnotX_1)}
\end{align*}
Then applying the definition of $\mclo{ - }{ - }$ we have 
\begin{align}\label{eq:astep}
\mclo{\sigma \subsS_1}{\subsS_2} &= 
\Rec[x]{(\,\mclo{\sigma'\SnotX_1}{ \subsS_2 \cdot \Lsubs{x}{ (\sigma \subsS_1)
      \subsS_2 }}\,)}
\notag
\\
&=
\Rec[x]{(\,\mclo{\sigma'\SnotX_1}{ \subsS_2 \cdot \Lsubs{x}{\sigma
      (\subsS_2 \cdot \subsS_1)}}\,)}
\end{align}
Since $\fv{\sigma'} \subseteq \dom{\SnotX_1} \cup 
\dom{ \subsS_2 \cdot \Lsubs{x}{\sigma (\subsS_2
              \cdot \subsS_1)} }$
induction lets us infer
\begin{align*}
\mclo{\sigma'\SnotX_1}{ \subsS_2 \cdot \Lsubs{x}{\sigma (\subsS_2
              \cdot \subsS_1)}} =
\mclo{\sigma'} { \subsS_2 \cdot \Lsubs{x}{\sigma (\subsS_2 \cdot \subsS_1)}  \cdot \SnotX_1}\,
\mcl{\SnotX_1}
\end{align*}
%

The substitution $\Lsubs{x}{\rho}  \cdot \SnotX_1$
can also be written as 
$\subsS_1 \cdot \Lsubs{x}{\rho}$, for any $\rho$, 
thus
$$
\subsS_2 \cdot \Lsubs{x}{\sigma (\subsS_2 \cdot \subsS_1)}  \cdot \SnotX_1
= 
\subsS_2 \cdot \subsS_1 \cdot \Lsubs{x}{\sigma (\subsS_2 \cdot \subsS_1)}
$$ 
which means that rewriting (\ref{eq:astep}) above, 
we get
\begin{align*}
\mclo{\sigma \subsS_1}{\subsS_2} =
\Rec[x]{( \mclo{\sigma'}
                { \subsS_2 \cdot \subsS_1 \cdot \Lsubs{x}{\sigma (\subsS_2
              \cdot \subsS_1)}}  \mcl{\SnotX_1 } ) }
\end{align*}
Thus the left hand side
coincides with (\ref{eq:rhs}) above, as required.

\item 
Suppose $\sigma$ has the form 
$?(\sigma^m).\sigma'$. 
By definition $\mclo{\sigma}{\subsS_2 \cdot \subsS_1} = ?(\sigma^m\, (\subsS_2 \cdot \subsS_1)). \mclo{\sigma'}{\subsS_2 \cdot \subsS_1}$
and we know by \rlem{m-closed} that 
$\sigma^m\, (\subsS_2 \cdot \subsS_1)$ is a closed term. 
So the right hand side can be written as
$$
?(\sigma^m\, (\subsS_2 \cdot \subsS_1)). (\mclo{\sigma'}{\subsS_2
  \cdot \subsS_1}
            \,\mcl{\subsS_1})
$$
Here induction can be applied to the residual, and therefore  the right hand side now looks like
 $$
 ?(\sigma^m\, (\subsS_2 \cdot \subsS_1)).\, \mclo{\sigma'\subsS_1}{\subsS_2}
$$
But this is exactly $\mclo{\sigma\subsS_1}{\subsS_2}$, that is the left hand side. 
\qedhere
\end{enumerate}
\end{proof}
\noindent
We will use a specific instance of this Proposition, captured in the following Corollary:
\begin{cor}
 \label{cor:mclo-subs}
\leaveout{
 For $\sigma_1 \in L$ satisfying $\fv{\sigma_1} \subseteq \sset{x}$,
 and $\sigma_2 \in \sc$,
$\mcl{\sigma_1\subst{\sigma_2}{x}} =  \mclo{\sigma_1}{ \Lsubs{x}{\sigma_2}}\, \subst{\mcl{\sigma_2}}{x}$.
}
For every $\sigma_2 \in \sc$, and $\sigma_1 \in L_{\sc}$,
if $\fv{\sigma_1} \subseteq \sset{x}$, then
$$\mcl{\sigma_1\Lsubs{x}{\sigma_2}} =
\mclo{\sigma_1}{ \Lsubs{x}{\sigma_2}}\, \Lsubs{x}{\mcl{\sigma_2}}$$
\end{cor}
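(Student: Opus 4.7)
The plan is to derive this corollary as a direct instantiation of \rprop{mclo-subs}. Specifically, I would apply that proposition with the choices $\sigma := \sigma_1$, $\subsS_1 := \Lsubs{x}{\sigma_2}$, and $\subsS_2 := \varepsilon$ (the empty substitution).

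First I would check the side condition of the proposition: we need $\fv{\sigma_1} \subseteq \dom{\subsS_1} \cup \dom{\subsS_2}$, which in this case is $\fv{\sigma_1} \subseteq \sset{x} \cup \emptyset = \sset{x}$. This is precisely the hypothesis of the corollary. Applying the proposition then yields
\[
\mclo{\sigma_1 \, \Lsubs{x}{\sigma_2}}{\varepsilon} \;=\; \mclo{\sigma_1}{\varepsilon \cdot \Lsubs{x}{\sigma_2}} \; \mcl{\Lsubs{x}{\sigma_2}}.
\]

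Next I would simplify both sides using elementary facts about substitutions and the definition of $\mcl{-}$. On the left, $\mclo{-}{\varepsilon}$ is by definition $\mcl{-}$, so the left hand side becomes $\mcl{\sigma_1 \, \Lsubs{x}{\sigma_2}}$. On the right, $\varepsilon \cdot \Lsubs{x}{\sigma_2} = \Lsubs{x}{\sigma_2}$, since composing with the empty substitution is the identity. Finally, the substitution $\mcl{\subsS_1}$ is by construction the substitution obtained by applying $\mcl{-}$ pointwise to the range of $\subsS_1$; for the singleton substitution $\Lsubs{x}{\sigma_2}$ this is simply $\Lsubs{x}{\mcl{\sigma_2}}$. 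Combining these simplifications yields the required equality
\[
\mcl{\sigma_1 \, \Lsubs{x}{\sigma_2}} \;=\; \mclo{\sigma_1}{\Lsubs{x}{\sigma_2}} \; \Lsubs{x}{\mcl{\sigma_2}}.
\]

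There is no real obstacle here: the work is entirely routine bookkeeping on substitution notation, and all the heavy lifting has already been done in \rprop{mclo-subs} (whose proof in turn handles the nontrivial recursive case via the calculation rewriting $\subsS_2 \cdot \Lsubs{x}{\rho} \cdot \SnotX_1 = \subsS_2 \cdot \subsS_1 \cdot \Lsubs{x}{\rho}$). The only minor point to double-check is the convention for $\mcl{\subsS_1}$ when $\subsS_1$ is a singleton, but this is immediate from the definition given just before \rprop{mclo-subs}.
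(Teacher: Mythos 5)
Your proof is correct and is exactly the paper's argument: the corollary is obtained by instantiating \rprop{mclo-subs} with $\subsS_2 = \varepsilon$ and $\subsS_1 = \Lsubs{x}{\sigma_2}$, followed by the routine simplifications you spell out. The paper leaves that bookkeeping implicit, calling it an ``immediate application,'' but the content is identical.
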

\begin{proof}
  An immediate application of  Proposition~\ref{prop:mclo-subs}, with $\subsS_2$ instantiated to the empty
  substitution, and $\subsS_1$ the singleton substitution 
  $\Lsubs{x}{\sigma_2}$. 
\end{proof}

\begin{prop}\label{prop:moves}
  Let $\sigma_1 \in \sc$.
  \begin{enumerate}
  \item   $\sigma_1 \ar{\ok}$ if and only if $\mcl{\sigma_1} \ar{\ok}$, moreover
  \item  for every $\mu \in \Actt$, 
  \begin{enumerate}
  \item $\sigma_1 \ar{\mu} \sigma_2$ implies $\mcl{\sigma_1} \ar{\mu} \mcl{\sigma_2}$
  \item conversely, $\mcl{\sigma_1} \ar{\mu} \sigma'$ implies $\sigma' = \mcl{\sigma_2}$  for
    some $\sigma_2$ satisfying $\sigma_1 \ar{\mu} \sigma_2$
  \end{enumerate}
\end{enumerate}
\end{prop}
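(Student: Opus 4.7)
The plan is to exploit the fact that $\mcl{-}$ preserves the top-level syntactic constructor of $\sigma$, so that the clauses of Definition~\ref{def:dual} mirror the rules of \rfig{opsem}.

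Part (1) follows by direct inspection of Definition~\ref{def:dual}: the only clause whose right-hand side is $\Unit$ is the one for $\sigma_1 = \Unit$; in every other clause $\mcl{\sigma_1}$ inherits the outermost constructor of $\sigma_1$ (a prefix, an external or internal sum, or a $\Rec$). Since rule \rname{a-Ok} is the only rule producing $\ar{\ok}$, we obtain $\sigma_1 \ar{\ok}$ iff $\sigma_1 = \Unit$ iff $\mcl{\sigma_1} = \Unit$ iff $\mcl{\sigma_1} \ar{\ok}$.

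For Part (2a) I would proceed by rule induction on the derivation of $\sigma_1 \ar{\mu} \sigma_2$. In every case except \rname{a-Unfold} the argument is immediate. For a prefix $\sigma_1 = \, ?(\sigma^m).\rho$ firing \rname{a-Pre} to yield $\sigma_2 = \rho$ with $\mu = \, ?(\sigma^m)$, we read off $\mcl{\sigma_1} = \, ?(\sigma^m).\mcl{\rho}$ and re-apply \rname{a-Pre} to land in $\mcl{\sigma_2}$; the cases of other prefixes and of \rname{a-Int}, \rname{a-Ext} on (external or internal) sums are completely analogous. The interesting case is \rname{a-Unfold}: here $\sigma_1 = \Rec[x]{\sigma'}$ and $\sigma_2 = \sigma' \Lsubs{x}{\sigma_1}$. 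By Definition~\ref{def:dual}, $\mcl{\sigma_1} = \Rec[x]{ \mclo{\sigma'}{\Lsubs{x}{\sigma_1}} }$, so applying \rname{a-Unfold} gives a silent step into $\mclo{\sigma'}{\Lsubs{x}{\sigma_1}} \, \Lsubs{x}{\mcl{\sigma_1}}$; Corollary~\ref{cor:mclo-subs} identifies this term with $\mcl{\sigma' \Lsubs{x}{\sigma_1}} = \mcl{\sigma_2}$, as required.

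For Part (2b) I would case-split on the rule used to derive $\mcl{\sigma_1} \ar{\mu} \sigma'$. Because $\mcl{-}$ preserves the outermost constructor, the shape of $\mcl{\sigma_1}$ pins down the shape of $\sigma_1$, and in each prefix/sum case $\sigma_2$ can be read off directly and satisfies $\sigma_1 \ar{\mu} \sigma_2$ by the matching rule of \rfig{opsem}. The only delicate case is once again \rname{a-Unfold}: from $\mcl{\sigma_1} = \Rec[x]{\mclo{\sigma'}{\Lsubs{x}{\sigma_1}}}$ we conclude $\sigma_1 = \Rec[x]{\sigma'}$, and the target $\mclo{\sigma'}{\Lsubs{x}{\sigma_1}} \, \Lsubs{x}{\mcl{\sigma_1}}$ equals $\mcl{\sigma' \Lsubs{x}{\sigma_1}}$ by Corollary~\ref{cor:mclo-subs}; taking $\sigma_2 = \sigma' \Lsubs{x}{\sigma_1}$, the transition $\sigma_1 \ar{\tau} \sigma_2$ is given by \rname{a-Unfold}.

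The main obstacle is the \rname{a-Unfold} case in both directions of Part (2): here the operational step fires a substitution, but the $\mcl{-}$ of the recursive term carries that substitution \emph{inside} as an accumulator, and one must know that the two constructions agree. This is exactly the commutation property packaged as Corollary~\ref{cor:mclo-subs}, which is the sole nontrivial ingredient the proof relies on.
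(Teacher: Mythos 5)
Your proposal is correct and follows essentially the same route as the paper: a case analysis on the outermost constructor (the paper phrases it as structural induction on $\sigma_1$, which coincides with your rule induction since every rule in \rfig{opsem} is an axiom), with all prefix and sum cases immediate and the recursion case discharged by \rcor{mclo-subs}, exactly as you identify. No gaps.
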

\begin{proof}
  Suppose that $\sigma_1 \ar{\ok}$, the syntax of session contracts implies that
  $\sigma_1 = \Unit$, thus by definition $\mcl{\sigma_1} = \Unit$, hence $\mcl{ \sigma_1 } \ar{\ok}$. Conversely, if $ \mcl{\sigma_1}  \ar{\ok}$ then the syntax ensures that 
$\mcl{ \sigma_1 } = \Unit$, thus the definition of $\mcl{ - }$ implies that $\sigma_1 = \Unit$,
and plainly $\sigma_1 \ar{ \ok }$.

  Now let $\mu \in \Actt$.
  The proofs of both (1) and (2) are by structural induction on $\sigma_1$. 
  We look briefly at (2).
  Observe that $\mcl{ \sigma_1 } \ar{ \mu }$ ensures that $\sigma_1 \neq \Unit$.
  \begin{enumerate}[label=\({\alph*}]
    \item\label{pt:ho-out-proof} Suppose $\sigma_1 = {!( \sigma^m ).\sigma'_1}$.
      By definition $\mcl{ \sigma_1 } = {!( \sigma^m).\mcl{ \sigma'_1 }}$,
      hence the contract $\mcl{\sigma_1}$ performs only the action
      $ \mcl{\sigma_1} \ar{ !( \sigma^m ) } \sigma'$ with $\sigma' = \mcl{ \sigma'_1 } $.
      By letting $\sigma_2 = \sigma'_1$ we obtain immediately $\sigma' = \mcl{ \sigma_2}$,
      and $\sigma_1 \ar{ !( \sigma^m )  } \sigma_2$.
\leaveout{
    \item Suppose $\sigma_1 = \extsum_{i \in I} ?\gt{l_i}.\sigma^i_1$. 
      By definition $\mcl{ \sigma_1 } = \extsum_{i \in I} ?\gt{l_i}.\mcl{ \sigma^i_1 }$,
      thus $\mcl{ \sigma_1} \ar{ \mu } \sigma'$ implies that
      $\sigma' = \mcl{ \sigma^k_1 }$ and $\mu = {?\gt{l}_k}$ for some $k \in I$.
      Let $\sigma_2 = \sigma^k_1 $, plainly $\sigma' = \mcl{ \sigma_2 }$
      and $\sigma_1 \ar{ ?\gt{l_k}  } \sigma_2$.
}
    \item 
      Suppose $\sigma_1$ is $\Rec[x]{\sigma'_1}$. 
      Here $\mcl{\sigma_1}$ by definition is $\Rec[x]{\mclo{\sigma'_1}{\Lsubs{x}{\sigma_1}}}$. 
      So the only  possible move   
      $ \mcl{\sigma_1}  \ar{\mu} \sigma'$ is with $\mu = \tau$ and 
      $\sigma'$
      of the form 
      $\mclo{\sigma'_1}{\Lsubs{x}{\sigma_1}}\, \Lsubs{x}{\mcl{\sigma_1}}$.
      By \rcor{mclo-subs} this is the same as 
      $\mcl{\sigma'_1  \Lsubs{x}{\sigma_1}}$. 
      Setting $\sigma_2$ to be $\sigma'_1\, \Lsubs{x}{\sigma_1}$
      the result follows, because
      $\sigma_1 \ar{\tau} (\sigma'_1\, \Lsubs{x}{\sigma_1})$. 
    \item All other possibilities for $\sigma$ are treated as in case (\ref{pt:ho-out}). 
  \end{enumerate}
\end{proof}

\begin{prop}
\label{prop:bisim-mcl}
  For every session contract $\sigma \in \sc,\, \sigma \bisim{} \mcl{\sigma}$.
\end{prop}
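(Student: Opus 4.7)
The plan is to exhibit an explicit candidate bisimulation and verify the two clauses of Definition~\ref{def:sb} by direct appeal to Proposition~\ref{prop:moves}. Concretely, I would define
$$\R \; = \; \setof{(\sigma, \mcl{\sigma})}{\sigma \in \sc}$$
and show that $\R$ is a strong bisimulation. Since $(\sigma, \mcl{\sigma}) \in \R$ by construction, this immediately yields the required $\sigma \bisim{} \mcl{\sigma}$.

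To verify that $\R$ is a bisimulation, fix an arbitrary pair $(\sigma, \mcl{\sigma}) \in \R$. Clause \rpt{sb-ok} of Definition~\ref{def:sb}, namely the equivalence $\sigma \ar{\ok}$ iff $\mcl{\sigma} \ar{\ok}$, is exactly part~(1) of Proposition~\ref{prop:moves}. For clause \rpt{sb-lr}, suppose $\sigma \ar{\mu} \sigma'$ for some $\mu \in \Actt$; then part~(2)(a) of the same proposition gives $\mcl{\sigma} \ar{\mu} \mcl{\sigma'}$, and the pair $(\sigma', \mcl{\sigma'})$ is in $\R$ by construction. Conversely, if $\mcl{\sigma} \ar{\mu} \rho'$, then part~(2)(b) provides some $\sigma'$ with $\sigma \ar{\mu} \sigma'$ and $\rho' = \mcl{\sigma'}$, and once again $(\sigma', \mcl{\sigma'}) \in \R$. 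This closes all the matching conditions, so $\R$ is a strong bisimulation.

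Since all the heavy lifting has already been carried out in Proposition~\ref{prop:moves}, there is no genuine obstacle here: the proof is essentially a bookkeeping exercise that repackages that proposition into the bisimulation format. The only subtlety worth flagging is making sure the candidate relation is the right one, namely pairing each contract with its m-closure (and not, e.g., pairs of arbitrary m-closures), so that the closure under transitions given by Proposition~\ref{prop:moves}(2) lands back inside~$\R$.
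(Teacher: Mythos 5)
Your proposal is correct and matches the paper's proof exactly: the paper also takes $\R = \setof{(\sigma, \mcl{\sigma})}{\sigma \in \sc}$ and notes that Proposition~\ref{prop:moves} makes the verification of the bisimulation clauses straightforward. You have simply spelled out the bookkeeping the paper leaves implicit.
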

\begin{proof}
  Let $\R$ be the set 
$\setof{(\sigma, \mcl{\sigma})}{\sigma \in \sc}$. It is straightforward to use Proposition~\ref{prop:moves} to 
show that $\R$ is a strong bisimulation, as given in Definition~\ref{def:sb}. 
\end{proof}

\begin{defi}\name{ Peer-duality }
\label{def:novel-dual}
  For every session contract $\rho \in \sc$, let $\prdual{\rho}$ denote $\stdual{\mcl{\rho}}$. 
\end{defi}

\begin{thm}
\label{thm:complements-comply}
Suppose $\B$ is a preorder. Then  $\rho \peercmpl[\B] \prdual{\rho}$ for every session contract
$\rho \in \sc$.
\end{thm}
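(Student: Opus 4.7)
The plan is to assemble the theorem from three facts already in place, without any further structural induction on contracts. First, I would invoke \rlem{m-closed}, applied with the empty substitution (noting that $\fv{\rho}\subseteq\emptyset = \dom{\varepsilon}$ since $\rho\in\sc$ is closed), to conclude that $\mcl{\rho}=\mclo{\rho}{\varepsilon}$ is m-closed.

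Next, because the standard duality function preserves m-closedness (see \rpt{stdual-preserves-mclosed} of \rlem{subs-stdual} and \rthm{standard-duality-comply}), I can apply \rthm{standard-duality-comply} to the m-closed contract $\mcl{\rho}$. This yields $\mcl{\rho} \peercmpl[\B] \stdual{\mcl{\rho}}$, which by \rdef{novel-dual} is exactly $\mcl{\rho} \peercmpl[\B] \dual{\rho}$. So the target compliance is already established, but with $\mcl{\rho}$ in place of $\rho$ on the left.

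To transfer this compliance from $\mcl{\rho}$ back to $\rho$, I would use the proposition immediately preceding the theorem, which states $\rho \bisim{} \mcl{\rho}$, together with \rprop{bisim}: bisimilar contracts comply with the same partners. Instantiating \rprop{bisim} with $\rho_1 = \rho$, $\rho_2 = \mcl{\rho}$, and $\sigma = \dual{\rho}$ gives the desired conclusion $\rho \peercmpl[\B] \dual{\rho}$.

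I do not expect any real obstacle here; all the hard work has already been done in \rthm{standard-duality-comply} (which deals with the subtle commutation between $\stdualSym$ and $\unfoldSym$ under the m-closed hypothesis) and in setting up $\mcl{-}$ so that it produces a strongly bisimilar m-closed representative. The proof is essentially a three-line composition of these lemmas, and requires no case analysis on the structure of $\rho$ or on the shape of the peer-compliance derivation.
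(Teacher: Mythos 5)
Your proposal is correct and follows exactly the same three-step composition as the paper's own proof: \rlem{m-closed} to get that $\mcl{\rho}$ is m-closed, \rthm{standard-duality-comply} to obtain $\mcl{\rho} \peercmpl[\B] \stdual{\mcl{\rho}} = \mcl{\rho} \peercmpl[\B] \dual{\rho}$, and then the proposition $\rho \bisim{} \mcl{\rho}$ together with \rprop{bisim} to transfer the compliance to $\rho$. Your instantiation of \rprop{bisim} is in the right direction, so there is nothing to add.
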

\begin{proof}
  From Lemma~\ref{lem:m-closed} we know that $\mcl{\rho}$ is \emph{m-closed} and therefore 
an application of Theorem~\ref{thm:standard-duality-comply} gives 
$\mcl{\rho} \peercmpl[\B] \prdual{\rho}$. 

We also know from \rprop{bisim-mcl} that $\rho \bisim{} \mcl{\rho} $, 
and so the required $\rho \peercmpl[\B] \prdual{\rho}$ follows by \rprop{bisim}. 
\end{proof}

\subsection{The complement function}
In the original extended abstract of the current paper,
 \cite{DBLP:conf/concur/BernardiH14} we  proposed 
an alternative function $\comp{-} $ in order to construct
the complementary contracts required by
\rthm{dual-cmpl-with-ctr}; this function was proposed independently in
\cite{DBLP:journals/corr/abs-1202-2086}, but for a different purpose.  
%

In this section we show that the function $\comp{ - }$ suffers
the same issue of the standard duality: there exists a $\sigma$
which is not in $\B$-peer compliance with its proposed complement $\comp{
  \sigma }$, for every reasonable $\B$ (\rexa{comp-broken}).

We begin by recalling the necessary definitions, and then
we present a series of examples.

\begin{defi}\name{Complement}\label{def:comp}\\
Let $\compSym : L_\sc \longrightarrow L_\sc $ be defined inductively as
follows,
$$
\comp{\sigma} =
\begin{cases}
  \Unit & \mbox{if } \sigma = \Unit,\\
  x & \mbox{if } \sigma = x,\\
  ?\gt{t}.\comp{\sigma'} & \mbox{if } \sigma = { !\gt{t}.\sigma' },\\
  !\gt{t}.\comp{\sigma'} & \mbox{if } \sigma = { ?\gt{t}.\sigma' },\\
  ?(\sigma^m).\comp{\sigma'} & \mbox{if } \sigma = { !(\sigma^m).\sigma' },\\
  !(\sigma^m).\comp{\sigma'} & \mbox{if } \sigma = { ?(\sigma^m).\sigma' },\\
  \intsum_{i \in I}!\gt{l}_i.\comp{\sigma_i} & \mbox{if } \sigma =
  \extsum_{i \in I}?\gt{l}_i.\sigma_i,\\
  \extsum_{i \in I}?\gt{l}_i.\comp{\sigma_i} & \mbox{if } \sigma =
  \intsum_{i \in I}!\gt{l}_i.\sigma_i,\\
  \Rec[x]{\comp{\sigma' \inner{\sigma}{x}}} & \mbox{if } \sigma = \Rec[x]{\sigma'}
\end{cases}
$$
We say that $\comp{\sigma}$ is the {\em complement} of $\sigma$.
\end{defi}
In the definition above the application of $\inner{\sigma}{x} $ to
$\sigma'$ stands for the substitution of $\sigma$ in place of $x$ in
the message fields that appear in $\sigma'$. The definition is
the following,

$$
\rho \inner{ \sigma }{ x } =
\begin{cases}
\Unit & \text{if }\rho = \Unit,\\
y & \text{if }\rho = y,\\
?\gt{t}.(\rho'\inner{\sigma}{x}) & \text{if }\rho = {?\gt{t}.\rho'},\\
!\gt{t}.(\rho'\inner{\sigma}{x}) & \text{if }\rho = {!\gt{t}.\rho'},\\
?(\rho''\inner{\sigma}{x}).(\rho'\inner{\sigma}{x}) & \text{if }\rho = {?(\rho'').\rho'},\\
!(\rho''\inner{\sigma}{x}).(\rho'\inner{\sigma}{x}) & \text{if }\rho = {!(\rho'').\rho'},\\
\Rec[y]{( \rho' \inner{\sigma}{x} )} &\text{if } \rho = \Rec[y]{\rho'}\text{ and } y \neq x,\\
\Rec[x]{\rho'} &\text{if }\rho = \Rec[x]{\rho'}
\end{cases}
$$

\begin{exa}[ Inner substitution ]
In this example we show how the application of inner substitutions acts on terms.
Fix a term $\sigma \in \sc$, by definition $   ( ?(y).y
)\inner{ \sigma }{ y } = \sinp{ \sigma }{ y } $. This equality shows that
inner substitutions operate on the variables that appear free in the
message parts of terms, and not on the ones in the continuations. 
We have likewise the equality
$  ( ?(y).x )\inner{ \sigma }{ x}  =  ?(y).x$ 
because $x$ appears free only in the continuation of the term $ ?(y).x $.
As usual the substitution does not act on closed variables, thus
$(\Rec[ y ]{?( y ).x }) 
\inner{ \sigma }{ y } 
  = \Rec[ y ]{?( y ).x}$.
\end{exa}

As the application of $\inner{\sigma}{x}$ does not change the
number of prefixes that appear in a term, $\comp{\sigma}$ is defined
for every session type term $\sigma$.

\begin{exa}[ Complement function ]
\label{exa:comp-not-mclosed}
In this example we prove that in general the function $\comp{ - }$ 
does not compute m-closed terms.
Recall the type $\sigma$ we used in \rexa{applications-mcl}, namely 
$\sigma = \Rec[ x ]{\sigma'}$, where $\sigma' = \Rec[ y ]{ ?(y). x }$.
Calculations let us prove the equality
$$ 
\comp{ \sigma } = \Rec[ x ]{\Rec[ y ]{ !( \Rec[y]{ ?(y). x } ).x }}
$$
The crucial observation here is that the contract
$\comp{ \sigma }$ is not m-closed, for it contains the message
$ \Rec[y]{ \sinp{y}{x}} $, which is an open term.
\end{exa}
\leaveout{
Let $\sigma'' = \Rec[ y ]{ ?(y).\sigma}$,
$$
\begin{array}{lll}
\comp{ \sigma } & = & \comp{ \Rec[x]{ \sigma' } }\\
& = & \Rec[x]{ \comp{ \sigma' \inner{ \sigma }{ x } } }\\
& = & \Rec[x]{ \comp{ \sigma' } }\\
& = & \Rec[x]{\Rec[y]{ \comp{ ( ?( y ).x  ) \inner{ \sigma' }{ y } }}}\\
& = & \Rec[x]{\Rec[y]{ \comp{ ?( \sigma' ).x }}}\\
& = & \Rec[x]{\Rec[y]{ !(  \Rec[y]{ \sinp{y}{x}} ). x }}
\end{array}
$$
}

\begin{exa}
  \label{exa:comp-broken}
  In this example we show a contract $\sigma$ such that
  $ \sigma  \Npeercmpl[\B] \comp{\sigma}$ for every reasonable $\B$.
  Recall the type $\sigma = \Rec[ x ]{  \Rec[ y ]{ ?(y). x } }$ of
  \rexa{applications-mcl}. In \rexa{comp-not-mclosed} we have proven
  that 
\begin{center}
$\comp{ \sigma } = \Rec[ x ]{\Rec[ y ]{ !( \sigma' ).x }}$
  where $\sigma' = \Rec[ y ]{ ?(y). x }$
\end{center}
  Fix a reasonable relation $\B$.
  We want to prove that $ \sigma \Npeercmpl[\B] \comp{ \sigma }$. 
  Since $ \sigma \ar{\tau} \unfold{\sigma}$
  and $\comp{ \sigma } \ar{\tau} \unfold{ \comp{ \sigma} }$,
  it suffices to prove that 
  \begin{equation}
    \label{eq:comp-bad}
    \unfold{ \sigma } \Npeercmpl[\B] \unfold{ \comp{ \sigma } }
  \end{equation}
  Let us unfold the terms $\sigma$ and $\comp{ \sigma }$,
  $$
  \begin{array}{lll}
    \unfold{ \sigma } & = &  
    \unfold{ \Rec[ x ]{  \Rec[ y ]{ ?(y). x } } }\\
    & = &
    \unfold{ \Rec[ y ]{ ?( y ). \sigma }  }\\
    & = &
    ?(  \Rec[ y ]{ ?( y ). \sigma }  ). \sigma \\[1em]
    \unfold{ \comp{ \sigma} } & = & 
    \unfold{ \Rec[ x ]{\Rec[ y ]{ !( \Rec[ y ]{ ?(y). x } ).x }}  } \\
    & = & \unfold{ \Rec[ y ]{ !(  \Rec[y]{ ?( y ).\comp{ \sigma }} ). \comp{ \sigma }} }\\
    & = & !(  \Rec[ y ]{ ?( y ). \comp{ \sigma } } ). \comp{ \sigma }
  \end{array}
  $$
  As preliminary fact, observe that $(\unfold{\comp{ \sigma }}, \unfold{\sigma}) \not \in {\B} $,
  because $\unfold{\comp{\sigma}}$ performs an output action,
  $\unfold{\sigma}$ performs an input action, and $\B$ is reasonable.
  It follows that $(\comp{ \sigma}, \sigma) \not \in {\B} $.

  Let $\sigma_1 = \Rec[ y ]{ ?( y ). \comp{ \sigma } }$ and
  $\sigma_2 = \Rec[ y ]{ ?( y ). \sigma }$. Plainly,
  $\unfold{\comp{ \sigma }} = {!( \sigma_1 ).\comp{ \sigma }}$,
  and 
  $ \unfold{ \sigma } = {?( \sigma_2 ).\sigma}$.

  Since $\unfold{ \sigma } \nar{ \ok }$, \req{comp-bad} will follow if
  we show that 
  $$ \unfold{ \sigma } \Par \unfold{ \comp{ \sigma }} \nar[\B]{ \tau
  }$$
  Thanks to the syntax of the contracts $ \unfold{ \sigma }, \unfold{\comp{ \sigma }}$,
  and rule \rname{I-Synch}, we have to show that 
  $ !( \sigma_1)  \not \bowtie_{\B} ?(\sigma_2 )$.
  In view of the definition of $\bowtie_{\B}$ we have to prove that
  $ (\sigma_1, \sigma_2) \not \in {\B} $.
  The visible moves $\unfold{ \sigma_1 } \ar{ ?(\sigma_1) }
  \comp{\sigma} $,
  and
  $\unfold{ \sigma_2 } \ar{?(\sigma_2)} \sigma$, together with
  $ (\comp{ \sigma}, \sigma) \not \in {\B} $, and \rpt{reasonable-continuations}
  of \rdef{reasonable},
  ensure the desired $(\sigma_1, \sigma_2) \not \in {\B} $, thus
  $\sigma \Npeercmpl[\B] \comp{ \sigma }$.

\leaveout{
In this example we prove that 
the functions $\unfoldSym$ and $\compSym$ do not commute. 
In view of (\ref{eq:commutativity}), we are required
to exhibit a type $T$ such that 
\begin{equation}
\label{eq:nocomm}
\comp{\unfold{ T }} \not \typeEQ \unfold{ \comp{ T }}
\end{equation}
We use the type $T$ of examples (\ref{exa:unfold}) and (\ref{exa:comp}).
Recall that $\unfold{ T } = { \sinp{ \Rec[Y]{ \sinp{Y}{ T }} }{ T } }$.
Calculations let us show the next equalities,
$$
\begin{array}{lll}
  \comp{\unfold{ T }} & = & \sout{ \Rec[Y]{ \sinp{Y}{ T }} }{ \comp{ T } }
  \\[1em]

\end{array}
$$

To prove \req{nocomm} it suffices to show that
$\comp{\unfold{ T }}$ and $\unfold{ \comp{ T } }$ do not satisfy
\rpt{ho-out} of \rdef{typeq} with respect to the greatest coinductive type equivalence,
namely $\typeEQ$.
This will follow if we show that the messages of $\comp{\unfold{ T }}$ and $\unfold{ \comp{ T } }$ are not related by $\typeEQ$, that is
$$
 \Rec[Y]{ \sinp{Y}{ T }} \not \typeEQ \Rec[Y]{ \sinp{Y}{ \comp{ T }}}
$$
This will follow if we show that $\Rec[Y]{ \sinp{Y}{ T }}$ and $\Rec[Y]{ \sinp{Y}{ \comp{ T }}}$ do not satisfy \rpt{ho-inp} of \rdef{typeq}. To prove this is enough to show that the continuations of these types are not related, that is
$T \not \typeEQ \comp{ T }$, which we have
already established this in \req{T-neq-compT}.}
  \end{exa}
  \noindent
  The argument used in the previous example can be adapted to
  show that that the complement function on session types does
  not commute with the unfolding function:
  \begin{center}
    there exists a $T \in \types$, such that $\comp{\unfold{ T }} \not \typeEQ
    \unfold{ \comp{ T }}$
  \end{center}

\subsection{Discussion}
\label{sec:discussion}

\begin{figure}[t]
    \hrulefill
$$
\infer[ T^+ \D T^- \quad \rname{CRes}]
{ \Theta \, ; \, \Gamma \typerel \nn{\kappa}P \rhd \Delta }
{ \Theta \, ; \, \Gamma \typerel P \rhd \Delta \cdot \kappa^+ \at T^+ \cdot \kappa^- \at T^-}
$$
    \caption{The use of duality in type inference \'{a} la \cite{DBLP:journals/entcs/YoshidaV07}}
    \label{fig:cres}
\hrulefill
\end{figure}

Here we briefly discuss the use of the duality  operator
$\stdual{T}$  in
type-checking systems for session types. For processes we use the 
syntax of \cite{DBLP:journals/entcs/YoshidaV07},  and the
type-checking rules given on page 89,  and in Figure~6 on page
80 of the same paper.  For convenience we display in
\rfig{cres} a slightly generalised version of the main
rule involving duality;  there $\kappa^+, \kappa^-$ must have associated 
types  $T^+, T^{-}$ satisfying 
$T^+  \D T^{-}$. Here  $\D$ is some relation over
types which intuitively captures the notion of ensuring complementary
behaviour. In  \cite{DBLP:journals/entcs/YoshidaV07} this is
actually instantiated to duality: ${T^+}  \mathrel{\mathcal{D}} {T^{-}}$ if 
$\stdual{T^+} = T^{-}$.

Let us now reconsider the program $P$ in
\rexa{intro-funny-types-needed}
from the Introduction and discuss, informally, how it can be assigned 
a type. In order to use this instantiation of the  rule in \rfig{cres} we need to
assign to $\kF^+,\kF^-$  types satisfying $T^+ \D T^{-}$. 
For convenience we work up to the type equivalence 
$\typeEQ$ generated by the subtyping relation.

\leaveout{\MHf{
This informal discussion on types must now up 
with proposed types $T^+$ and $T^{-}$ which are not dual. 
}}
Assume that $z$ be at type $T_z$ (this could be stated in the syntax itself by using the annotation $ z \at T_z$). 
Since $z$ replaces the formal parameter $x$ in the
recursion~$ X \appto{ z, \, \kF^- } $, one expects $T_x$, the
type of $x$, to be equivalent to $T_z$, $ T_x \typeEQ T_z$.
By inspecting the syntax $$ \throw{x}{\kF^+}\void \Par \catch{y}{z} \ldots$$ we also know that
the endpoint $x$ is used according to the type $ T_x \typeEQ \sout{ T^{+} }\e $, 
and that $T^{+}$ must be a subtype of $T_z$, and so of $T_x$, $ T^{+} \subt T_x$.
The last inequality is trivially satisfied by letting $T^{+} \typeEQ T_x$, and this
leaves us with the equation $T^{+} \typeEQ \sout{ T^{+} }\e$. 
A session type that satisfies it is the following one, 
$$
T^{+} = \Rec[X]{\sout{X} \e}
$$

For $P$ to be typable it is necessary also that the type of $\kF^-$, namely $T^{-}$,
be complementary to $T^{+}$. Since in $ X \appto{ z, \, \kF^- }$ the endpoint $\kF^-$ replaces $y$,
it must be the case that $T^{-} \typeEQ T_y$.
At each iteration $y$ is used to read {\em only once} an endpoint 
of type $T_z$, so
$$T_y \typeEQ \sinp{T_z}\e \typeEQ \sinp{T_x}\e \typeEQ
\sinp{T^{+}}\e $$
and thus $ T^{-} = \sinp{T^{+}}\e$.
It may seem counter-intuitive that the behaviour of a process that
uses an endpoint according to the type $T^{+}$, i.e. $\Rec[X]{\sout{X} \e}$, 
be complementary to the behaviour of a process that uses the other endpoint
according to type $T^{-}$, i.e. $\sinp{T^{+}}\e$. But this is easily understood
if we unfold the types:
$$
\begin{array}{lll}
\unfold{T^{+}} & = & \sout{T^{+}}\e 
\\[.2em]
\unfold{T^{-}} & = & \sinp{T^{+}}\e
\end{array}
$$
More formally, since $\M{T^{-}} = \unfold{\M{\prdual{T^{+}}}}$
\rthm{complements-comply} ensures that for any preorder $\B$,
we have $ \M{T^{-}} \peercmpl[\B] \M{T^{+}}$, and by using the
types $T^{+}$ and $T^{-}$ we can type $P$ (see \rapp{type-inference}).
Our discussion shows that in general if an endpoint $\kappa^+$ is
used as prescribed by a type $T^+$, then the other endpoint of the
session, i.e. $\kappa^+$, needs to be used according to a type $T^-$
which contains the type of $\kappa^+$. In other terms, to know
how a system uses $\kappa^-$ one needs to know how the system
uses~$\kappa^+$.

Our proposal is to amend the type-checking system in 
 \cite{DBLP:journals/entcs/YoshidaV07}  by using the variation of 
the rule \rname{CRes} in \rfig{cres} where
\begin{align}\label{eq:D.comp}
  {T^+}  \mathrel{\mathcal{D}} T^{-} \,\,\text{whenever}\,\, \prdual{T^+} = T^{-}
\end{align}
Here we use $\prdualSym$ in the obvious manner as an operator on types
although formally it has only been defined on contracts. This version
of the rule has a behavioural justification due to
\rthm{complements-comply}
and the interpretation $\encSym$ of types as contracts. It ensures
that $T^+$ interpreted as a contract is in $\B$-mutual compliance with $T^{-}$: 
$\M{T^+}  \peercmpl[\B] \M{T^{-}}$.

This version of the rule, using $\prdualSym$ also leads to a more
powerful type-checking system. Using it we can type the program 
$P$ from \rexa{intro-funny-types-needed}. In 
\rapp{type-inference} we give all the details of a derivation tree for
of $\typerel P$, but using the generic rule from \rfig{cres}. 
The construction of this derivation tree gives rise to a series of
conditions on types, which boil down to:
\begin{align}
\label{eq:cond-Tx}
T_x & \typeEQ \sout{ T_x }\e \\[.5em]
\label{eq:cond-Ty}
T_y & \typeEQ \sinp{ T_x }\e  \\[.5em]
\label{eq:cond-D} T_x & \mathrel{\mathcal{D}}  T_y
\end{align}
We have already argued that (\ref{eq:cond-Tx}) is satisfied by $T^{+}$,
and that (\ref{eq:cond-Ty}) is satisfied by the $T^{-}$ we discussed earlier on,
so if we choose $\D$ as in (\ref{eq:D.comp}) above
(\ref{eq:cond-D}) is also satisfied. Thus the derivation of $ \typerel P $
exists with our suggested modified inference rule. 

If we choose $\D$ to be the standard duality, then 
(\ref{eq:cond-D}) is not satisfied, because $ \stdual{T^{+}} \not \typeEQ T^{-}$,
and so we cannot derive $\typerel P$.

Thus far, we have discussed only session types.
The type systems that use session types, though,
use the duality also to type channel (i.e. non session) types.

As an example, consider the type discipline of
\cite{DBLP:journals/iandc/Vasconcelos12}.
In that paper channels are resources that can be replicated, 
while session endpoints are resources that cannot be replicated.
This distinction is borne out by the types, which are pairs
$(q, \, T)$, or simply $q \,  T$, where $q$ is a qualifier that can be
either $\un$ (unbound) or $\lin$ (linear), and $p$ is a {\em
  pretype}. Pretypes are elements of $\st$.

\begin{exa}\name{Replication and ever-lasting communications}
\label{exa:replication}\\
we write the next process using the syntax of \cite{DBLP:journals/iandc/Vasconcelos12},
  $$
  Q = \nn{xy}( \, \vOut{x}{x}\void \Par \un \vIn{y}{z}\vOut{z}{z}\void \,)
  $$
  The process $Q$ creates two fresh names $x$ and $y$,
  then sends $x$ sends over itself, to the process
  $ \un \vIn{y}{z}\vOut{z}{z}\void $, which is a replicated input.
  Upon reception of $x$ over $y$, the replicated input reduces to
  $ \vOut{x}{x}\void \Par \un \vIn{y}{z}\vOut{z}{z}\void$. 
  This entails a livelock, and no communication takes place.

  Since $x$ sends itself, it must be an unbound resource,
  that is it can be duplicated. The syntax $\un
  \vIn{y}{z}\vOut{z}{z}\void$ means that also the name $y$ can be
  duplicated. In turn the types of both $x$ and $y$ will be decorated
  with the qualifier $\un$. 
\end{exa}

To show that $\typerel Q$ we use the derivation rules of
\cite{DBLP:journals/iandc/Vasconcelos12}, but replacing 
the rule that  depends on the duality, namely \rname{T-Res} of that
paper, with the rule in \rfig{tres}.
\begin{figure}[t]
\hrulefill
$$
\infer[ T^{+} \D T^{-} \quad \rname{T-Res}]
{ \Gamma \typerel (\nu{xy})P }
{\Gamma, x\at T^{+}, y\at T^{-} \typerel P}
$$
\caption{The use of duality in type inference \'{a} la \cite{DBLP:journals/iandc/Vasconcelos12}}
\label{fig:tres}
\hrulefill
\end{figure}
The derivation of $\typerel Q$, whose details are in
\rapp{type-inference}, depends on the satisfaction of three
requirements, namely
$$
T_x  \typeEQ \un \sout{ T_x }\e, \quad
T_y  \typeEQ \un \sinp{ T_x }\e, \quad
T_x  \mathrel{\mathcal{D}}  T_y
$$
Also in this case the power of the type system can be improved
by using $\prdualSym$ in place of the standard
duality.
If, modulo the qualifiers, we instantiate $\D$ as in (\ref{eq:D.comp})
then we can derive $\typerel Q$, while if we instantiate $\D$ to the
standard type duality, then we cannot derive $\typerel Q$.
The details are in \rapp{type-inference}.


\section{Conclusion}
\label{sec:conclusions}
\newcommand{\sepi}{\textsc{SePI}\xspace}

In this paper we proposed a new model for recursive higher-order
session types \cite{DBLP:conf/esop/HondaVK98}, which is \fllyabs with
respect to the subtyping relation \cite{DBLP:journals/acta/GayH05}.
The interpretation of session types maps them into higher-order
session contracts. This is a sublanguage of higher-order contracts for
web-services. 

To construct the model, we have equipped
those contracts with a novel behavioural theory.
In our theory the observable behaviour of contracts is expressed via
a standard LTS, but the interactions of contracts
are parametrised over preorders $\B$s.
The result is a family of LTS. For each one of them we defined a
mutual compliance, $\peercmpl[\B]$. Then we defined a family of
behavioural preorders, $\peerleq[\B]$, such that $ \sigma_1
\peerleq[\B] \sigma_2$ is all the contracts in $\B$-mutual compliance
with $\sigma_1$ is in $\B$-mutual compliance with $\sigma_2$.
The preorder that models the subtyping is the greatest
solution of the equation $ X = {\peerleq[X]}$, and
\rthm{full-abstraction} shows that the model is \fllyabs.

The technical development relies on a coinductive syntactic
characterisation of the preorders $\peerleq[\B]$
(\rcor{set.syn}).
The completeness of the characterisation depends on the existence
for every contract $\rho$ of a contract $\sigma$ in $B$-mutual compliance
with $\rho$, at least when $\B$ is a preorder.
To prove this we introduced a novel function called 
\emph{peer-dual},
and we have shown that this function improves on  the standard
duality, in that it allows us to type more well-formed processes.

Moreover, the examples in \rsec{discussion} suggest that type checking
algorithms based on the standard inductive duality, are not complete
with respect to type disciplines based on the {\em coinductive
  duality} of \cite[Def. 9]{DBLP:journals/acta/GayH05}.

In summary, the contributions of this paper are two, namely
\begin{itemize}
\item the first fully-abstract model of the subtyping for session types of \cite{DBLP:journals/acta/GayH05}, 
\item the definition of a novel type duality, which leads to type systems more powerful than the one relying on the standard definition of type duality \cite{DBLP:conf/esop/HondaVK98}.
\end{itemize}

\subsection{Related work}

The material in sections~2,3, and 4 is adapted from chapter~8 of
\cite{gbthesis}. \rlem{model-complete} in this paper is analogous to
Lemma~8.1.5 of \cite{gbthesis}, of that thesis. Lemma~8.1.5 relies on
Lemma~7.2.18 (also of \cite{gbthesis}) which turns out to be false:
and a counter example is the session contract $ \Rec[x]{!(x).\Unit}$.
Spurred on by this problem, we investigated a novel definition of
type duality, that we described in \rsec{complement}.\\

\paragraph{\bf Contracts for web-services:}
First-order contracts for web-services and the notion of contract
compliance have been proposed first in
\cite{DBLP:conf/wsfm/CarpinetiCLP06}, and have been improved on in
\cite{LP07}. In \cite{DBLP:conf/wsfm/CarpinetiCLP06} a sub-contract
relation is defined, which leads to the definition of compliance.
In contracts, in \cite{LP07} the compliance is defined in terms of the
LTS of contracts, and then, in the style of testing theory \cite{dNH84},
the sub-contract preorder is defined using the compliance.
All the subsequent works - including this paper - adhere to that style.

The most recent accounts of first-order contracts for web-services are
\cite{DBLP:journals/tcs/Padovani10,DBLP:journals/toplas/CastagnaGP09}.
A striking difference between the two papers is the treatment of
infinite behaviours.
In \cite{DBLP:journals/tcs/Padovani10} infinite behaviours are
expressed by recursive contracts, whereas in
\cite{DBLP:journals/toplas/CastagnaGP09} there is no recursive
construct, $\Rec{-}$, and the theory accounts for infinite behaviours by 
using a {\em coinductively} defined language.
Our treatment of infinite behaviours follows the lines of
\cite{DBLP:journals/tcs/Padovani10}.

Both
\cite{DBLP:journals/tcs/Padovani10,DBLP:journals/toplas/CastagnaGP09}  
define a subcontract preorder for contracts of server processes,
a more generous {\em weak} subcontract. They propose mechanisms to
coerce contracts, namely orchestrators \cite{DBLP:journals/tcs/Padovani10} and filters \cite{DBLP:journals/toplas/CastagnaGP09}, and show that
if two contracts, $\sigma_1, \sigma_2$ are in the weak subcontract,
then there exists a coercion $f$ such that $\sigma_1$ and
$f(\sigma_2)$ are in the subcontract relation
\cite[Corollary~3.11]{DBLP:journals/tcs/Padovani10},
\cite[Theorem~3.9]{DBLP:journals/toplas/CastagnaGP09}.

The authors of \cite{DBLP:journals/toplas/CastagnaGP09} introduce
also a compliance for processes, and show that if contracts are
associated to processes by a relation that satisfies a
number of properties (Definition~4.2 in that paper), then two
processes are in compliance if the associated contracts are in
compliance (see Theorem~4.5 there).
\leaveout{
First-order contracts for web-services and the sub-contract relation 
were proposed as model of the subtyping of first-order session types
already in \cite{DBLP:conf/birthday/LaneveP08}, but the first sound
model was shown in \cite{DBLP:conf/ppdp/Barbanerad10}, by using a
sublanguage of contracts for web-services, called {\em session
  behaviours}. Our work \cite{DBLP:conf/sac/BernardiH12} uses yet
another sublanguage of contracts for web-services, namely {\em session
  contracts}, to show that the model proposed in
\cite{DBLP:conf/ppdp/Barbanerad10} is also complete. In this paper we
have used a language of {\em higher-order} session contracts.
}

\paragraph{\bf Session types:}
Session types has been presented for the first time in
\cite{DBLP:conf/parle/TakeuchiHK94}. There the language for types
is higher-order and without recursion, and the type duality is
defined inductively in the obvious way. The main result of
\cite{DBLP:conf/parle/TakeuchiHK94} is that well-typed programs cannot
incur in communication errors (see Theorem 5.10 there). This
type-safety result is a landmark of session types, and is
proven is almost every presentation of session typed languages.

The original presentation of \cite{DBLP:conf/parle/TakeuchiHK94} has
been extended extended to recursive higher-order session types in
\cite{DBLP:conf/esop/HondaVK98}, where also the definition of type
duality that we reported in \rfig{std-dual} has been proposed.
The authors of \cite{DBLP:conf/esop/HondaVK98} argue in favour of
program abstractions, that help programmers structure the interaction
of processes around sessions. As in
\cite{DBLP:conf/parle/TakeuchiHK94}, the proposed result is that
a ``typable program never reduces into an error'' (see Theorem 5.4 (3)
of \cite{DBLP:conf/esop/HondaVK98}). In \cite[pag.~86, paragraph
4]{DBLP:journals/entcs/YoshidaV07}, though, it is shown that that
result is not true, that is the type
system of \cite{DBLP:conf/esop/HondaVK98} 
does not satisfy type-safety. 
The authors of \cite{DBLP:journals/entcs/YoshidaV07} amend the type
system of \cite{DBLP:conf/esop/HondaVK98}, thereby achieving
type-safety (see Theorem~3.4 of \cite{DBLP:journals/entcs/YoshidaV07}).

Subtyping for recursive higher-order session types has been introduced
in \cite{DBLP:journals/acta/GayH05}, along with a {\em coinductive} 
definition of the duality.
In addition to the standard type-safety result (Theorem~2), the
authors show also a type-checking algorithm which they prove sound
(Theorem~5) wrt the type system.
The proof of soundness, though, relies on a relation between
the inductive and the coinductive dualities (Proposition~5 there)
which in general is false; a counter example is provided by the
session type $\Rec{\sout{X}\e}$. The consequence is that there is the
possibility that the algorithm of  \cite{DBLP:journals/acta/GayH05},
if employed in more general settings, may not be sound, that is reject
programs which are well-typed.

An alternative ``fair'' subtyping has been proposed recently
in \cite{Padovani11c}. There session types are higher-order and
recursive, their operational semantics is defined by parametrising the
interactions of session types on pre-subtyping relations, and the
fair subtyping is defined as a greatest \fp (Definition~2.4).

In our development we adopted the same technique of
\cite{Padovani11c}. However, our aim was to model the standard
subtyping of \cite{DBLP:journals/acta/GayH05}, while Padovani focuses
on the properties of his fair subtyping.

An overview of the major development of session types is
\cite{DBLP:conf/wsfm/Dezani-Ciancaglinid09}.

Session types have been fertile ground for theoretical studies
as well as for implementations. For instance, a programming language
equipped with session types is \sepi \cite{franco.vasconcelos:sepi}.
In \sepi, the process $P$ of
\rexa{intro-funny-types-needed} is rendered by the code in
\rfig{sepi-code}, and the type checker accepts $P$, because it uses
the {\em coinductive} duality rather than the inductive one
\cite{vasco.email}.
\begin{figure}
\hrulefill
\begin{lstlisting}
type T = !T.end

def f (x: T, y: dualof T) =
	new a b: T
	x!a |
	y?z. f!(z, b)

{}
\end{lstlisting}
\caption{Script in \sepi for the process $P$ of \rexa{intro-funny-types-needed}}
\label{fig:sepi-code}
\hrulefill
\end{figure}

\leaveout{
{\em Behavioural models:}
}

\paragraph{\bf Models of Gay \& Hole subtyping:}
The first attempt to model the Gay \& Hole subtyping of \cite{DBLP:journals/acta/GayH05} in terms of
a compliance preorder appeared in
\cite{DBLP:conf/birthday/LaneveP08}. 
For a comparison of that research and our work the reader is
referred to  \cite{DBLP:conf/sac/BernardiH12}.
The authors of \cite{DBLP:conf/ppdp/Barbanerad10} have shown the
first sound model of this subtyping restricted  to first-order session
types, by using a subset of contracts for web-services, a mutual compliance, called
{\em orthogonality}, and the preorder
generated by it.  The $\B$-peer compliances we used in this work generalises to
parametrised LTSs  the orthogonality of
\cite{DBLP:conf/ppdp/Barbanerad10}.

\leaveout{
Our research stemmed from the results on behavioural models for
first-order session types. The authors of
\cite{DBLP:conf/ppdp/Barbanerad10} have shown the
first sound model of the subtyping, by using a subset of contracts for
web-services, a mutual compliance, called
{\em orthogonality}, and the preorder
generated by it.

The $\B$-peer compliances we used in this work generalise to
parametrised LTS the orthogonality of
\cite{DBLP:conf/ppdp/Barbanerad10}.
}

Following the approach of \cite{DBLP:conf/ppdp/Barbanerad10}, in 
\cite{DBLP:conf/sac/BernardiH12} we have shown a
\fllyabs model of the subtyping, but using the standard asymmetric
compliance and an intersection of the obvious server and client
preorders.

An alternative definition of the model proposed in
\cite{DBLP:conf/sac/BernardiH12} can be found in
\cite[Chapter~5]{gbthesis}, where must testing of \cite{dNH84} is used
in place of the compliance.
This entails immediately the differences between the standard 
subtyping of \cite{DBLP:journals/acta/GayH05} and the fair one of
\cite{Padovani11c}, for they are just the differences between must
preorder and the should preorder.

This paper subsumes \cite{DBLP:conf/sac/BernardiH12} in
the sense that \rthm{full-abstraction} of this work implies
Theorem~5.2 of \cite{DBLP:conf/sac/BernardiH12}.

The authors of \cite{DBLP:conf/ppdp/Barbanerad10} have also extended their work
to higher-order session types. 
However in their recent work \cite{Barbanera-deLiguoro-2013} only
a subset of types is modelled, for instance
the types $\Rec{\sout{X}\e}$ is ruled out, and has no behavioural
interpretation at all. This is because the authors of
\cite{Barbanera-deLiguoro-2013} use the standard
definition of syntactic duality, and define the LTS of contracts by
stratification.
In contrast with \cite{Barbanera-deLiguoro-2013}, we have argued that 
types as $\Rec{\sout{X}\e}$ are necessary to type a series of
well-formed processes. 
Because of this, we have replaced the standard
type duality function, $\stdual{\sigma}$, with our novel 
\emph{peer-duality}
function, $\prdual{\sigma}$, and we have
defined the LTS of contracts {\em coinductively}.
As a result, the type $\Rec[X]{\sout{X}\e}$ in our theory is given the
following observable behaviour, where $\rho = \M{\Rec{\sout{X}\e
}}$:
$$
\begin{tikzpicture}
  \node (rho) at (0,0) {$\rho$};
  \node (unf) at (2,0) {$!(\rho).\rho$};

  \path[->]
  (rho) edge [bend left] node [above] {$\tau$} (unf)
  (unf) edge [bend left] node [below] {$!(\rho)$} (rho);
\end{tikzpicture}
$$

\paragraph{\bf Semantic subtyping:}
We view our main result as a behavioural or \emph{semantic}
interpretation of Gay \& Hole  subtyping.  There  is an alternative
well-developed approach to semantic theories of types and subtyping
\cite{semsubtypes} in which the denotation of a type is given by the
set of values which inhabit it, and subtyping is simply subset
inclusion. This apparent simplicity is tempered by the fact that for 
non-trivial languages, such as the pi-calculus 
\cite{DBLP:journals/tcs/CastagnaNV08}, there is a circularity in the
constructions due to the fact that determining which terms are values 
depends in turn on the set of types. This circularity is broken using 
a technique called \emph{bootstrapping} or \emph{stratification},
essentially an inductive approach. 
The research using this approach which is closest to our results on
Gay \& Hole subtyping may be found in
\cite{DBLP:conf/ppdp/CastagnaDGP09}; this contains a treatment of 
a very general language of session types, an extension of Gay \& Hole
types. But there are essential differences. The most important is that
their model does not yield  a semantic theory of Gay \& Hole
subtyping.  Their subtyping relation,  $\leq$, is defined  via an LTS 
generated by considering the transmission of values rather than
session types; effectively subtyping is not allowed on messages. 
The resulting subtyping  is very different than our focus of concern, 
the Gay \& Hole subtyping
relation  $\subt$. 
For example the preorder $\leq$ has bottom elements, in contrast to 
$\subt$, and $\sinp{\int}\e \subt \sinp{\real}\e$ whereas 
$ \sinp{\int}\e \not \leq \sinp{\real}\e $.
The particular use of \emph{stratification} (Theorem~2.6) is also
complex, and rules out the use of session types such as
$\Rec[X]{\sout{X}\e}$.
Finally they use as types infinite regular trees whereas we prefer to
work directly with recursive terms, 
as proposed in \cite{DBLP:journals/acta/GayH05}; for example this
allows us to discuss the inadequacies of the type-checking rules of
\cite{DBLP:journals/entcs/YoshidaV07}. 

Nevertheless the extended language of sessions types of
\cite{DBLP:conf/ppdp/CastagnaDGP09} is of considerable significance.
It would be interesting to see if it can be interpreted
behaviourally using our co-inductive approach, particularly endowed
with a larger subtyping preorder more akin to the standard 
Gay \& Hole relation \cite{DBLP:journals/acta/GayH05}.

{\em Further behavioural models:}
Recently, a behavioural model for multi-party first-order
session types appeared \cite{DBLP:conf/icalp/DenielouY13}, which is
based on communicating automata rather than contracts for
web-services. The focus of \cite{DBLP:conf/icalp/DenielouY13} is not
to model the subtyping.

\subsection{Future work}
In \cite{DBLP:conf/sac/BernardiH12}, building on
\cite{DBLP:conf/ppdp/Barbanerad10}, we have already developed a result
similar to the \fabs of \rsec{results}, but for for
\emph{first-order} session contracts, and using a combination of 
\emph{server} and \emph{client} subcontract relations.
We leave as future work showing how this approach  
can be recovered from our \emph{peer} subcontract relation.

Even though standard models of recursive types are based on regular
trees \cite{DBLP:journals/mscs/PierceS96,DBLP:journals/fuin/BrandtH98}, tree models for  recursive higher-order session types
are still lacking. We plan to develop such a model, and to show the
connection with the notions we used in this paper.
Establishing such a connection will help us
motivating the complement function, and showing its connection with other
notions of duality, for instance the co-inductive one of
\cite[Def. 9]{DBLP:journals/acta/GayH05}.

Recently, type disciplines based on
session types have been proposed, which guarantee that well-typed
programs are free from deadlocks
\cite{DBLP:conf/tgc/Dezani-CiancaglinidY07,DBLP:conf/icfp/Wadler12,DBLP:conf/coordination/VieiraV13}.
However, all those papers deal but with finite (i.e. non recursive)
types. We plan to investigate whether the semantic techniques we used
in this work lead to type systems for recursive session types
that guarantee deadlock freedom.

\paragraph{\em Acknowledgements}
The authors would like to thank Vasco Vasconcelos for 
having provided the \sepi code shown in Figure~(\ref{fig:sepi-code}),
for his remarks on the \sepi type checker, and for having
brought the type $\Rec[ x ]{  \Rec[ y ]{ ?(y). x } }$ of \rexa{comp-broken}
to our attention.

\appendix
\section{Standard Definitions}
\label{app:defs}

The following definitions apply equally well to the language of
session types and the language of session contracts. Periodically we
change from one to the other in the exposition.  

In the language $\Lsc$ we have the standard notion of free and bound
occurrences of the recursion variables $X,\, Y,\ldots$ which lead to
the standard notion of closed terms. 

\subsection{Substitutions}

A substitution is a finite partial map from variables to closed terms. 
Here we use as an example language that of session contracts,
thus for the language $\Lsc$ a substitution takes the form
\begin{align*}
  \subsS: \vars  \rightharpoonup  \sc 
\end{align*}
where $\dom{\subsS}$ is a finite subset of $\vars$. 
Substitutions are composed by letting 
$\subsS_1 \cdot \subsS_2$ have as domain $\dom{\subsS_1} \cup \dom{\subsS_2}$,
with 
\begin{align*}
  &\subsS_1 \cdot \subsS_2 (x) = 
\begin{cases}
   \subsS_1(x), &\text{if  $x \in \dom{\subsS_1}$}\\
   \subsS_2(x), &\text{ if  $x \in \dom{\subsS_2},\, x \not\in \dom{\subsS_1}$}
\end{cases}
\end{align*}
This, together with the empty substitution, denoted $\varepsilon$, endows 
the collection of substitutions with the structure of a monoid. We use 
two further operations on substitutions: $\SnotX$ has as domain 
$\dom{\subsS} - \sset{x}$ and acts like $\subsS$ on all variables in its domain, 
while for any operator $g$, $g(\subsS)$ is the substitution with the same domain as 
$\subsS$ and maps each $X$ in this set to $g(\subsS(x))$. 

The action of a substitution on a term $T$, written $T\subsS$, is defined by 
structural induction on $T$. Thus for the language  $\Lsc$
the definition is as follows:

$$
\rho\, \subsS = 
\begin{cases}
  \Unit & \text{if }\rho = \Unit\\
  y & \text{if }\rho = y, y \not\in \dom{\subsS}\\
  \subsS(y) & \text{if } \rho = y, y \in \dom{\subsS}\\
  ?\gt{t}.(\rho'\subsS) & \text{if }\rho = ?\gt{t}.\rho'\\
  !\gt{t}.(\rho'\subsS) & \text{if }\rho = !\gt{t}.\rho'\\
  ?( \rho'' \subsS).(\rho'\subsS) &
  \text{if }\rho = ?( \rho'').\rho'\\ 
  !( \rho'' \subsS).\rho'\subsS &
  \text{if }\rho = !( \rho'').\rho'\\
  \Rec[y]{(\rho' \Snoty)}&\text{if }\rho = \Rec[y]{\rho'}
\end{cases}
$$
Here the important clause is the last one; when applying $\subsS$ to
the recursive term $ \Rec[y]{\rho'}$ it applies the restricted 
substitution $\Snoty$, which leaves the variable $y$ untouched, to the 
body $\rho'$. Note that no notion of $\alpha$-conversion is required. 

It is easy to show the standard compositional property of
substitutions, namely
$(\rho\, \subsS_1)\subsS_2  = \rho\, (\subsS_1 \cdot \subsS_2)$,
by structural induction  on $\rho$.

\subsection{Unfoldings}

\newcommand{\aconv}{\mathrel{=_\alpha}}

\begin{figure}[t]
\hrulefill
$$
\infer
{\Rec{S} \mathrel{\unfoldSym} T}
{ S \subst{\Rec{S}}{X} \mathrel{\unfoldSym} T}
\qquad\qquad
\infer[T \not = \Rec{S}]
{ T \mathrel{\unfoldSym} T }
{}
$$
\caption{Inference rules for the unfolding of terms. \label{fig:unfolding}}
\hrulefill
\end{figure}

The unfolding of a closed session term  from  $L_{\st}$  is defined {\em inductively},
as the {\em least} fixed point of the functional generated by the
rules in \rfig{unfolding}.

It is easy to check that $\unfoldSym$ is a function over closed session
terms, that is if $ S \mathrel{\unfoldSym} T_1$ and $S \mathrel{\unfoldSym} T_2$ then
  $T_1 = T_2$. For this reason we use the standard functional notation 
$\unfold{S}$ to denote the unfolding of $S$. However, this function is
not total, as the following example explains.

\begin{exa}
  Observe that 
  there is no finite inference using the rules from Figure~\ref{fig:unfolding}
  that lets us derive an unfolding 
  of $\Rec{X}$. The reason is that the last rule in the derivation has
  to be 
  $$
  \infer[]
  { \Rec{X} \mathrel{\unfoldSym} T }
  { \infer[]{\Rec{X} \mathrel{\unfoldSym} T}{ \vdots }}
  $$
  which leads to a circularity.
For this reason $\unfold{\Rec{X}}$ is undefined. 
\end{exa}
\noindent
It is easy to check also that the  function is idempotent, $\unfold{\unfold{S}} =\unfold{S}$.
In fact $\unfold{S}$ is reminiscent of a \emph{head-normal form},
in the sense that it must take one of the first five forms in the
grammar for $L_{\st}$ in \rfig{st}.

In order to isolate the terms whose subterms can be unfolded,
we use the notion of guarded. Our definition is a mild generalisation of
the standard one.
Recall that all the constructors but $\Rec[X]{-}$ are non-recursive.
\begin{defi}\name{Guarded}\label{def:guarded}\\
We say a recursion variable $X$ is guarded in $T \in L_{\st}$ if
every occurrence of $X$ in $T$ appears under a non-recursive type
constructor.\leaveout{a prefix.} 
Then we say that a term $S \in L_{\st}$ is \emph{guarded}  if
whenever  $\Rec{T}$ is a subterm of $S$ then $X$ is guarded in $T$.
\end{defi}

\begin{exa}
  Every variable $X$ is a term, and it is not guarded (in itself).
  In the terms $\Rec{\sinp{X}\e}$ and $\Rec{\sinp{\e} X}$ the variable $X$ is guarded, for
  it appears under the constructor $\sinp{-}-$, and so the whole terms are guarded.

  Let $T = \Rec[Y]{Y}$ and $S = \branch{ \gt{l}\as T}$.
  The variable $Y$ is guarded in the term $S$, for it occurs underneath the type constructor $\branch{ - }$, and it is not guarded in the
  the term $T$, because there it occurs directly after the recursive constructor $\Rec[Y]{-}$.
  
  Neither $S$ nor $T$ are guarded. The reason is that $S$ is a subterm of itself, and of $T$,
  and $Y$ is not guarded in $S$.
\end{exa}

\leaveout{
In the next example we explain why we slightly generalised the standard
definition of guarded variable.
\begin{exa}
  According to the standard definition a variable $X$ is guarded in $T
  \in L_{\st}$ if every occurrence of $X$ in $T$ appears under a {\em
    prefix}.
  In the term $T = \Rec{\sout{X}\e}$ the variable $X$ does not appear
  under a prefix, so $X$ is not guarded in $T$, and consequently $T$ is
  not guarded.
  However, the unfolding of $T$ is defined, $\unfold{T} = \sout{T}\e$,
  thus the standard definition is too demanding, in that it rules out
  terms that can be unfolded.

  Instead, according to \rdef{guarded} $T$ is guarded, thus, since it
  is also closed, it is admitted among the session types.
\end{exa}
}
The application of a substitution to a term $T$ in which every variable is guarded
preserves the top-most constructor of $T$. This phenomenon lets us prove the next lemma.

\begin{lem}
\label{lem:guarded-vars-and-unfold}
  For every $T \in L_{\st}$ and substitution $\mathbf{s}$, if 
  every variable in $T$ appears guarded, then
  $\unfold{T \mathbf{s}}$ is defined.
\end{lem}
\begin{proof}
  The proof is by structural induction on $T$. Every variable in $T$
  appears guarded, thus $T$ cannot be a variable. 

  The only case worth discussion is when $T = \Rec[Y]{T'}$, for in
  every other case an application of the axiom in \rfig{unfolding}
  ensures the result. 

So suppose $ T = \Rec[Y]{T'} $. Then $T
  \mathbf{s} = \Rec[Y]{ ( T' \SnotY ) }$. 
  The hypothesis implies  that every variable in $T'$ is guarded,
  and since $T'$ is a subterm of $T$, by structural induction 
  we know that $\unfold{T' \mathbf{s'}} $ is defined for every 
  substitution $\mathbf{s'}$. 

Let us pick the particular substitution
$\mathbf{s'} =  \SnotY \cdot \mathbf{s''} $, where $\mathbf{s''}$ maps 
the variable $Y$ to $T\mathbf{s} $.
We know by induction that we can infer
\begin{align*}
  T' (\SnotY \cdot \mathbf{s''} ) \mathrel{\unfoldSym} R
\end{align*}
for some $R$. But by compositionality we know
$T' (\SnotY \cdot \mathbf{s''} )  = ( T' \SnotY ) \mathbf{s''}$ 
and so one application of the rule on the right in
  \rfig{unfolding} gives the required
\begin{align*}
  T \mathsf{s} \mathrel{\unfoldSym} R
\end{align*}
\end{proof}

\begin{lem}
  If $T$ is a guarded closed term in $L_{\st}$ then $\unfold{T}$ is
  defined. 
\end{lem}
\begin{proof}
For every substitution $\mathbf{s}$, since $T$ is closed,
$T \subsS = T$. \rlem{guarded-vars-and-unfold} ensures that $\unfold{T \mathbf{s}}$ is defined, so $\unfold{ T }$ is defined.
\leaveout{
$T \mathbf{s}$ performs only the renaming of the bound variables of $T$.
A result analogous to \cite[Corollary 3.11]{DBLP:journals/tcs/Stoughton88} 
ensures that $ T \mathbf{s} \aconv T$.
\rlem{guarded-vars-and-unfold} ensures that $\unfold{T \mathbf{s}}$ is defined,
so \rlem{aconv-unfold} implies that $\unfold{T}$ is defined.
}
\leaveout{
  We prove a slightly more general result. Let $\mathbf{s}$ denote a
  general substitution, that is a function from recursion variables to
  closed terms. We show by structural induction that
  $\unfold{T\mathbf{s}}$
  exists for every $\mathbf{s}$, and every $T$, provided every variable appears guarded
  in $T$. The required result then follows since if $T$ is guarded and
  closed then $T \mathbf{s}$ coincides with $T$.

The only nontrivial case in the proof of the more general result is
when $T$ has the form $\Rec{U}$, and we know that $X$ appears guarded
in $U$. Let $\mathbf{s'}$ denote the substitution which coincides with
$\mathbf{s}$ except that $X$ is mapped to $T\mathbf{s}$ and let
$\mathbf{s''}$ be defined similarly, but mapping $X$ to itself. Note that 
$T\mathbf{s}$ coincides with $\Rec{(U\mathbf{s''})}$.

Therefore, because of the first rule in Figure~\ref{fig:unfolding}, to
show
$\unfold{T\mathbf{s}}$ is defined it is sufficient to show that
$\unfold{\,(U\mathbf{s''})\subst{ T\mathbf{s}}{X}\,}$ exists. 
But $(U\mathbf{s''})\subst{ T\mathbf{s}}{X}$ is the same as 
$U\mathbf{s'}$ and by structural induction we know that 
$\unfold{U\mathbf{s'}}$ exists. }
\end{proof}

The main properties used of the unfold function are now collected in the
following lemma:
\begin{lem}\label{lemma:unfold}
  For every session contract $\sigma \in \sc $, 
  \begin{enumerate}[label=\({\alph*}]

  \item  if $\sigma \arstar{\tau} \sigma' $ and $\sigma'$ is stable,
  then $\unfold{\sigma} \arstar{\tau} \sigma'$.   

  \item if $\sigma$ is \emph{m-closed} then so is $\unfold{\sigma}$. 
  \end{enumerate}
\end{lem}
\begin{proof}
  Recall that all terms in $\sc$ are closed and guarded, and therefore
  by the previous lemma, applied to session contracts,
  $\unfold{\sigma}$ is defined. For convenience let it be denoted by
$\rho$. 

The proof of (a) now proceeds by induction on the derivation of
$\sigma  \,\unfoldSym \,  \rho$ from the rules in 
Figure~\ref{fig:unfolding}, and a case analysis on the structure 
of $\sigma$. 
The only non-trivial case is when $\sigma$ has the form 
$\Rec[x]{\sigma_1}$. 

Since $\sigma'$ is stable and there is exactly
one rule from Figure~\ref{fig:opsem} which can apply to 
$\Rec[x]{\sigma_1}$, the sequence of transitions $\sigma \arstar{\tau}
\sigma' $ must actually take the form
\begin{equation*}
  \Rec[x]{\sigma_1} \ar{\tau}  \sigma_1\subst{\sigma}{x} \arstar{\tau} \sigma'
\end{equation*}
By induction 
$\unfold{ \sigma_1\subst{\sigma}{x}} \arstar{\tau} \sigma'$. 
The result now follows since by the rules of 
 Figure~\ref{fig:opsem} we know 
$\unfold{ \rho\subst{\sigma}{x}}$ coincides with $\rho$.

The proof of statement (b) has a similar structure.
\end{proof}
\noindent
Note that in the statement of Lemma~\ref{lemma:unfold}(a) the hypothesis 
that $\sigma'$ be stable is essential. 
As a counterexample consider the case when 
$\sigma$ is $\Rec[x]{\Rec[y]{!\gt{l}.x}}$.
We have $\sigma \arstar{\tau} \Rec[y]{!\gt{l}.\sigma}$ but
$\unfold{\sigma} \Narstar{\tau} \Rec[y]{!\gt{l}.\sigma}$, since
$\unfold{\sigma}$ is $!\gt{l}.\sigma$.

\leaveout{
\subsection{Some technical results}
\label{sec:trs}

The length $\lenSym : \Lsc \longrightarrow \mathbb{N}$ of a term $\rho$ is the amount of prefixes and recursive constructors in the longest series of prefixes that $\rho$ contains; the definition is inductive,
$$
\len{\rho} =
\begin{cases}
  0 &\text{if }\rho \in \sset{ \Unit, \, x }\\
  1 + \len{\rho'}&\text{if }\rho \in \sset{ !\gt{t}.\rho', \, ?\gt{t}.\rho', \, ?(\sigma).\rho', \, !(\sigma).\rho' }\\
  1 + \max \setof{ \len{\sigma_i}}{ i \in I }&\text{if }\rho = \extsum_{i \in I}?\gt{l}_i.\sigma_i\\
  1 + \max \setof{ \len{\sigma_i}}{ i \in I }&\text{if }\rho = \intsum_{i \in I}?\gt{l}_i.\sigma_i\\
1 + \len{\rho'} & \text{if } \rho = \Rec[x]{\rho'}
\end{cases}
$$

The application of the substitution $\inner{\cdot}{\cdot}$ does not
change the length of terms, and so $\comp{\rho}$ is defined for every
$\rho$.
\begin{lem}
\label{lem:comp-defined}
  For every $\rho \in \Lsc$,
\begin{enumerate}[a)]
\item for every $\sigma \in \Lsc$,  $\len{\rho} = \len{\rho \inner{\sigma}{x}}$ 
\item if $ \rho = \Rec[x]{\rho'}$ then $\len{\rho' \inner{ \rho }{ x }} < \len{\rho}$
\item the term $\comp{\rho}$ is defined for every $\rho$
\end{enumerate}
\end{lem}
\begin{proof}
The proof of part a) is a standard induction on the structure of terms.

Part b) is a direct consequence of part a).
Since $ \len{\rho'} = \len{\rho' \inner{\rho}{x}} $ and
by definition $\len{\rho'} < \len{\rho}$, we know that $\len{\rho' \inner{\rho}{x}} < \len{\rho}$.

The proof of part c) is by induction on the length of terms.
The only interesting case is when $ \rho = \Rec[x]{ \rho'}$.
As $ \comp{ \rho } = \Rec[x]{ \comp{ \rho' \inner{ \rho }{ x } }}$,
we have to show that $ \comp{ \rho' \inner{ \rho }{ x } } $ is
defined. This follows from Part b), which ensures that the length of $  \rho' \inner{ \rho }{
  x }$ is shorter than the length of $\rho$, and the inductive
hypothesis.
\end{proof}

\leaveout{

Let us denote with $\mathbf{s}\sset{ {\sigma_1}/{x_1} , \ldots ,
  {\sigma_n}/{x_n} }$ the substitution defined as $\mathbf{s}$, expect
for the variables $x_1, \ldots, x_n$, which are mapped respectively to
$\sigma_1, \ldots, \sigma_n$.

\begin{lem}
\label{lem:inner-and-subst}
  For every $\rho, \sigma, x \in \Lsc$, and substitution
  $\ssubst{\sigma}{x}$ such that $x$ is not free in
  the range\footnote{Find better term.} of $\ssubst{\sigma}{x}$, then 
\begin{enumerate}[label=\roman*)]
\item\label{pt:idempot-subst-1}
  $ \rho \ssubst{ \sigma }{x} = (\rho \ssubst{\sigma}{x})
  \ssubst{\sigma}{x} $
\item\label{pt:idempot-subst-2}
  $ \rho \ssubst{ \sigma }{x} = (\rho \inner{\sigma}{x})
  \ssubst{\sigma}{x} $
\end{enumerate}
\end{lem}
\begin{proof}
  Since $x$ is not free in the range of $\ssubst{\sigma}{x}$,
  $x$ is not free is $\rho \ssubst{\sigma}{x}$ either. 
  It follows that the application of
  $ \ssubst{\sigma}{x}$ to $\rho \ssubst{\sigma}{x}$ is immaterial,
  thus \rpt{idempot-subst-1} is true.

  We prove \rpt{idempot-subst-2} by structural induction on $\rho$.
  \paragraph{Base cases}
  If $\rho = \Unit$ or $\rho$ is a variable,  then
  $\rho \inner{\sigma}{x} = \rho$, and so
  $\rho \ssubst{ \sigma }{x} = (\rho \inner{\sigma}{x}) \ssubst{
    \sigma }{x}$.

  \paragraph{Inductive cases}
  We discuss two cases, for the others are similar.

  Suppose that $\rho = \, ?(\rho^m).\rho'$. Then
  $$ 
  (\rho \inner{\sigma}{x})\ssubst{\sigma}{x} =  \,
  ?((\rho^m \inner{\sigma}{x})
  \ssubst{\sigma}{x}). ((\rho' \inner{\sigma}{x}) \ssubst{\sigma}{x})$$
  The inductive hypothesis implies two syntactic identities, namely
  \begin{itemize}
  \item
    $ 
    \rho^m \ssubst{\sigma}{x} =
    (\rho^m \inner{\sigma}{x}) \ssubst{\sigma}{x}
    $
  \item
    $
    \rho' \ssubst{\sigma}{x} =
    (\rho' \inner{\sigma}{x}) \ssubst{\sigma}{x}
    $
  \end{itemize}
These identities ensures that
$ (\rho \inner{\sigma}{x})\ssubst{\sigma}{x} = \rho
\ssubst{\sigma}{x}$.

Suppose that $\rho = \Rec[y]{\rho'}$.

By definition we know that $ \rho \subst{\sigma}{x} = \Rec[z]{ ( \rho'
  \sSubst{\sigma}{x}{z}{y} ) }$, where $z$ is a fresh variable.
We also know that $$
(\rho \inner{\sigma}{x}) \ssubst{\sigma}{x} =
(\Rec[y]{\rho' \inner{\sigma}{x}} ) \ssubst{\sigma}{x} =
\Rec[z]{ ( (\rho' \inner{\sigma}{x})  \sSubst{\sigma}{x}{z}{y}) }
$$
The inductive hypothesis ensures the equality
$$
\rho' \sSubst{\sigma}{x}{z}{y} = (\rho' \inner{\sigma}{x})  \sSubst{\sigma}{x}{z}{y}
$$
from which the result follows.
\end{proof}

\leaveout{
\begin{lem}
\label{lem:inner-and-subst}
  For every $\rho, \sigma, x \in \Lsc$, if $x$ is not free in
  $\sigma$, then 
\begin{enumerate}[label=\roman*)]
\item\label{pt:idempot-subst-1}
  $ \rho \subst{ \sigma }{x} = (\rho \subst{\sigma}{x})
  \subst{\sigma}{x} $
\item\label{pt:idempot-subst-2}
  $ \rho \subst{ \sigma }{x} = (\rho \inner{\sigma}{x})
  \subst{\sigma}{x} $
\end{enumerate}
\end{lem}
\begin{proof}
  Since $x$ is not free in $\sigma$, $x$ is not free is $\rho
  \subst{\sigma}{x}$ either. It follows that the application of
  $ \subst{\sigma}{x}$ to $\rho \subst{\sigma}{x}$ is immaterial,
  thus \rpt{idempot-subst-1} is true.

  We prove \rpt{idempot-subst-2} by structural induction on $\rho$.
  \paragraph{Base cases}
  If $\rho = \Unit$ or $\rho$ is a variable,  then
  $\rho \inner{\sigma}{x} = \rho$, and so
  $\rho \subst{ \sigma }{x} = (\rho \inner{\sigma}{x}) \subst{
    \sigma }{x}$.

  \paragraph{Inductive cases}
  We discuss two cases, for the others are similar.

  Suppose that $\rho = \, ?(\rho^m).\rho'$. Then
  $$ 
  (\rho \inner{\sigma}{x})\subst{\sigma}{x} =  \,
  ?((\rho^m \inner{\sigma}{x})
  \subst{\sigma}{x}). ((\rho' \inner{\sigma}{x}) \subst{\sigma}{x})$$
  The inductive hypothesis implies two syntactic identities, namely
  \begin{itemize}
  \item
    $ 
    \rho^m \subst{\sigma}{x} =
    (\rho^m \inner{\sigma}{x}) \subst{\sigma}{x}
    $
  \item
    $
    \rho' \subst{\sigma}{x} =
    (\rho' \inner{\sigma}{x}) \subst{\sigma}{x}
    $
  \end{itemize}
These identities ensures that
$ (\rho \inner{\sigma}{x})\subst{\sigma}{x} = \rho
\subst{\sigma}{x}$.

Suppose that $\rho = \Rec[y]{\rho'}$. If $y = x$ then
$ \rho \subst{\sigma}{x} = \rho$, and so the result follows.
If $y \neq x$ then 
$$ 
(\rho \inner{\sigma}{x}) \subst{\sigma}{x} = \Rec[y]{ (\, (\rho'
  \inner{\sigma}{x}) \subst{\sigma}{x} \, )}
$$
The result follows from the inductive hypothesis and the definition of
$\subst{\cdot}{\cdot}$.
If $y \neq x$ then let $z$ be variable that is not free in
$\sigma$.
By definition we know that
$$
(\rho \inner{\sigma}{x}) \subst{\sigma}{x} =
(\Rec[y]{\rho' \inner{\sigma}{x}} ) \subst{\sigma}{x} =
\Rec[z]{ (\rho' \inner{\sigma}{x}  \Subst{\sigma}{x}{z}{y}) }
$$
and that $ \rho \subst{\sigma}{x} = \Rec[z]{ ( \rho'
  \Subst{\sigma}{x}{z}{y} ) }$.
The inductive hypothesis ensures that
$ \rho' \Subst{\sigma}{x}{z}{y} = \rho' \inner{\sigma}{x}
\Subst{\sigma}{x}{z}{y} $.

\leaveout{
If $y \neq x$ then 
$$ 
(\rho \inner{\sigma}{x}) \subst{\sigma}{x} = \Rec[y]{ (\, (\rho'
  \inner{\sigma}{x}) \subst{\sigma}{x} \, )}
$$
The result follows from the inductive hypothesis and the definition of
$\subst{\cdot}{\cdot}$.
}
\end{proof}
}

\begin{lem}
\label{lem:subst-idempot-2}
  Let $\rho, \rho', \sigma, x, y \in \Lsc$ and $\mathbf{s}$ a
  substitution such that $x$ is not free in the range of $\ssubst{\sigma}{x}$.
\begin{enumerate}[label=\roman*)]
\item $( \rho \ssubst{ \rho' \subst{\sigma}{x} }{y}) \ssubst{\sigma}{x} =
  (\rho \ssubst{ \rho' }{y}) \ssubst{\sigma}{x}$

\item
  $( \rho \inner{ \rho' \subst{\sigma}{x} }{y}) \ssubst{\sigma}{x} =
  (\rho \inner{ \rho' }{y}) \ssubst{\sigma}{x}$
\end{enumerate}
\end{lem}
\begin{proof}
  We prove part (i), as the proof of part (ii) is similar.
  The argument is by structural induction on $\rho$.

  \paragraph{Base case}
  If $\rho = \Unit$ or $\mathbf{s}(\rho)$ is not defined,
  then the substitutions have no effect, and so $ (\rho \ssubst{ 
    \rho' \subst{\sigma}{x} }{y})
  \ssubst{\sigma}{x} = \rho = (\rho \ssubst{ \rho' }{y})
  \ssubst{\sigma}{x}$.

  If $\rho$ is a variable and $\mathbf{s}(\rho)$ is defined then
  there are two subcases.

  If $ \rho = y$, then by definition
  $ ( \rho \ssubst{ \rho' \subst{\sigma}{x} }{y}) \ssubst{\sigma}{x} = (\rho' \subst{\sigma}{x}) \ssubst{\sigma}{x}$.
  By hypothesis the variable $x$ is not free in $\sigma$, so 
  $ (\rho' \subst{\sigma}{x}) \ssubst{\sigma}{x} = \rho' \ssubst{\sigma}{x}$. 
  Plainly $ \rho' = \rho \ssubst{\rho'}{y}$, thus $\rho' \ssubst{\sigma}{x} = (\rho \ssubst{\rho'}{y})  \ssubst{\sigma}{x}$,
which in turn implies the result.\leaveout{ we obtain the equality
  $( \rho \subst{ \rho' \ssubst{\sigma}{x} }{y}) \ssubst{\sigma}{x} = (\rho \ssubst{\rho'}{y}) \ssubst{\sigma}{x}$.}

  If $\rho \neq y$, then
  $\rho \ssubst{ \rho' \subst{\sigma}{x} }{y} = \mathbf{s}(z) = \rho \ssubst{ \rho' }{y}$,
  and the result follows from an application of $\ssubst{\sigma}{x}$ to the left-hand and right-hand
  terms.


  \paragraph{Inductive cases}
  We discuss two case.
  Suppose that $ \rho = \, !(\rho^m).\hrho$. 
  The inductive hypothesis ensures the following equalities,
\begin{itemize}
\item $ (\rho^m \ssubst{  \rho' \subst{\sigma}{x} }{y}) \ssubst{\sigma}{x} =
  ( \rho^m \ssubst{ \rho' }{y} ) \ssubst{\sigma}{x}$
\item $ (\hrho \ssubst{  \rho' \subst{\sigma}{x} }{y}) \ssubst{\sigma}{x} =
  ( \hrho \ssubst{ \rho' }{y} ) \ssubst{\sigma}{x}$
\end{itemize}
The definition of application of a substitution to a term implies the
equality
\begin{equation}
\label{eq:idempot-2-eq1}
  (\rho \ssubst{ \rho' \subst{\sigma}{x} }{y}) \ssubst{\sigma}{x}
= !( \rho^m \ssubst{ \rho' \subst{\sigma}{x} }{y} \ssubst{\sigma}{x}).
(\hrho \ssubst{ \rho' \subst{\sigma}{x} }{y} \ssubst{\sigma}{x})
\end{equation}
The inductive hypothesis implies that the right-hand term of
(\ref{eq:idempot-2-eq1}) equals the next term
$$
!( (\rho^m \ssubst{ \rho' }{y}) \ssubst{\sigma}{x}).
((\hrho \ssubst{ \rho' }{y}) \ssubst{\sigma}{x})
$$
The result now follows from the definition of application of
substitutions.
\leaveout{
The definition of substitution application now ensures the following
equalities, and
$$
\begin{array}{ll}
  & (\rho \ssubst{ \rho' \subst{\sigma}{x} }{y}) \ssubst{\sigma}{x}\\
= & !( (\rho^m \ssubst{ \rho' \subst{\sigma}{x} }{y}) \ssubst{\sigma}{x}
).(\hrho \ssubst{ \rho' \subst{\sigma}{x} }{y}) \ssubst{\sigma}{x})\\
= & !( \rho^m \ssubst{ \rho' }{y} \ssubst{\sigma}{x}).
(\hrho \ssubst{ \rho' \subst{\sigma}{x} }{y} \ssubst{\sigma}{x})\\
= & ( !( \rho^m \ssubst{ \rho' }{y}. 
(\hrho \ssubst{ \rho' }{y} ) ) \ssubst{\sigma}{x}\\
= & (( !(\rho^m).\hrho  )\ssubst{ \rho' }{y}) 
\ssubst{\sigma}{x}\\
= & (\rho \ssubst{ \rho' }{y})  \ssubst{\sigma}{x}
\end{array}
$$
}

Suppose that $\rho = \Rec[x]{\rho''}$.
Then by definition
$$
\begin{array}{lll}
(\rho \ssubst{ \rho' \subst{\sigma}{x} }{y}) \ssubst{\sigma}{x} & = &
(\Rec[z]{ \rho'' \sSubst{  \rho' \subst{\sigma}{x} }{y}{ z }{ x } })
\ssubst{\sigma}{x} \\
& = & \Rec[z']{ ((\rho'' \sSubst{  \rho' \subst{\sigma}{x} }{y}{ z }{ x } )
\sSubst{\sigma}{x}{z'}{z} )}\\
& = & \Rec[z']{ (  (\rho'' \Nsubst{1}{  \rho' \subst{\sigma}{x} }{y})
  \Nsubst{2}{\sigma}{x} )}
\end{array}
$$
where $z$ is not free in $\rho' \subst{\sigma}{x}$,
$\mathbf{s}_1 = \ssubst{z}{x}$, and $\mathbf{s}_2 = \ssubst{z'}{z}$ with $z'$ not
free in $\sigma$.

Since $z$ is not free in $\rho' \subst{\sigma}{x}$, it follows that
$z$ is not free in $\rho'$ either.
In view of this, we expand the definition of application of a
substitution as follows,
$$
\begin{array}{lll}
 (\rho \ssubst{ \rho' }{y}) \ssubst{\sigma}{x} & = &
 \Rec[z]{ ( \rho'' \sSubst{ \rho' }{y}{z}{x} ) } \ssubst{\sigma}{x}\\
& = &
\Rec[z']{ ( \rho'' \sSubst{ \rho' }{y}{z}{x} \sSubst{\sigma}{x}{z'}{z}
  ) } \\
& = &
\Rec[z']{ ( \rho'' \Nsubst{1}{ \rho' }{ y } \Nsubst{2}{\sigma}{x}
  ) } \\
\end{array}
$$
The inductive hypothesis implies that
$$
( \rho'' \Nsubst{1}{ \rho' \subst{\sigma}{x} }{y}) \Nsubst{2}{\sigma}{x} =
  (\rho'' \Nsubst{1}{ \rho' }{y}) \Nsubst{2}{\sigma}{x}
$$
from which the result follows.
\end{proof}

\leaveout{
\begin{lem}
\label{lem:subst-idempot-2}
  Let $\rho, \rho', \sigma, x \in \Lsc$ with $x$ not free in $\sigma$.
\begin{enumerate}[label=\roman*)]
\item $( \rho \subst{ \rho' \subst{\sigma}{x} }{y}) \subst{\sigma}{x} =
  (\rho \subst{ \rho' }{y}) \subst{\sigma}{x}$

\item
  $( \rho \inner{ \rho' \subst{\sigma}{x} }{y}) \subst{\sigma}{x} =
  (\rho \inner{ \rho' }{y}) \subst{\sigma}{x}$
\end{enumerate}
\end{lem}
\begin{proof}
  We prove part (i), as the proof of part (ii) is similar.
  The argument is by structural induction on $\rho$.

  \paragraph{Base case}
  If $\rho = \Unit$ or 
  $\rho = z$, so $\rho \not = x$ and $\rho \not = x$, then the substitutions have no effect, and so $ (\rho \subst{ 
    \rho' \subst{\sigma}{x} }{y})
  \subst{\sigma}{x} = \rho = (\rho \subst{ \rho' }{y})
  \subst{\sigma}{x}$.
  
  If $ \rho = y$, then $ ( \rho \subst{ \rho' \subst{\sigma}{x} }{y}) \subst{\sigma}{x} = (\rho' \subst{\sigma}{x}) \subst{\sigma}{x}$.
  By hypothesis the variable $x$ is not free in $\sigma$, so Lemma~(\ref{lem:inner-and-subst}) implies that 
  $ (\rho' \subst{\sigma}{x}) \subst{\sigma}{x} = \rho'
  \subst{\sigma}{x}$. Plainly $ \rho' = \rho \subst{\rho'}{y}$, so $( \rho \subst{ \rho' \subst{\sigma}{x} }{y}) \subst{\sigma}{x} = (\rho \subst{\rho'}{y}) \subst{\sigma}{x}$.

  If $\rho = x$, then $ ( \rho \subst{ \rho' \subst{\sigma}{x} }{y})
  \subst{\sigma}{x} = x \subst{\sigma}{x} = (x \subst{\rho'}{y})
  \subst{\sigma}{x}$.

  \paragraph{Inductive cases}
  We discuss one case.
  Suppose that $ \rho = \, !(\rho^m).\hrho$. 
  The inductive hypothesis ensures the following equalities,
\begin{itemize}
\item $ (\rho^m \subst{  \rho' \subst{\sigma}{x} }{y}) \subst{\sigma}{x} =
  ( \rho^m \subst{ \rho' }{y} ) \subst{\sigma}{x}$
\item $ (\hrho \subst{  \rho' \subst{\sigma}{x} }{y}) \subst{\sigma}{x} =
  ( \hrho \subst{ \rho' }{y} ) \subst{\sigma}{x}$
\end{itemize}
Now we know enough to prove the result,
$$
\begin{array}{ll}
  & (\rho \subst{ \rho' \subst{\sigma}{x} }{y}) \subst{\sigma}{x}\\
= & !( (\rho^m \subst{ \rho' \subst{\sigma}{x} }{y}) \subst{\sigma}{x}
).(\hrho \subst{ \rho' \subst{\sigma}{x} }{y})\subst{\sigma}{x}\\
= & !( \rho^m \subst{ \rho' }{y} \subst{\sigma}{x}).
(\hrho \subst{ \rho' \subst{\sigma}{x} }{y})\subst{\sigma}{x}\\
= & !( \rho^m \subst{ \rho' }{y} \subst{\sigma}{x}). 
(\hrho \subst{ \rho' }{y} ) \subst{\sigma}{x}\\
= & ( !(\rho^m \subst{ \rho' }{y}).\hrho\subst{ \rho' }{y} )
\subst{\sigma}{x}\\
= &  (!(\rho^m).\hrho  \subst{ \rho' }{y}) \subst{\sigma}{x}\\
= & (\rho \subst{ \rho' }{y})  \subst{\sigma}{x}
\end{array}
$$
\end{proof}
}

\begin{lem}
\label{lem:closep-dual-distributes}
  For every $\rho, \sigma, x \in  \Lsc$, if $\sigma$ is closed
  and $x$ is not free in the messages of $\rho$, then
  $
  \comp{\rho \subst{\sigma}{x}} = 
  \comp{\rho} \subst{ \comp{\sigma}}{x}
  $
\end{lem}
\begin{proof}
  The proof is by induction on the length of $\rho$.

  \paragraph{Base case}
  If $\rho = \Unit$ or $\rho = y$, so $\rho \not = x$, then the substitutions have no effect, and the result is true. For instance,
  $$
  \comp{\Unit \subst{\sigma}{x}} = \comp{\Unit}
  = \Unit
  = \Unit \subst{\comp{\sigma}}{x}
  = \stdual{ \Unit } \subst{ \comp{\sigma} }{x}
  $$
  Suppose that $\rho = x$; then 
$$ 
  \comp{ \rho \subst{\sigma}{x}}= \comp{ \sigma  } = x \subst{\comp{\sigma}}{x} = \rho \subst{\comp{\sigma} }{x}
$$

  \paragraph{Inductive case}
  We discuss two cases.
  Suppose that $ \rho = \, !(\rho').\rho''$. We have to prove that
  \begin{equation}
    \label{eq:gclosep-distr-1}
    \comp{ (!(\rho').\rho'') \subst{\sigma}{x} }  =
    \comp{  !(\rho').\rho'' } \subst{\comp{\sigma}}{x}
  \end{equation}
  The definitions of
  $\compSym$ and $\subst{\cdot}{\cdot}$ ensure the identity
  $$
  \comp{\rho\subst{\sigma}{x}} = ?(\rho'
  \subst{\sigma}{x}).\comp{\rho''\subst{\sigma}{x} }
  $$
  By hypothesis $x$ does not appear free in $\rho'$,
  so we know that $\rho' \subst{\sigma}{x} = \rho'$, thereby simplifying the last identity above,
$$
\comp{\rho\subst{\sigma}{x}} =
?(\rho').\comp{\rho''\subst{\sigma}{x}}
$$

  Since $x$ is not free in the messages of $\rho$,
  the same is true for $\rho''$; as the length of $\rho''$ is smaller than then length of $\rho$, we apply the inductive hypothesis, which ensures  
that 
  $$ \comp{\rho''\subst{\sigma}{x}}  = \comp{\rho''}
  \subst{\comp{\sigma}}{x} $$
It follows that
$$
 ?(\rho').
 \comp{\rho''\subst{\sigma}{x}} = 
?(\rho'). \comp{\rho''}
  \subst{\comp{\sigma}}{x}
$$
and so $ \comp{\rho\subst{\sigma}{x}} = \, ?(\rho').( \comp{\rho''}
  \subst{\comp{\sigma}}{x})$.
Since $x$ does not appear free in $\rho'$ we can move the substitution on the right-hand side,
$$
\comp{\rho\subst{\sigma}{x}} =  (?(\rho'). \comp{\rho''} )
  \subst{\comp{\sigma}}{x}
$$
In view of the definitions of $\compSym$ 
$$
\comp{\rho\subst{\sigma}{x}} =  (\comp{!(\rho').\rho''} ) \subst{\comp{\sigma}}{x}
$$

Now we discuss the case involving recursion.
Suppose that $\rho = \Rec[y]{\rho'}$ where clearly $y \not = x$.
We are required to show the equality
$$
  \comp{ \Rec[y]{\rho'} \subst{\sigma}{x}} = \comp{ \Rec[y]{\rho'} } \subst{\comp{\sigma}}{x}
$$
In view of the definitions of  $\subst{\cdot}{\cdot}$ and $\compSym$, we have to show that 
\begin{equation}
  \label{eq:gclosep-distr-rec}
 \Rec[y]{ \comp{ \hsigma }}
=
\Rec[y]{ \comp{ \rho' \inner{ \rho }{y }}} \subst{\comp{\sigma}}{x}
\end{equation}
where $\hsigma =  (\rho' \subst{\sigma}{x}) \inner{\Rec[y]{ (\rho'
    \subst{\sigma}{x})}}{y}$.
Before proceeding with the main argument, we simplify the term $\hsigma$.
By hypothesis $\sigma$ is closed. This implies that $y$ does not appear free in $\sigma$, and that $ x $ does not appear free in $\Rec[y]{ (\rho' \subst{\sigma}{x} )}$. These facts let us flip the substitutions in $\hsigma$,
$$
\hsigma = (\rho' \inner{\Rec[y]{ (\rho'
    \subst{\sigma}{x}) }}{y}) \subst{\sigma}{x}
$$
Observe now that $ \Rec[y]{ (\rho'
    \subst{\sigma}{x}) } = (\Rec[y]{ \rho' })\subst{\sigma}{x} = \rho \subst{\sigma}{x}$,
so $$ \hsigma = (\rho' \inner{ \rho \subst{\sigma}{x}}{y}) \subst{\sigma}{x}$$
As $\sigma$ is closed, $x$ does not appear free in $\sigma$, and this
allows us to apply Lemma~(\ref{lem:subst-idempot-2}),
$$
\hsigma = (\rho' \inner{ \rho }{y}) \subst{\sigma}{x}
$$

In view of (\ref{eq:gclosep-distr-rec}), and of the simplified form of $\hsigma$, the equality we want to prove is
$$
 \Rec[y]{ \comp{ (\rho' \inner{\rho}{y}) \subst{\sigma}{x} }}
=
(\Rec[y]{ \comp{ \rho' \inner{ \rho }{y }} }) \subst{\comp{\sigma}}{x}
$$
\leaveout{
It suffices to prove that
$$
\Rec[y]{ \comp{ \,  (\rho' \inner{ \rho }{y}) \subst{\sigma}{x} \, }}
=
\Rec[y]{ ( \comp{ \rho' \inner{ \rho }{y }} \subst{\comp{\sigma}}{x})}
$$
}
In turn this follows from 
$$
  \comp{ \,  (\rho' \inner{ \rho }{y}) \subst{\sigma}{x} \, }
= 
 \comp{ \rho' \inner{ \rho }{y }} \subst{\comp{\sigma}}{x}
$$
But the last equality is true, because it is the inductive hypothesis
applied to $\rho' \inner{ \rho }{y }$. This is sound because the
variable $x$ does not appear free in the message of $\rho' \inner{
  \rho }{y }$, and the length of $\rho' \inner{ \rho }{y }$ is
shorter than the length of $\rho$ (see Lemma~(\ref{lem:comp-defined})).
\end{proof}
}
}

\renewcommand{\appto}[1]{\langle \, #1 \, \rangle}
\newcommand{\New}[1]{\colorbox{blue!10}{\ensuremath{#1}}}

\section{Type derivations}
\label{app:type-inference}

In this appendix we discuss the derivation trees needed to
prove the typing judgement $\typerel P$ and $\typerel Q$, where $P$ and
$Q$ are the processes respectively of \rexa{intro-funny-types-needed} and
\rexa{replication}.

Recall the type equivalence $\typeEQ$ defined in \rsec{results}
(\ref{eq:typeEQ}). To use the typing rules of
\cite{DBLP:journals/entcs/YoshidaV07} and
\cite{DBLP:journals/iandc/Vasconcelos12} we reason up-to type
equivalence.

\begin{figure}
\hrulefill
$$
\infer[\rname{Cat}^B]
{X \at T_x, \, T_y \typerel \catch{y}{z} X \appto{ z , \kF^-} \rhd 
y \at \sinp{ T_z }\e,
  \,  \kF^- \at T^-_\textsf{\bf f}}
{
\infer[\rname{Var}^C]
{X \at T_x, \, T_y \typerel X \appto{ z , \kF^-} \rhd  y \at \e, \, z\at T_z, \, 
  \,  \kF^- \at T^-_\textsf{\bf f}}
{  y \at \e \quad \mathsf{completed}
}}
$$

$$
\infer[T^+_{\bf f} \mathrel{\mathcal{D}} T^-_{\bf f}; \, \rname{CRes}]
{
\begin{array}{l}
X \at T_x, \, T_y
\typerel
\nn{\kF}( \throw{x}{\kF^-}\void \Par\\
\catch{y}{z} X \appto{ z, \kF^- }) \rhd x \at T_x, \, y \at T_y
\end{array}
}
{
\infer[\rname{Conc}]
{
\begin{array}{ll}
X \at T_x, \, T_y \typerel \throw{x}{\kF^-}\void \Par \\
\catch{y}{z} X \appto{ z, \kF^- } \rhd x \at T_x, \, y \at T_y, \, \kF^-\at
  T^-_\textsf{\bf f}, \,
  \kF^+ \at T^+_\textsf{\bf f}
\end{array} }
{
\infer[\rname{Thr}^A]
{X \at T_x, \, T_y  \typerel \throw{x}{\kF^+}\void \rhd x \at
  \sout{ T^+_\textsf{\bf f} }\e, \, \kF^+\at
  T^+_\textsf{\bf f}}
{\infer[\rname{Inact}]
{X \at T_x, \, T_y \typerel \void \rhd x\at \e}
{  x\at \e \quad \mathsf{completed}}
}
\quad
\infer[]{(**)}{\vdots}
}
}
$$

$$
\infer[T_x \mathrel{\mathcal{D}} T_y; \, \rname{CRes}]{
\typerel \nn{\kO}(\defin{D} X\appto{ \kO^+, \, \kO^- }) }
{
\infer[\rname{Def}]{
 \typerel 
 \defin{D} X\appto{ \kO^+, \, \kO^- } \rhd \kO^+ \at T_x,
\, \kO^- \at T_y}
 {
\infer[]{(*)}{\vdots}
\quad
\infer[\rname{Var}]{
X  \at T_x, \, T_y \typerel X\appto{ \kO^+, \, \kO^- } \rhd \kO^+ \at T_x,
\, \kO^- \at T_y
}{ \emptyset \quad \mathsf{completed}}
}
}
$$

$$
\begin{array}{ll}
A) & T_x \typeEQ \sout{ T^+_\textsf{\bf f} }\e\\
B) & T_y \typeEQ \sinp{T_z}\e \\
C) & T_x \typeEQ T_z, \quad T_y \typeEQ T^-_\textsf{\bf f} \\
\end{array}
$$
\caption{Derivation tree to infer $\typerel P$ in the type system of
  \cite{DBLP:journals/entcs/YoshidaV07}, but with rule
  \rname{CRes} of our \rfig{cres}}
\label{fig:type-inference}
\hrulefill
\end{figure}

We discuss first the derivation of $\typerel P$, which is shown in 
In \rfig{type-inference}. The rules we used are given in
\cite[pag. 89]{DBLP:journals/entcs/YoshidaV07}, and the ones not
defined there are defined in Figure~6 of the same paper. The only difference
between our presentation and \cite{DBLP:journals/entcs/YoshidaV07}
is that in rule \rname{CRes} we use a relation $\mathcal{D}$ instead of the
duality $\stdual{\cdot}$. 
This allows us to compare the typability of
$P$ using $\prdualSym$ and $\stdual{\cdot}$} as duality relations.

The existence of the derivation tree depends on the satisfaction of
the constraints that we gather building the tree. In
\rfig{type-inference} the constraints and the rule applications that
generate them are labelled respectively with A, B, C.

We show how to satisfy the conditions, which are the following ones.
$$
\begin{array}{lcl@{\hskip 3em}lcl@{\hskip 3em}lll}
T_x & \typeEQ & \sout{T^+_\textsf{\bf f}}\e &T_x & \typeEQ & T_z\\[.5em]
T_y & \typeEQ & \sinp{ T_z }\e & T_y & \typeEQ & T^-_\textsf{\bf f} \\[.5em]
T_x & \mathrel{\mathcal{D}} &  T_y  &
T^+_{\bf f} & \mathrel{\mathcal{D}} &  T^-_{\bf f}
\end{array}
$$
We let $T_z$ and $T^-_\textsf{\bf f}$ to be
definitionally equal to, respectively, $T_x$ and $T_y$.
In view of these definitions, the constraints we have to satisfy are
the following ones,
$$
\begin{array}{lcl@{\hskip 3em}lcl}
T_x & \typeEQ & \sout{T^+_\textsf{\bf f}}\e & T_x &
\mathrel{\mathcal{D}} &  T_y \\[.5em]
T_y & \typeEQ & \sinp{ T_x }\e & T^+_{\bf f} & \mathrel{\mathcal{D}} &  T_y\\[.5em]
\end{array}
$$
Given the condition that $ T_x \mathrel{\mathcal{D}}  T_y  $ an
obvious way to satisfy $ T^+_{\bf f} \mathrel{\mathcal{D}}  T_y$ is to
let $  T^+_{\bf f}$ be definitionally $ T_x $. This leaves us with
three conditions
$$
T_x  \typeEQ  \sout{ T_x }\e, \quad
T_y  \typeEQ  \sinp{ T_x }\e, \quad
T_x  \mathrel{\mathcal{D}}   T_y
$$
The two equations are solved by letting $T_x = \Rec{\sout{X}\e}$,
and $ T_y = \sinp{ \Rec{\sout{X}\e} }\e $.
If we take $\mathcal{D}$ to be $\stdual{\cdot}$, then the types
$T_x$ and $T_y$ cannot satisfy $ T_x  \mathrel{\mathcal{D}}   T_y $,
because $ \stdual{T_x} \not \typeEQ T_y$.
If we take $\mathcal{D}$ to be $\prdualSym$, then the
condition on $\mathcal{D}$ is satisfied,
because $ \comp{T_x} \typeEQ T_y$.

\newcommand{\qcompS}{\mathsf{prdualQ}}
\newcommand{\qcomp}[1]{\qcompS(#1)}


Now we discuss the derivation of $\typerel Q$, where
  $$
  Q = \nn{xy}( \, \vOut{x}{x}\void \Par \un \vIn{y}{z}\vOut{z}{z}\void \,)
  $$
is the process we already used in \rexa{replication}. We show the
derivation tree in \rfig{replication}. Let us adapt our notions
to the setting of \cite{DBLP:journals/iandc/Vasconcelos12}. 
We use the the syntax given in Figure~3 of that paper,
$$
\begin{array}{lllll}
      q & ::= & & \textbf{Qualifiers}\\
       && {\sf lin} & \text{linear}\\
      && {\sf un} & \text{unrestricted}
      \\[1em]
      t & :: = & & \textbf{Types}\\
       && \bool & \text{boolean}\\
       && \e & \text{termination}\\
       && (q, T)& \text{qualified pretype}
\end{array}
$$
where  our terms $T, S, \ldots$ are considered {\em pretypes},
and a type $t$ is pair composed by a qualifier and a pretype.
Now we lift the relation $\typeEQ$ (i.e. the equivalence due to $\subt$)
to types by writing 
$(q, T) \typeEQ (q', T')$ whenever $ q = q'$ and $T \typeEQ T'$.
We also lift the duality function $\prdual{ - }$ to types,
and let
\begin{equation}
\label{eq:prdual-for-types-a-la-Vasco}
\qcomp{ q, T } = (q, \prdual{T})
\end{equation}
This definition  is analogous
to the one in \cite[Figure~4]{DBLP:journals/iandc/Vasconcelos12},
but there the standard duality is employed.

The premises of the rules used in the derivation tree of \rfig{replication} 
ensure that
$T_y \typeEQ \un \sinp{ T_z }\e$, and the two equations A) and B) in that figure
have the same solution, so $T_z \subt T_x$.
It turn this implies that $ T_y \typeEQ \un \sinp{ T_x }\e$.
The existence of the derivation tree depends on two conditions, namely
$$
 T_x \typeEQ \un \sout{T_x}\e, \quad T_x \D T_y
$$
We already know that the equation is satisfied by the type 
$\hat{T} = \un \Rec{\sout{X}\e}$, so we have just to find a
$\D$ such that $ \hat{T} \D T_y$. 
If we instantiate $\D$ to the function $\qcompS$ defined (\ref{eq:prdual-for-types-a-la-Vasco}) above, then the condition
$\hat{T} \D T_y$ is satisfied.
If we instantiate $\D$ to the standard duality, then that condition is
not satisfied, so a derivation tree necessary to conclude $\typerel Q$ does not exist.

\begin{figure}[t]
\hrulefill
$$
\infer[\rname{T-Out}^B]{
y\at\e, \, z\at T_z \typerel \vOut{z}{z}\void
}
{
z\at T_z \typerel z \at \un\sout{T_z}\e
\quad
z\at T_z \typerel z \at T_z
\quad
\infer[\rname{T-Inact}]{ y\at\e, \, z\at \e \typerel \void}{}
}
$$
$$
\infer[\rname{T-In}]
{ y\at T_y \typerel \un \vIn{y}{z}\vOut{z}{z}\void}
{ \un (y\at \un\sinp{T_z}\e)
\quad
y\at \un\sinp{T_z}\e \typerel y\at \un\sinp{T_z}\e
\quad
\infer[]{(**)}{\vdots}
}
$$
$$
\infer[\rname{T-Res} \quad T_x \mathrel{\mathcal{D}}  T_y]
{\typerel  \nn{xy}( \vOut{x}{x}\void \Par \un \vIn{y}{z}\vOut{z}{z}\void)}
{
\infer[\rname{T-Par}]
{ x\at T_x, \, y \at T_y \typerel \vOut{x}{x}\void \Par \un \vIn{y}{z}\vOut{z}{z}\void}
{
\infer[ \rname{T-Out}^A]
{ x\at T_x \typerel  \vOut{x}{x}\void }
{ x\at T_x \typerel x\at \un \sout{T_x}\e
\quad
x\at T_x \typerel x\at T_x
\quad
\infer[\rname{T-Inact}]{x\at\e \typerel \void}{} }
\quad
\infer[]{(*)}{\vdots}
}
}
$$
$$
\begin{array}{ll}\\
A) & T_x \typeEQ \sout{ T_x }\e\\
B) & T_z \typeEQ \sout{ T_z }\e\
\end{array}
$$
\caption{Derivation tree to infer $\typerel Q$ in the type system of
  \cite{DBLP:journals/iandc/Vasconcelos12}, but with rule
  \rname{T-Res} of our \rfig{tres}}
\label{fig:replication}
\hrulefill
\end{figure}

\bibliographystyle{alpha}
\bibliography{model}
\end{document}